\newtheorem{lemma}{Lemma}
\newtheorem{theorem}{Theorem}
\newtheorem{definition}{Definition}
\def \chg {\mathcal{G}}
\def \groupW {\mathcal{W}}
\def \nologP {no( \log \bar{P})}
\def \bP {\bar{P}}
\def \bA {\boldsymbol{A}}
\def \lmu {\underline{\mu}}
\def \umu {\overline{\mu}}
\def \bAmu {(\bA)^{\lmu}}
\def \bBmu {(\bB)^{\umu}}
\def \bB {\boldsymbol{B}}
\def \bX {\boldsymbol{X}}
\def \bY {\boldsymbol{Y}}
\def \bT {\boldsymbol{T}}
\def \bU {\boldsymbol{U}}
\def \bV {\boldsymbol{V}}
\def \bbY {\overline{\boldsymbol{Y}} }
\def \Hg {H_\chg}
\def \Ig {I_\chg}
\newcommand{\lrfloor}[1]{\left\lfloor {#1} \right\rfloor}
\newcommand{\lrbr}[1]{\left\{ #1 \right\}}
\newcommand{\lrpar}[1]{\left( #1 \right)}
\newcommand{\lrbkt}[1]{\left[ #1 \right]}
\newcommand{\lrabs}[1]{\left| #1 \right|}
\newcommand{\sP}[1]{\sqrt{P^{#1}}}
\title{Secure GDoF of the $Z$-channel with Finite Precision CSIT: \\How Robust are Structured Codes?}
\author{Yao-Chia Chan and Syed A. Jafar\\
	{\small Center for Pervasive Communications and Computing (CPCC)}\\
	{\small University of California Irvine, Irvine, CA 92697}\\
	{\small \it Email: \{yaochic, syed\}@uci.edu}
}
\date{}
\begin{document}
	\maketitle
	
	\begin{abstract}
		Under the assumption of perfect channel state information at the transmitters (CSIT),  it is known that structured codes offer significant  advantages for secure communication in an interference network, e.g.,  structured jamming signals based on lattice codes  may allow a receiver to  decode the sum of the jamming signal and the signal being jammed, even though they cannot be separately resolved due to secrecy constraints,  subtract the aggregate jammed signal, and then proceed to decode desired codewords at lower power levels. To what extent are such benefits of structured codes fundamentally limited by uncertainty in CSIT? To answer this question, we explore what is perhaps the simplest setting where the question presents itself --- a $Z$ interference channel with secure communication. Using sum-set inequalities based on Aligned Images bounds we prove that the GDoF benefits of structured codes are lost completely under finite precision CSIT. The secure GDoF region of the $Z$ interference channel is obtained as a byproduct of the analysis.
	\end{abstract}
	{\let\thefootnote\relax\footnote{{This work is supported  by NSF grants CCF-1617504 and CNS-1731384,  ARO  grant W911NF-19-1-0344 and ONR grant N00014-18-1-2057. It was presented in part at the 2020 IEEE International Symposium on Information Theory \cite{Chan_Jafar_ZIC}. }}\addtocounter{footnote}{-1}}
	\newpage	
	
	\section{Introduction} \label{sec:intro}
The capacity  of wireless networks, as evident from recent Degrees of Freedom (DoF) \cite{Zheng_Tse_Diversity} and Generalized Degrees of Freedom (GDoF)\cite{Etkin_Tse_Wang} studies, depends rather strongly on the underlying assumptions about the availability of channel state information at the transmitter(s) (CSIT). Zero forcing\cite{Caire_Shamai_ZF, Weingarten_Shamai_GBC}, interference alignment\cite{Jafar_Shamai, Cadambe_Jafar_int, Jafar_FnT, Motahari_Gharan_Khandani_real} --- structured codes\cite{Bresler_Parekh_Tse, Nazer_Gastpar_Compute} in general --- are powerful ideas;  nevertheless their benefits can quickly disappear under even moderate amounts of channel uncertainty. Robustness is  paramount, and it is enforced in GDoF studies by limiting CSIT to finite precision \cite{Lapidoth_Shamai_Wigger_BC, Arash_Jafar}. This leads naturally to a  crucial question:  {\it how robust are structured codes?} Specifically, to what extent does finite precision CSIT fundamentally limit the benefits of structured coding schemes? The question is important from both practical and theoretical perspectives. The emphasis on finite precision CSIT brings  theory closer to practice, which is a worthy goal in itself. In addition, even if we set practical concerns aside, there is another motivation for the emphasis on robustness --- if the benefits of structured codes are indeed lost under finite precision CSIT, then perhaps this removes some of the obstacles that have made progress difficult in network information theory, and thus opens the door to a comprehensive and robust network information theory of wireless networks, based on optimality of random codes that are much better understood.

Under perfect CSIT the challenge in GDoF studies is the crafting of powerful  achievable schemes. Finite precision CSIT shifts the challenge to  \emph{outer bounds}. Indeed,  optimal schemes under finite precision CSIT tend to be  classical random coding schemes that are well understood. What is difficult is to prove that these schemes are \emph{optimal}, e.g., that alignment is not possible, that \emph{nothing} more powerful exists (in the GDoF sense) under finite precision CSIT. Another motivation for the focus on GDoF outer bounds is that unlike  inner bounds that are inherently cumbersome as they depend on numerous design choices, e.g., number of layers of rate-splitting for each user,  the rates and power levels assigned to each layer, and various choices of spatial and temporal beamforming, GDoF outer bounds tend to be \emph{much} more compact, depending only on the channel parameters.

Accounting for  arbitrary structure is essential because, unlike random noise, interference can be arbitrarily structured. It is the structure of the codes that decides how the signals align with each other, how many signal dimensions they occupy together, whether they add constructively or destructively, whether they can be collectively or individually decoded \cite{Etkin_Ordentlich, Jafar_Vishwanath_GDOF,  He_Yener_M21, He_Yener_2usrGaussian, Xie_Ulukus, Xie_Ulukus_K, Mukherjee_Xie_Ulukus, Mukherjee_Ulukus_MIMOWiretap,   Banawan_Ulukus_misbehavior, Geng_Chen_WiretapHelper, Chen_Li_Helper, Li_Chen_CommonRandom, Chen_Jam}. Accounting for structure, even from the coarse GDoF perspective, turns out to be difficult, perhaps because structured codes are inherently combinatorial objects. This is especially the case for \emph{robust} GDoF studies (e.g., with CSIT limited to finite precision), where it is increasingly evident that classical information theoretic tools are lacking. With the exception of  `Aligned Images (AI)' bounds \cite{Arash_Jafar}, there are no alternatives, to our knowledge, that have been found to be capable of bounding the benefits of structure under non-trivial channel uncertainty.  For example, aside from the combinatorial approach of AI bounds, there still is no other argument to \emph{prove} that the $K$ user interference channel has \emph{any} less than a total of $K/2$ DoF under finite precision CSIT. Note that Aligned Images bounds can  prove something \emph{much} stronger --- that it has only a total of $1$ DoF\cite{Arash_Jafar}. In fact even if all the transmitters cooperate fully the resulting $K$ user MISO BC still has only $1$ DoF (thus resolving a conjecture by Lapidoth, Shamai and Wigger\cite{Lapidoth_Shamai_Wigger_BC}). AI bounds have been similarly essential to robust GDoF characterizations of various interference and broadcast settings, such as the symmetric $K$ user IC \cite{Arash_Jafar_IC}, the $2$-user MIMO IC with arbitrary levels of CSIT \cite{Arash_Jafar_mimoSymIC}, the $3$ user MISO BC \cite{Arash_Jafar_SLS}, and the $2$-user MIMO BC with arbitrary levels of CSIT,\cite{Arash_Bofeng_Jafar_BC, Arash_Jafar_MIMOBC}. Robust GDoF characterizations have also been found using AI bounds for various intermediate levels of transmitter cooperation  in \cite{Arash_Jafar_cooperation, Chan_Wang_Jafar_Extremal,  Wang_Jafar_LimitedCoop}. 

% \cite{Arash_Jafar_Coherence,Chan_Geng_Jafar_secureBC,}
 
Aligned Images bounds are so called because they are based on counting the expected number of codewords that can cast  ‘aligned images’ at one receiver while casting resolvable images at another. Because of their essentially combinatorial character,  derivations of AI bounds can be somewhat tedious. Yet, the lack of alternatives thus far makes these bounds indispensable to the goal of developing a robust understanding of the capacity limits of wireless networks. In order to make further progress in this  direction, it is important to explore and expand the scope of AI bounds. Notably, the class of AI bounds was recently  expanded significantly into a broad class of sum-set inequalities in \cite{Arash_Jafar_sumset}. Exploring applications of these increasingly sophisticated sumset inequalities is another motivation for our work in this paper.

With the aid of sumset inequalities we wish to explore the robustness of structured codes for secure communication \cite{Karpuk_Chorti_PLN,Xie_Ulukus, Xie_Ulukus_K, Mukherjee_Xie_Ulukus, Banawan_Ulukus_TimeVarying, He_Yener_M21, Papadimitratos_Scalable, Skoglund_mixedCSIT, Geng_Tandon_Jafar_DetSecureIC, Chen_Li_Helper, Chen_Jam}. In particular, one powerful idea that is made possible by structured codes is the aggregate decoding and cancellation\footnote{ `Aggregate decoding and cancellation' is used loosely here to refer to any means by which the interference  from  jammed signals at higher power levels to the desired signals at lower power levels can be mitigated. The focus is  on mitigating the residual interference to lower power levels, and not on the aggregate decoding of higher levels \emph{per se}.} of jammed signals \cite{He_Yener_M21, Papadimitratos_Scalable, Skoglund_mixedCSIT, Geng_Tandon_Jafar_DetSecureIC, Chen_Li_Helper, Chen_Jam}.  Lattice coded  jamming signals are sometimes used to guarantee the secrecy of a message that is itself encoded with a compatible lattice code. A key advantage of structured codes in such settings is that even though neither the jamming noise nor the message is individually decodable, their sum can still be `decoded' and cancelled. Intuitively, this is because the sum of lattice points is still a valid lattice point. The ability to  decode and cancel jammed signals in aggregate is important because it then allows a receiver to successively decode \cite{Cover72} desired signals at lower power levels. However, this ability may not be  robust to channel uncertainty, which is especially a concern for secure communication applications where robustness is paramount. The question is fundamental and therefore broadly relevant, but in order to minimize distractions we  study what is perhaps the simplest scenario where the question presents itself --- a $Z$ interference channel with secrecy constraints \cite{Yates_Trappe_SecureZ, He_Yener_BoundGIC, Bustin_Poor_SecureZ, Mohapatra_Murthy_Lee_SecureZ, Karmakar_Ghosh_SecureFadingZ}. 

\begin{figure*}[t]
	\begin{center}
		\scalebox{0.75}{%
			\begin{tikzpicture}
			\foreach \m in {1,2}
			{
				\coordinate (Mu\m) at (1,2-3*\m);
				%\draw [black, thick, fill=white] (Mu\m)+(-0.1,0) circle (5pt);
				\coordinate (Md\m) at (1,1.3-3*\m);
				%\draw [black, thick, fill=white] (Md\m)+(-0.1,0) circle (5pt);
				\coordinate (Nu\m) at (5,2-3*\m){};
				%\draw [black, thick, fill=white] (Nu\m)+(0.1,0) circle (5pt);
				\coordinate (Nd\m) at (5,1.3-3*\m){};
				
			};
			
			\coordinate (Nu10) at (5,-0.3){};
			
			\coordinate (Mu20) at (1,-3.3){};

			\coordinate (Nu20) at (5,-3.3){};

			\draw [rounded corners, thick] (0,-0.6) rectangle (1.5,-2);
			\node at (0.75, -0.6) [above] { $\overline{X}_1$};
			
			\draw [rounded corners, thick] (0,-3) rectangle (1.5,-5);
			\node at (0.75, -5) [below] { $\overline{X}_2$};
			
			\draw [rounded corners, thick] (4.5,0) rectangle (6.9,-2);
			\node at (5.3, 0) [above] { $\overline{Y}_1$};
			
			\draw [rounded corners, thick] (4.5,-3) rectangle (6,-4.3);
			\node at (5.3, -4.3) [below] { $\overline{Y}_2$};

			\node at (Mu1) [left=0.2cm] { $A_1$};
			\node at (Md1) [left=0.2cm] { $A_2$};
			\node at (Mu20) [left=0.2cm] { $B_1$};
			\node at (Mu2) [left=0.2cm] { $B_2$};
			\node at (Md2) [left=0.2cm] { $B_3$};
			\node at (Nu10) [right=0.2cm] { $B_1$};
			\node at (Nu1) [right=0.2cm] { $A_1+B_2$};
			\node at (Nd1) [right=0.2cm] { $A_2+B_3$};
			\node at (Nu20) [right=0.2cm] { $B_1$};
			\node at (Nu2) [right=0.2cm] { $B_2$};
			
			\draw [very thick] (Mu20)--(Nu20);
			\draw [very thick] (Mu1)--(Nu1);
			\draw [very thick] (Md1)--(Nd1);
			\draw [very thick] (Mu2)--(Nu2);
			\draw [very thick, red] (Mu20)--(Nu10);
			\draw [very thick, red] (Mu2)--(Nu1);
			\draw [very thick, red] (Md2)--(Nd1);
			%\draw [thick] (Mu3)--(Nu3);
			%\draw [thick, red] (Mu3)--(Nu1);
			%\draw [thick, red] (Md3)--(Nd1);
			
			\draw [black, thick, fill=blue!20!white] (Mu2)+(-0.1,0) circle (5pt);
			
			\draw [black, thick, fill=red!20!white] (Md1)+(-0.1,0) circle (5pt);
			\draw [black, thick, fill=blue!20!white] (Nu1)+(0.1,0) circle (5pt);
			\draw [black, thick, pattern=crosshatch] (Nu1)+(0.1,0) circle (5pt);
			\draw [black, thick, fill=red!20!white] (Nd1)+(0.1,0) circle (5pt);
			\draw [black, thick, fill=blue!20!white] (Nu2)+(0.1,0) circle (5pt);
			\draw [black,  thick, pattern=crosshatch] (Mu1)+(-0.1,0) circle (5pt);
			\draw [black, thick, fill=white] (Md2)+(-0.1,0) circle (5pt);
			
			\draw [black, thick, fill=white] (Nu10)+(0.1,0) circle (5pt);
			
			\draw [black, thick, fill=white] (Mu20)+(-0.1,0) circle (5pt);

			\draw [black, thick, fill=white] (Nu20)+(0.1,0) circle (5pt);
			
			\end{tikzpicture}
			
			\hspace{0.3in}

			\begin{tikzpicture}[scale=1.2]
			\foreach \m in {1,2}
			{
				\coordinate (M\m) at (1,1.5-2*\m);
				\coordinate (N\m) at (5,1.5-2*\m){};
			};
			
			\draw [very thick] (M1)--(N1) node [pos=0.3, above = 0.1cm] {$\alpha_{11}=1$};
			\draw[very thick] (M2)--(N2) node [pos=0.7, below = 0 cm, text  = black] { $\alpha_{22}=1$};
			
			\draw[very thick, red] (M2)--(N1) node [pos=0.7, below = 0.2cm, text=black] {$\alpha_{12}=3/2$};

			\node[thick, circle, draw=black, fill=white, inner sep = 1.5, left] at  (M1) {$\overline{X}_1$};
			\node[thick, circle, draw=black, fill=white, inner sep = 1.5, right] at (N1){$\overline{Y}_1$};
			
			\node[thick, circle, draw=black, fill=white, inner sep = 1.5, left] at  (M2) {$\overline{X}_2$};
			\node[thick, circle, draw=black, fill=white, inner sep = 1.5, right] at (N2){$\overline{Y}_2$};

			\draw  (-0.75,-0.5) rectangle (-0.25, 0.5);
			\draw  [pattern=crosshatch](-0.75,0) rectangle (-0.25, 0.5) node[label={[xshift=-0.9cm, yshift=-0.7cm]$A_1$}] {};
			\path [thick] (-1,0.5)--(0,0.5) node[above, midway]{$\begin{matrix}\overline{X}_1\in\mathcal{X}_1\\
				\mathcal{X}_1=\{0,1,2,\cdots,\lfloor\sqrt{P^1}\rfloor-1\}\end{matrix}$};
			\draw  [fill=red!20!white](-0.75,-0.5) rectangle (-0.25, 0) node[label={[xshift=-0.6cm, yshift=-0.5cm]$A_2$}, pos=0.5] {?};
			\draw [thick](-1,-0.5)--(0,-0.5) ;
			
			\begin{scope}[shift={(0.3,0)}]
			\draw  (6,-0.5) rectangle (6.5, 0.5);
			\node at (6.6,1.2) [above]{ $\begin{matrix}\overline{Y}_1=\overline{X}_2 \boxplus \overline{X}_2\\ =\lfloor G_{11}\overline{X}_1\rfloor+\lfloor G_{12}\overline{X}_2\rfloor\end{matrix}$};
			%\node at (6.6,1.2) [above]{ $\begin{matrix}\overline{Y}_1=L^g(\overline{X}_1,\overline{X}_2)\\ =\lfloor G_{11}\bar{X}_1\rfloor+\lfloor G_{12}\bar{X}_2\rfloor\end{matrix}$};
			\draw  [pattern=crosshatch](6,0) rectangle (6.5, 0.5) node[label={[xshift=-0.87cm, yshift=-0.7cm]$A_1$}]  {};;
			\draw  [fill=red!20!white](6,-0.5) rectangle (6.5, 0) node[label={[xshift=-0.63cm, yshift=-0.57cm]$A_2$}, pos=0.5]{$?$};
			\draw  [fill=white](6.5,0.5) rectangle (7, 1) node[label={[xshift=0.3cm, yshift=-0.7cm]$B_1$}]  {};
			\draw  [fill=blue!20!white](6.5,0) rectangle (7, 0.5) node[label={[xshift=0.3cm, yshift=-0.7cm]$B_2$}]  {};
			\draw  [fill=white](6.5,-0.5) rectangle (7, 0) node[label={[xshift=0.3cm, yshift=-0.7cm]$B_3$}]  {};;
			\draw [thick](5.5,-0.5)--(7.5,-0.5) ;
			\end{scope}
			
			%\draw [thick](-1,-1)--(0,-1);
			\draw  (-0.75,-2.5) rectangle (-0.25, -1) node[label={[xshift=-0.9cm, yshift=-0.7cm]$B_1$}]  {};
			\draw  [fill=blue!20!white](-0.75,-2) rectangle (-0.25, -1.5) node[label={[xshift=-0.9cm, yshift=-0.7cm]$B_2$}]  {};
			\draw  [fill=white](-0.75,-2.5) rectangle (-0.25, -2) node[label={[xshift=-0.9cm, yshift=-0.7cm]$B_3$}]  {};
			\draw [thick](-1,-2.5)--(0,-2.5)node[below, midway]{$\begin{matrix}\overline{X}_2\in\mathcal{X}_{3/2}\\
				\mathcal{X}_{3/2}=\{0,1,2,\cdots,\lfloor\sqrt{P^{3/2}}\rfloor-1\}\end{matrix}$};
			\node at (6.9,-2.5) [below]{ $\overline{Y}_2=(\overline{X}_2)_{1}^3$};
			\draw  [fill=white](6.5,-2) rectangle (7, -1.5) node[label={[xshift=0.3cm, yshift=-0.7cm]$B_1$}]  {};
			\draw  [fill=blue!20!white](6.5,-2.5) rectangle (7, -2) node[label={[xshift=0.3cm, yshift=-0.7cm]$B_2$}]  {};
			\draw [thick](5.75,-2.5)--(7.75,-2.5);
			
			\end{tikzpicture}
		}
		\caption{\it \small A   toy example. On the left is the ADT deterministic model CSIT which shows that under perfect CSIT the Secure GDoF tuple $(1/2,1/2)$ is achievable (needs lattice alignment between structured codes $B_2$ and $A_1$). On the right is the corresponding channel model under finite precision CSIT, for which we prove in this work that the GDoF tuple $(\delta,1/2)$ is not achievable for any $\delta>0$. This can be seen from Theorem \ref{thm:gdof} by substituting $\beta=3/2, d_2=3/2$ in Case $2$, which yields $d_1\leq 0$.
			Some of the notations are defined in Section \ref{sec:converse}.			%In its generalized form, the proof formalizes the intuition that under finite precision CSIT, lower layers cannot be decoded without decoding higher layers, and higher layers cannot be decoded jointly if they cannot be decoded separately.
		}\label{fig:ADTAIS}
	\end{center}
\end{figure*}
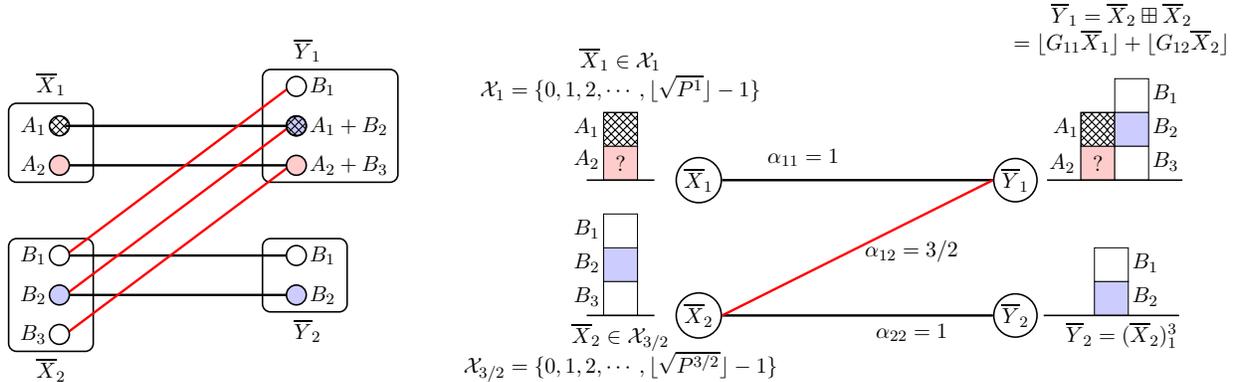

As a motivating example, consider the toy setting of a $Z$ channel illustrated in Figure \ref{fig:ADTAIS} where the two transmitters wish to send independent secret messages to their respective receivers, and only Receiver $1$ experiences interference. The desired links of each user by themselves are capable of carrying $1$ GDoF, while the cross-link has $3/2$ GDoF. Intuitively, if we think of $C_{ij}$  as representing the capacity of the point to point Gaussian channel between Transmitter $j$ and Receiver $i$, then we have $C_{11}:C_{12}:C_{22} = 2:3:2$ for this toy example. Note that the ratios of link capacities correspond to the $\alpha_{ij}$ values in the GDoF model, and that only the relative values of $\alpha_{ij}$ matter\footnote{It follows from the definition of GDoF that if all $\alpha_{ij}$ values are scaled by the same constant then the GDoF value is scaled by that  constant as well.} for the GDoF metric. Throughout this paper we will normalize $\alpha_{22}$ to unity. In the figure\footnote{Intuitively, $\overline{X}_1$, $\overline{X}_2$ are non-negative integers that can be (approximately) expressed in $\lfloor \sqrt{P^{1/2}}\rfloor$-ary symbols as $\overline{X}_1=A_1A_2$ and $\overline{X}_2=B_1B_2B_3$.} we see both the ADT deterministic model \cite{Avestimehr_Diggavi_Tse} (on the left), which implies \emph{perfect} CSIT, as well as the more general deterministic\footnote{The model is not fully \emph{deterministic} in a strict sense, because the channel coefficients are not perfectly known to the transmitters. The nomenclature comes from the fact that the Gaussian noise is removed in this model.} model (on the right)  that allows us to study finite precision CSIT.  Similar to the normalization, $\alpha_{22}=1$, all channel capacities are normalized by the capacity of the channel between Transmitter $2$ and Receiver $2$ in the ADT model. The ADT model shows, intuitively, how it is possible with perfect CSIT to achieve the GDoF tuple $(1/2,1/2)$. Since communication must be secure and the top signal level $B_1$ is fully exposed to the undesired receiver, while the bottom signal level $B_3$ cannot be heard by the desired receiver (below the noise floor) this leaves Transmitter $2$ only $B_2$ to achieve its $1/2$ GDoF. Transmitter $1$ sends a jamming signal $A_1$ to secure $B_2$ from Receiver $1$. The most important aspect of this toy example is the alignment that takes place between $A_1$ and $B_2$, both of which are structured (lattice) codes, so that the sum $A_1+B_2$ also has a lattice structure. This allows Receiver $1$ to `decode' the sum $A_1+B_2$ (without being able to decode $A_1$ or $B_2$ separately, which would violate secrecy), subtract it from the received signal and then decode its desired signal $A_2$ in order to simultaneously achieve $1/2$ GDoF. Now consider the same problem under finite precision CSIT, which poses obstacles for lattice alignment. If lattice alignment is restricted then so is the ability of Receiver $1$ to `decode' the linear combination of signals $A_1$ and $B_2$, which in turn limits the potential for decoding the desired signal $A_2$ that appears at a lower power level. But how strong are these restrictions? Is it still possible to partially mitigate interference from aligned signals at higher power levels to allow decoding of desired signals at lower power levels? Are these restrictions fundamental --- could there be other structured coding schemes,  yet to be discovered, that could overcome such limitations? These are the fundamental questions that motivate this work. What we find, using Aligned Images bounds and sum-set inequalities\cite{Arash_Jafar_sumset}, is that indeed the limitations imposed on structured codes by finite precision CSIT, are both strong and fundamental. In the specific context of this toy example, we {prove}  that the GDoF tuple $(\delta,1/2)$ is not achievable for any $\delta>0$. Thus, the GDoF benefits of lattice alignment, aggregate decoding and cancellation are all lost under finite precision CSIT, underscoring their fragile nature. Moreover, because the bound is information theoretic, no better alternative can exist. Beyond the toy example, the  general proof formalizes the intuition that under finite precision CSIT, lower layers cannot be decoded without decoding higher layers, and higher layers cannot be decoded in aggregate if they cannot be decoded separately. As a byproduct of this analysis, we fully characterize the secure GDoF region of the $Z$ channel under finite precision CSIT.

Since the $Z$-interference channel is a canonical setting that has been extensively studied under a variety of assumptions, let us note that there are three essential distinguishing aspects of our work: 1) robustness, 2) information theoretic optimality in the GDoF sense, and 3) security. It is the combination of these $3$ aspects that makes our setting uniquely challenging and allows us to explore the limitations of aggregate decoding for structured jamming under channel uncertainty. In fact it is arguably the simplest problem that allows us to do so. For example, if we relax any of 	these three constraints then there would be no need for AI bounds. If we relax the robustness constraint by allowing perfect CSIT, then the problem has been studied in \cite{ Yates_Trappe_SecureZ, He_Yener_BoundGIC}, and since channel uncertainty is not a concern, ADT models can be used to construct powerful lattice alignment solutions as shown in Figure \ref{fig:ADTAIS}. If we do not insist on information theoretic optimality then achievable schemes are easily developed, say from \cite{Liu_Maric_Spasojevic_Yates}. If we stop short of GDoF, e.g., only ask for DoF (degrees of freedom) by restricting $\alpha=\beta=1$,  then the problem  becomes trivial because the DoF region is the simplex bounded by $d_1+d_2\leq 1$ even with perfect CSIT, which is also achievable with finite precision CSIT.   If we relax the security constraint, then there is no need for structured codes (e.g., lattice alignment) and the capacity has been characterized within a gap of a constant number of bits in \cite{Zhu_Guo_Z}. Furthermore,  the $2$ user $Z$ interference channel with secrecy constraint is especially appealing because it  has very few channel parameters, which allows us to seek a comprehensive GDoF characterization for the entire parameter space without any assumptions of symmetry, and at the same time the secrecy constraint ensures that the problem is non-trivial and allows room to explore sophisticated applications of the new sumset inequalities \cite{Arash_Jafar_sumset}. Remarkably, despite its simplicity, the $2$-user $Z$-channel is not far from exhausting the scope of known sum-set inequalities. It is noted recently in \cite{Chan_Jafar_3to1} that even if we  introduce just one more user, which changes the $2$-user $Z$ channel into a $3$-to-$1$ interference channel (only Receiver $1$ experiences interference), then the problem of characterizing the secure GDoF region under robust CSIT assumptions may be beyond the reach of known sum-set inequalities.  Finally, let us note that the $Z$-interference channel  has also been explored under  other assumptions that are not so closely related to this work, e.g.,   deterministic encoders \cite{Bustin_Poor_SecureZ}, cooperation between transmitters \cite{Mohapatra_Murthy_Lee_SecureZ}, cooperation between receivers \cite{Fayed_Lai_CoopRx}, binary alphabet  \cite{Karmakar_Ghosh_SecureFadingZ}, and lack of coordination/trust between transmitters \cite{Xie_Ulukus_Game}.

The rest of this paper is organized as follows. 
The system model is presented in the next section. 
The main result, i.e., the secure GDoF region is presented in Section \ref{sec:result}. 
The achievability proof of the main result appears in Section \ref{sec:proof}, and the conserve proof follows in Section \ref{sec:converse} along with a brief review of AI bounds. 
In Section \ref{sec:conclusion} we present the conclusion.

\emph{Notation: } For a positive integer $n$, denote $[n] = \{1,2,\cdots, n\}$.
The set  $\{X(t): t \in [n]\}$ is denoted as $\boldsymbol{X}$. 
For two functions $f(x)$ and $g(x)$, denote $f(x) = o(g(x))$ if $\limsup_{x \rightarrow \infty} f(x)/g(x) = 0$, and $f(x) = O(g(x))$ if $\limsup_{x \rightarrow \infty} f(x)/g(x) = c$ for some constant $c > 0$. 
For random variables $X, Y$ and $Z$, and a set $\chg$, define $\Hg(X|Y) = H(X|Y, \chg)$, and $\Ig(X;Y|Z) = I(X;Y|Z,\chg)$.
All logarithms are to the base 2.
	
\section{System Model}\label{sec:model}
\subsection{The Gaussian $Z$ Interference Channel (ZIC)}
\begin{figure}[t]
\centering
\begin{tikzpicture}[scale=3]
					\def \r {0.07}
					\def \w {1}
					\def \h {0.5}
					
					% Tx node
					\draw [fill=white] (0,{-\h*(1-1)}) circle (\r)  node { 1 } node [left=0.1 ] {\footnotesize $W_1 \rightarrow \boldsymbol{X}_1 \rightarrow$};
					\draw [fill=white] (0,{-\h*(2-1)}) circle (\r)  node { 2 } node [left=0.1] {\footnotesize$W_2 \rightarrow \boldsymbol{X}_2 \rightarrow$};
					
					% Rx node
					\draw [fill=white] (\w,{-\h*(1-1)}) circle (\r)  node { 1} node [right=0.1] {\footnotesize$\rightarrow \boldsymbol{Y}_1 \rightarrow W_1$ };
					\draw [fill=white] (\w,{-\h*(2-1)}) circle (\r)  node { 2} node [right=0.1] {\footnotesize $\rightarrow \boldsymbol{Y}_2\rightarrow W_2, \xcancel{W_1}$ };
					
					% Direct links
					\draw [line width=1] (0+\r,{-\h*(1-1)} ) -- (\w-\r,{-\h*(1-1)}) node [pos = 0.5, above] {\footnotesize $\alpha_{11} = \alpha$};
					\draw [line width=1] (0+\r,{-\h*(2-1)} ) -- (\w-\r,{-\h*(2-1)}) node [pos = 0.5, below] {\footnotesize $\alpha_{22} = 1$};
					
					% Cross links 
					\draw [line width=1] (0+\r,{-\h*(2-1)} ) -- (\w-\r,{-\h*(1-1)}) node [pos = 0.4, right=0.1] { \footnotesize $\alpha_{12}= \beta$};
					
				\end{tikzpicture}
				\caption{\it \small The Gaussian $Z$ Interference Channel (ZIC).}
				\label{fig:ZIC}
			\end{figure}
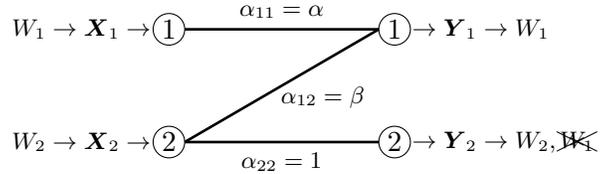
			We consider the two user Gaussian $Z$ Interference Channel depicted in Figure \ref{fig:ZIC}, which consists of two transmitters and two receivers, each equipped with a single antenna. As shown in the figure, the network has a $Z$-topology, so both transmitters are heard by Receiver $1$, while only Transmitter $2$ is heard by Receiver $2$. There are two independent messages $W_1$ and $W_2$, that originate at Transmitter $1$ and Transmitter $2$ and are desired by Receiver $1$ and Receiver $2$, respectively. Message $W_i$ is uniformly distributed over the set $\mathcal{W}_i$. The messages are encoded into codewords $\bX_1, \bX_2$, where $\bX_i =\big( X_i(t)\big)_{t\in[n]}\in\mathbb{R}^{n}$ is a  codeword spanning $n$ channel uses that is sent from Transmitter $i$, and satisfies a unit transmit power constraint, $\frac{1}{n}\sum_{t \in [n]} \mathbb{E}[ |X_i(t)|^2 ] \leq 1$, $i = 1,2.$ The messages are encoded separately and there is no common randomness shared between transmitters; i.e., $\bX_i = f_{i,n}(W_i, \theta_i)$, where $f_{i,n}(.)$, $i = 1, 2$ are encoding functions,  $\theta_i$ is private randomness available only to Transmitter $i$, and $I(\theta_1,W_1; \theta_2,W_2) = 0$. 
			
\subsection{The Gaussian $Z$ Broadcast Channel (ZBC)}		
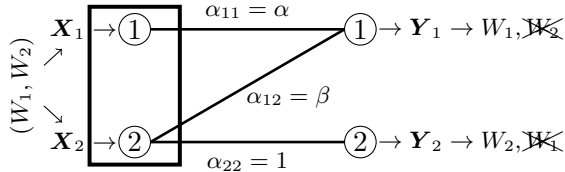
\begin{figure}[t]
				\centering
					\begin{tikzpicture}[scale=3]
						\def \r {0.07}
						\def \w {1}
						\def \h {0.5}
						
						% Tx node
						\draw [fill=white] (0,{-\h*(1-1)}) circle (\r)  node { 1 } node (A) [left = 0.1]{\footnotesize $\boldsymbol{X}_1 \rightarrow$};
						\draw [fill=white] (0,{-\h*(2-1)}) circle (\r)  node { 2 } node (B) [left = 0.1]{ \footnotesize  $\boldsymbol{X}_2 \rightarrow$};
						
						% Tx Cooperation
						\draw [line width = 1.5] (-0.2, 0.1) rectangle  (0.2, {-\h-0.1});
						\node (msg) at (-0.5, {-\h/2}) [rotate = 90] {\footnotesize$(W_1, W_2)$};
						\draw [->] (msg) -- (A);
						\draw [->] (msg) -- (B);
						%node at (-0.5, -0.25) [rotate = 90]{$(W_1, W_2)$} node [left = 0]
						
						% Rx node
						\draw [fill=white] (\w,{-\h*(1-1)}) circle (\r)  node { 1} node [right = 0.1] { \footnotesize  $\rightarrow \boldsymbol{Y}_1 \rightarrow W_1, \xcancel{W_2}$ };
						\draw [fill=white] (\w,{-\h*(2-1)}) circle (\r)  node { 2} node [right = 0.1] { \footnotesize  $\rightarrow \boldsymbol{Y}_2\rightarrow W_2, \xcancel{W_1}$ };
						
						% Direct links
						\draw [line width=1] (0+\r,{-\h*(1-1)} ) -- (\w-\r,{-\h*(1-1)}) node [pos = 0.5, above] {\footnotesize  $\alpha_{11} =\alpha$};
						\draw [line width=1] (0+\r,{-\h*(2-1)} ) -- (\w-\r,{-\h*(2-1)}) node [pos = 0.5, below] { \footnotesize  $\alpha_{22} = 1$};
						
						% Cross links 
						\draw [line width=1] (0+\r,{-\h*(2-1)} ) -- (\w-\r,{-\h*(1-1)}) node [pos = 0.4, right= 0.1] {\footnotesize  $\alpha_{12} = \beta$};
						
\end{tikzpicture}

				\caption{\it \small The Gaussian $Z$ Broadcast Channel (ZBC).}
				\label{fig:ZBC}
			\end{figure}
			
While our focus is primarily on the ZIC, as a useful point of reference let us also define the corresponding Gaussian $Z$ Broadcast Channel (ZBC), shown in Figure \ref{fig:ZBC}, which is identical to the ZIC in every regard except that in the ZBC the transmitters are allowed to cooperate fully to jointly encode the messages; i.e., $(\bX_1, \bX_2) = f_{0,n}(W_1, W_2,\theta_1,\theta_2)$, where $f_{0,n}$ is the encoding function. 

\subsection{The GDoF Framework}			
			Within the GDoF framework, the received signals in the $t$-th channel use are described as 
			\begin{align} 
				Y_1(t) &= G_{11}(t) \sqrt{P^{\alpha_{11}}} X_1(t) + G_{12}(t) \sqrt{P^{\alpha_{12}}} X_2(t) + Z_1(t), \label{eq:model1}\\
				Y_2(t) &= G_{22}(t) \sqrt{P^{\alpha_{22}}} X_2(t) + Z_2(t), \label{eq:model2}
			\end{align}
			where $P$ is a nominal  variable (referred to as \emph{power}) whose asymptotic limit, i.e., $P \rightarrow \infty$, will be used to define the GDoF metric. $Z_i(t), i=1,2,$ are the zero-mean unit-variance additive white Gaussian noise terms. $X_i(t), i=1,2,$ are the signals sent from the two transmitters, each of which is subject to a unit transmit power constraint. All symbols are real-valued. Without loss of generality,\footnote{There is no loss of generality in this assumption because from the definition of GDoF in \eqref{eq:GDoFdef} it is obvious that any normalization of $\alpha_{ij}$ parameters results in simply the same normalization factor appearing in the GDoF value.}
 let us normalize the $\alpha_{ij}$ parameters so that $\alpha_{22} = 1, \alpha_{12} = \beta$ and $\alpha_{11} = \alpha.$

			Let us briefly recall the motivation behind the GDoF framework. The channel strength parameters $\alpha_{ij}$ correspond (approximately) to the  capacity of the corresponding point to point Gaussian channel between Transmitter $j$ and Receiver $i$. Specifically, note that the links under the GDoF framework in (\ref{eq:model1}) and (\ref{eq:model2}) have approximate point-to-point capacities $\alpha_{ij}\Big(\frac{1}{2}\log (P)\Big)$. Here $\frac{1}{2}\log (P)$ may be viewed as a nominal scaling factor that is applied to proportionately scale the capacity of every link. The intuition behind this scaling is that as the capacity of every link is scaled by the same factor, the network capacity should scale  by approximately the same factor as well. Therefore,  normalizing all rates by $\frac{1}{2} \log (P)$ yields an approximation to the capacity of the network. Letting $P$ approach infinity makes the problem amenable to asymptotic analysis, which indeed gives us the definition of GDoF (See equation (\ref{eq:GDoFdef})). It is noteworthy that the deterministic models of \cite{Avestimehr_Diggavi_Tse}, which have been the key to numerous capacity approximations over the last decade, are specializations  of the GDoF framework under perfect CSIT.  For robust GDoF studies, however, we need to limit CSIT to finite precision.

\subsection{Finite Precision CSIT}
Following in the footsteps of \cite{Arash_Jafar}, let us define $\mathcal{G}$ as a set of random variables that satisfy the bounded density assumption of \cite{Arash_Jafar} (replicated as Definition \ref{def:bounded} in Section \ref{sec:def} of this paper). Elements of $\mathcal{G}$ may be viewed as random perturbation factors that are introduced into the model primarily to limit CSIT to finite precision,  thus their realizations are assumed to be known perfectly to the receivers but not to the transmitters. Formally,  
\begin{align}
I(W_1, W_2,\theta_1, \theta_2, \bX_1, \bX_2; \chg) &= 0. \label{eq:fp}
\end{align} 
Specifically, the channel coefficients 	$G_{ij}(t)$  are distinct elements of $\mathcal{G}$ for all $t \in [n], i = 1,2$. 

\subsection{Perfect CSIT}
While our focus in this work is primarily on finite precision CSIT, as a useful point of reference let us also introduce the perfect CSIT assumption, which implies that the channel coefficients $G_{ij}(t)$ are perfectly known not only to both receivers but to both transmitters as well. The constraint \eqref{eq:fp} does not hold under perfect CSIT, and the coding functions may depend on the channel realizations. Thus, $\bX_i = f_{i,n}(W_i, \theta_i, \mathcal{G})$, $i=1,2$ for the ZIC under perfect CSIT, and $(\bX_1, \bX_2) = f_{0,n}(W_1, W_2,\theta_1,\theta_2,\mathcal{G})$ for the ZBC under perfect CSIT. 

\subsection{Achievable Rates under Secrecy Constraint}
A  rate tuple $(R_1, R_2)$ is achievable subject to the secrecy constraint if, for all $\epsilon > 0$, there exist $n$-length codes for some $n>0$ such that (i) the size of each message set $|\mathcal{W}_i| \geq 2^{nR_i}$; (ii) the decoding error probabilities at both users are no larger than $\epsilon$; and (iii) the following secrecy constraint is satisfied
			\begin{align}
			\frac{1}{n}I(W_j; \boldsymbol{Y}_i\mid \chg)&\leq \epsilon&& \forall i,j \in\{ 1,2\}, i \neq j.\label{eq:secrecy}
			\end{align}
			The secure capacity region $\mathcal{C}_P$ is the closure of the set of all achievable secure rate tuples. 
\subsection{Secure GDoF Region}			
			The secure GDoF region $\mathcal{D}$ is defined as 
			\begin{align} \label{eq:GDoFdef}
			\mathcal{D} &\triangleq \left \{(d_1, d_2) \middle | \begin{array}{l}
			\hfill \forall i \in\{ 1,2\}\\
			\exists (R_1(P), R_2(P)) \in \mathcal{C}_P
			\end{array} ,
			d_i = \lim_{P \rightarrow \infty} \frac{R_i(P)}{\frac{1}{2}\log P}
			\right\}.
			\end{align}
We will use subscripts to distinguish ZIC from ZBC, and superscripts to distinguish finite precision CSIT from perfect CSIT, so for example, $\mathcal{D}_{\mbox{\tiny IC}}^{\tiny f.p.}$ symbolizes the GDoF region for the ZIC under finite precision CSIT, and $\mathcal{D}_{\mbox{\tiny BC}}^{\tiny p}$ is the GDoF region for the ZBC under perfect CSIT.

\section{Results} \label{sec:result}
In order to answer our titular question about the robustness of structured codes, we will compare the GDoF region of the ZIC under perfect CSIT with the GDoF region of the ZIC under finite precision CSIT, i.e., $\mathcal{D}_{\mbox{\tiny IC}}^{\tiny p}$ versus $\mathcal{D}_{\mbox{\tiny IC}}^{\tiny f.p.}$. These  are characterized below 	in Lemma \ref{lemma:gdof_perfect} and Theorem \ref{thm:gdof}, respectively.

\subsection{Secure GDoF of the ZIC with Perfect CSIT}
\begin{lemma}\label{lemma:gdof_perfect}
		The secure GDoF region of the ZIC under perfect CSIT is characterized as
		\begin{align}
			\mathcal{D}_{\mbox{\tiny IC}}^{\tiny p} = 
			\left \{ (d_1, d_2) \in \mathbb{R}_+^2 \middle | 
			\begin{array}{l}
				d_1 \leq \alpha \\
				d_2 \leq \min \{ 1, (1 + \alpha - \beta)^+ \} \\
				d_1 + d_2 \leq \alpha + (1 - \beta)^+
			\end{array}
			\right \}.
		\end{align}
\end{lemma}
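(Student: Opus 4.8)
\emph{Overall plan and achievability.} The plan is to close the region by matching an inner bound (achievability) against an outer bound (converse), the latter being the bulk of the work. For achievability, since we are under perfect CSIT the ADT deterministic model of \cite{Avestimehr_Diggavi_Tse} captures the GDoF exactly, and each of its bit‑level operations can be emulated in the Gaussian channel with compatible nested‑lattice codes together with successive and aggregate decoding, exactly in the spirit of Figure~\ref{fig:ADTAIS} and the structured‑code constructions of \cite{Bresler_Parekh_Tse, Nazer_Gastpar_Compute, He_Yener_M21}; it then suffices to realize the extreme points of the polyhedron $\mathcal{D}^{p}_{\mbox{\tiny IC}}$ and time‑share. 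These are of three types. The point $(\alpha,0)$ is trivial: Transmitter~1 uses an interference‑free point‑to‑point code while Transmitter~2 stays silent, and $W_1$ is automatically secret from Receiver~2 by the $Z$‑topology. The extreme point on the $d_2$‑facet, $(0,\min\{1,(1+\alpha-\beta)^{+}\})$, is obtained by having Transmitter~2 carry $W_2$ only in the sub‑band of $\bX_2$ whose image at Receiver~1 lies below power level $\alpha$ (and above Receiver~1's noise floor), while Transmitter~1 spends all its power on a lattice jamming signal aligned to cover exactly that band at Receiver~1 --- the part of $\bX_2$'s image above level $\alpha$ cannot carry secret data, which is the source of the $(1+\alpha-\beta)^{+}$ term, and the $1$ is just the point‑to‑point limit of $\bX_2\to\bY_2$. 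Finally the ``mixed'' extreme point on the sum‑rate facet (e.g.\ $(\alpha-\beta,1)$ when $\beta\le1\le\alpha$, with analogues for the other parameter ranges) is where Transmitter~1 uses the portion of its power above $\bX_2$'s image to carry $W_1$ cleanly and the overlapping portion as structured jamming, and Receiver~1 uses lattice alignment to decode the \emph{aggregate} of its own jamming layer and $\bX_2$'s protected layer, cancel it, and recover the lower layers of $W_1$; this is precisely the aggregate‑decode‑and‑cancel mechanism whose fragility is the subject of the paper.

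\emph{Converse: the easy bounds and the secrecy bound.} Throughout I would condition on $\chg$ and absorb Gaussian‑entropy constants, Fano terms and the secrecy slack into an $o(n\log P)$ remainder. The bound $d_1\le\alpha$ is just the point‑to‑point bound for $\bX_1\to\bY_1$: by Fano $nR_1\le I(W_1;\bY_1\mid\chg)\le I(W_1;\bY_1\mid\bX_2,\chg)\le I(\bX_1;\bY_1\mid\bX_2,\chg)\le\tfrac n2\alpha\log P+o(n\log P)$, using $\bX_1\perp\bX_2\mid\chg$ (each $\bX_i$ is a deterministic function of $W_i,\theta_i,\chg$ and $(W_1,\theta_1)\perp(W_2,\theta_2)$). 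Symmetrically $d_2\le1$ from $\bX_2\to\bY_2$. For the secrecy bound, Fano together with $I(W_2;\bY_1\mid\chg)\le n\epsilon$ gives $nR_2\le I(W_2;\bY_2\mid\chg)-I(W_2;\bY_1\mid\chg)+o(n\log P)\le I(W_2;\bY_2\mid\bY_1,\chg)+o(n\log P)\le h(\bY_2\mid\bY_1,\chg)+o(n\log P)$, where the middle step uses $I(W_2;\bY_1,\bY_2\mid\chg)=I(W_2;\bY_1\mid\chg)+I(W_2;\bY_2\mid\bY_1,\chg)=I(W_2;\bY_2\mid\chg)+I(W_2;\bY_1\mid\bY_2,\chg)$, and the last uses $h(\bY_2\mid\bY_1,W_2,\chg)\ge h(\boldsymbol{Z}_2)=o(n\log P)$. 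It then remains to bound $h(\bY_2\mid\bY_1,\chg)\le\tfrac n2(1+\alpha-\beta)^{+}\log P+o(n\log P)$: replace $\bY_1$ by the information‑equivalent $P^{(1-\beta)/2}\bY_1$ (which brings $\bX_2$'s coefficient to scale $\sqrt P$), subtract a matched multiple of it from $\bY_2$ to cancel $\bX_2$ entirely, and observe that what remains is a $\bX_1$‑term at power $P^{1+\alpha-\beta}$ plus bounded‑power noise, whose differential entropy is at most $\tfrac n2(1+\alpha-\beta)^{+}\log P+o(n\log P)$ by Gaussian maximization. Combined with $d_2\le1$ this yields $d_2\le\min\{1,(1+\alpha-\beta)^{+}\}$.

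\emph{The sum‑rate bound $d_1+d_2\le\alpha+(1-\beta)^{+}$ --- the crux.} The naive approach of adding the $d_1\le\alpha$ bound to the $h(\bY_2\mid\bY_1,\chg)$ bound for $R_2$ yields only $d_1+d_2\le\alpha+(1+\alpha-\beta)^{+}$, which is strictly loose whenever $1\le\beta<1+\alpha$ (and worse still for $\beta<1$), because it double‑counts the $\bX_1$‑signal: the same signal is simultaneously Receiver~1's desired signal and the ``leftover'' interference in the $h(\bY_2\mid\bY_1,\chg)$ estimate, and it cannot do both. The hard part --- and essentially all the technical difficulty of the Lemma --- is to charge Transmitter~1's jamming against its own rate: in the band of $\bY_1$ where $\bX_2$'s protected layer lives, the secrecy constraint forces Transmitter~1 to inject a near‑uniform signal, so that band cannot also carry $W_1$. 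I would formalize this by a genie‑aided argument giving Receiver~1 the private randomness $\theta_1$ (so it can subtract $\bX_1$) together with a carefully quantized version of the residual interference, and then invoking the secrecy constraint and a worst‑case additive‑noise estimate on the remaining observation; alternatively one can lift the perfect‑CSIT secure‑$Z$‑channel converse of \cite{Yates_Trappe_SecureZ, He_Yener_BoundGIC} to the GDoF scale. Once this facet is in place, the polyhedron closes against the achievable corner points described above, completing the proof.
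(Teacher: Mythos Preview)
Your plan is essentially correct and aligned with the paper's approach: cite prior converse results (the paper invokes \cite{Chen_Jam} for the secrecy single-user bound and \cite{He_Yener_BoundGIC} for the sum bound, exactly as you propose as the alternative to your genie sketch), then establish achievability of the extreme points via structured codes and time-sharing.

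Two points of comparison are worth recording. First, the paper spends its effort on the opposite side from you: it simply cites the literature for the converse and devotes the appendix to explicit one-dimensional lattice constructions (with nearest-neighbor decoding and error-probability analysis) achieving the nontrivial corners $(\alpha-1,1)$ in Regime~1 and $(\beta-1,1+\alpha-\beta)$ in Regime~2. Your self-contained derivations of $d_1\le\alpha$ and $d_2\le(1+\alpha-\beta)^+$ are fine but more than the paper does, whereas your achievability sketch is considerably lighter than the paper's. Second, your chosen example of a ``mixed'' corner, $(\alpha-\beta,1)$ for $\beta\le 1\le\alpha$, actually lies in Regime~4 and does \emph{not} require lattice alignment or aggregate decoding: Gaussian jamming in the bottom $\beta$ levels of $\bX_1$ plus treating interference as noise already achieves it (there is no layer of $W_1$ below the jammed band to recover). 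The corners that genuinely need the aggregate-decode-and-cancel trick are precisely $(\alpha-1,1)$ and $(\beta-1,1+\alpha-\beta)$ in Regimes~1 and~2, which is where the paper's detailed construction lives and is, of course, the regime in which perfect CSIT strictly beats finite precision.
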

While a direct statement of Lemma \ref{lemma:gdof_perfect} does not appear in prior literature to our knowledge, the lemma essentially follows from known results and arguments. For the sake of completeness, these arguments are summarized in Appendix \ref{sec:proofPerfect}.

%For example, the proof of converse follows from Lemma 8 of \cite{Chen_Jam} (the single-user bounds), and Lemma 2 of \cite{He_Yener_BoundGIC} (the sum bound). Achievability also follows from standard lattice coding arguments and is described 

%	Seeing that the SGDoF region under perfect CSIT is identical to the one under finite precision CSIT in Case 3, 4 and 5, the schemes under finite precision CSIT also achieve the region under perfect CSIT.
%	So it suffices to show the achievability of the region in Case 1 and 2.
	\subsection{Secure GDoF of the ZIC with Finite Precision CSIT}

%	Structured codes lose their benefits and are no longer robust under finite precision CSIT, as implied by the following theorem, which characterizes the secure GDoF region of the ZIC in all channel regimes.

	\begin{theorem} \label{thm:gdof}
		The secure GDoF region of the ZIC under finite precision CSIT is characterized as,
		\begin{enumerate}
			\item Regime 1: $1 < \beta < \alpha$
			\begin{align}
				\mathcal{D}_{\mbox{\tiny IC}}^{\tiny f.p.} = \left\{
					(d_1, d_2) \in \mathbb{R}^{2}_+ \middle | \begin{array}{l}
						d_2 \leq 1,\\
						d_1 + \beta d_2 \leq \alpha
					\end{array}
				\right\}.\label{thm:gdof-1}
			\end{align}
			\item Regime 2: $1 < \beta$ and $\beta-1 < \alpha \leq \beta$
			\begin{align}
				\mathcal{D}_{\mbox{\tiny IC}}^{\tiny f.p.} = \left\{
					(d_1,d_2) \in \mathbb{R}^{2}_+ \middle |~ \frac{d_1}{\alpha} + \frac{d_2}{1+\alpha - \beta} \leq 1
				\right\}. \label{thm:gdof-2}
			\end{align}
			\item Regime 3: $1 < \beta$ and $\alpha \leq \beta-1$
			\begin{align}
			\mathcal{D}_{\mbox{\tiny IC}}^{\tiny f.p.} &=	\left\{  (d_1, d_2)  \in \mathbb{R}^{2}_+ \middle |~ d_1 \leq \alpha, d_2 = 0 \right\}. \label{thm:gdof-3ic}
			\end{align} 
			\item Regime 4: $0 \leq \beta \leq 1$
			\begin{align}
				\mathcal{D}_{\mbox{\tiny IC}}^{\tiny f.p.} = \left\{ (d_1, d_2)   \in \mathbb{R}^{2}_+ \middle | 
					\begin{array}{l}
						d_1 \leq \alpha, d_2 \leq 1 \\
						d_1 + d_2 \leq 1 + \alpha - \beta
					\end{array}
				\right \}. \label{thm:gdof-4}
			\end{align}
%			\item Case 5: $0 \leq \alpha \leq \beta \leq 1$
%			\begin{align}
%				\mathcal{D}_\ic = \left\{
%					(d_1, d_2) \in \mathbb{R}^{2+} \middle | \begin{array}{l}
%						d_1 \leq \alpha\\
%						d_1+d_2 \leq 1+\alpha -\beta
%					\end{array}
%				\right\}. \label{thm:gdof-5}
%			\end{align}
		\end{enumerate}
\end{theorem}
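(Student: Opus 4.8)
The plan is to prove achievability with classical Gaussian codes (no lattice alignment) and to carry the real weight on the converse side, using Aligned Images bounds and the sum-set inequalities of \cite{Arash_Jafar_sumset}. A useful preliminary reduction: in Regimes~3 and 4 the claimed region already equals the perfect-CSIT region $\mathcal{D}_{\mbox{\tiny IC}}^{\tiny p}$ of Lemma~\ref{lemma:gdof_perfect} --- in Regime~3 both reduce to $\{d_1\le\alpha,\ d_2=0\}$, and in Regime~4 the perfect-CSIT constraint $d_2\le\min\{1,(1+\alpha-\beta)^+\}$ is implied by $d_2\le1$ and $d_1+d_2\le1+\alpha-\beta$. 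Since every finite-precision code is a perfect-CSIT code, $\mathcal{D}_{\mbox{\tiny IC}}^{\tiny f.p.}\subseteq\mathcal{D}_{\mbox{\tiny IC}}^{\tiny p}$, so the converse for Regimes~3 and 4 is inherited from Lemma~\ref{lemma:gdof_perfect}, and their achievability only needs the alignment-free scheme that is already optimal there. The genuinely new work --- the part that forces the use of sum-set inequalities --- is the converse in Regimes~1 and 2, where the finite-precision region is strictly smaller than $\mathcal{D}_{\mbox{\tiny IC}}^{\tiny p}$ (the two regions differ strictly there; e.g.\ in Regime~1, $(\alpha-1,1)\in\mathcal{D}_{\mbox{\tiny IC}}^{\tiny p}\setminus\mathcal{D}_{\mbox{\tiny IC}}^{\tiny f.p.}$).

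For achievability I would exploit convexity: it suffices to reach the two or three extreme points of each polytope and then time-share, which makes the slanted faces $d_1+\beta d_2\le\alpha$ and $\tfrac{d_1}{\alpha}+\tfrac{d_2}{1+\alpha-\beta}\le1$ automatic. The point $(\alpha,0)$ is trivial (only Transmitter~1 active; \eqref{eq:secrecy} is vacuous). For an extreme point with $d_2>0$, Transmitter~2 carries $W_2$ in the top power levels of $\bX_2$ (the ones visible to Receiver~2) and fills the lower levels of $\bX_2$, below Receiver~2's noise floor, with independent Gaussian self-jamming; Transmitter~1 sends an independent Gaussian jamming signal topped by an independent Gaussian $W_1$-codeword, with the power split chosen to land on the target point. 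Transmitter~2's self-jamming does double duty: at Receiver~1 it masks the part of $W_2$'s image that the unknown gain $G_{12}$ smears into the lower levels, and it blocks Receiver~1 from harvesting $W_1$ from those same levels; Transmitter~1's dedicated jamming covers the remaining, top part of $W_2$'s image at Receiver~1. No lattice alignment enters, and the only transmitter-side knowledge used is the deterministic power-level structure, which is available under finite precision. Verifying each corner is routine power accounting.

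For the converse I would proceed as follows. \emph{(i)} Pass to the deterministic model: after suitable scaling, Receiver~1 observes $\bbY_1=\lfloor G_{11}\bX_1\rfloor+\lfloor G_{12}\bX_2\rfloor$ with $\bX_1,\bX_2$ spanning $\alpha$ and $\max\{\beta,1\}$ power levels, and Receiver~2 observes $\bbY_2$, the top (at most) one power level of $\bX_2$ through $G_{22}$; the standard reduction changes each rate by at most $\nologP$. \emph{(ii)} Fano gives $\Hg(W_1\mid\bbY_1)=\nologP$ and $\Hg(W_2\mid\bbY_2)=\nologP$, and secrecy~\eqref{eq:secrecy} gives $\Ig(W_2;\bbY_1)=\nologP$; here $\chg$ collects all gains and I freely condition on the private randomness $\theta_i$ (independent of $\chg$). \emph{(iii) Single-user bounds:} $nR_2\le\Hg(\bbY_2)+\nologP\le n\tfrac{1}{2}\log\bP+\nologP$, i.e.\ $d_2\le1$; conditioning on $(W_1,\theta_1)$ turns $\bbY_1$ into a shift of $\lfloor G_{12}\bX_2\rfloor$, so $\Hg(\bbY_1\mid W_1)\ge\Hg(\lfloor G_{12}\bX_2\rfloor)$, and with $\Hg(\bbY_1)\le\Hg(\lfloor G_{11}\bX_1\rfloor)+\Hg(\lfloor G_{12}\bX_2\rfloor)$ this gives $nR_1\le\Hg(\lfloor G_{11}\bX_1\rfloor)+\nologP\le n\alpha\tfrac{1}{2}\log\bP+\nologP$, i.e.\ $d_1\le\alpha$. \emph{(iv) Weighted-sum bounds (Regimes~1,2):} from (iii), $nR_1\le\Hg(\bbY_1)-\Hg(\lfloor G_{12}\bX_2\rfloor)+\nologP$; feeding in secrecy as $\Hg(\bbY_1)\le\Hg(\bbY_1\mid W_2)+\nologP$ together with the fact that decodability at Receiver~2 forces $\Ig(W_2;\lfloor G_{12}\bX_2\rfloor)\ge nR_2-\nologP$ (since $\lfloor G_{12}\bX_2\rfloor$ pins down $\bX_2$, hence $\bbY_2$, up to $O(1)$ per symbol), a short entropy chain yields the \emph{loose} bound $d_1+d_2\le\alpha$. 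The \emph{sharp} bounds $d_1+\beta d_2\le\alpha$ (Regime~1) and $\tfrac{d_1}{\alpha}+\tfrac{d_2}{1+\alpha-\beta}\le1$ (Regime~2) come from the finite-precision phenomenon that the $\approx nR_2$ bits carried by $\bX_2$ are smeared by the unknown $G_{12}$ over the entire $\beta$-level span of $\bX_2$'s image at Receiver~1, so that masking $W_2$ (which secrecy demands) costs Transmitter~1 a fraction $\beta$ (resp.\ $\tfrac{\alpha}{1+\alpha-\beta}$) of its $\alpha$ image levels at Receiver~1 per unit of $d_2$; quantifying this is precisely an application of a sum-set inequality of \cite{Arash_Jafar_sumset} to the images $\lfloor G_{11}\bX_1+G_{12}\bX_2\rfloor$ and $\bbY_2$ and their sub-images at intermediate power levels. \emph{(v) Regime~3 ($\beta\ge1+\alpha$):} Transmitter~1's signal reaches only the bottom $\alpha$ of the $\beta$ levels of $\bbY_1$, so the top $\beta-\alpha\,(\ge1)$ levels of $\bbY_1$ reveal $\bX_2$'s top level, hence $\bbY_2$, up to $\nologP$; thus $\Hg(W_2\mid\bbY_1)\le\Hg(W_2\mid\bbY_2)+\nologP=\nologP$, and with $\Ig(W_2;\bbY_1)=\nologP$ we get $nR_2=H(W_2)=\Ig(W_2;\bbY_1)+\Hg(W_2\mid\bbY_1)=\nologP$, i.e.\ $d_2=0$; together with $d_1\le\alpha$ this is the claimed Regime~3 region.

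The hard part will be step (iv): identifying the right genie side information and the right instance of the sum-set inequalities of \cite{Arash_Jafar_sumset} that upgrade the easy $d_1+d_2\le\alpha$ to the tight $d_1+\beta d_2\le\alpha$ and $\tfrac{d_1}{\alpha}+\tfrac{d_2}{1+\alpha-\beta}\le1$ --- i.e.\ extracting exactly the coefficient $\beta$ (resp.\ the denominator $1+\alpha-\beta$) from the smearing phenomenon and not merely a weaker constant. The remaining bookkeeping --- carrying the conditioning on the private randomness $\theta_i$ through the entropy chains, and checking that the $O(1)$-per-symbol rounding and carry errors in the deterministic model aggregate only to $\nologP$ over a block --- is routine.
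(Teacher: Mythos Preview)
Your overall architecture matches the paper's, and your handling of Regimes~3 and~4 by inheriting the converse from Lemma~\ref{lemma:gdof_perfect} is a clean shortcut (the paper instead calls Regime~3 trivial and cites \cite{Chan_Geng_Jafar_secureBC} for Regime~4). Your loose bound $d_1+d_2\le\alpha$ is correct and your ``smearing'' intuition points the right way. The gap is that step~(iv) stops at intuition: you assert that extracting the factor $\beta$ (resp.\ $\alpha/(1+\alpha-\beta)$) ``is precisely an application of a sum-set inequality,'' but you do not say which inequality, on which sub-sections, with which genie. The paper's mechanism is considerably more structured than a single application. It conditions on the genie $\groupW=\{W_1,(\bA)^{\lmu},(\bB)^{\umu}\}$ and proves an \emph{iterative} peeling lemma $\Hg((\bbY_1)^\lambda\mid\groupW)\ge nR_2+\Hg((\bbY_1)^{\lambda-(1-\umu)}\mid\groupW)$, applied $k$ times to extract one copy of $nR_2$ per $(1-\umu)$-thick slab of $\bbY_1$. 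The leftover fractional slab of height $\gamma\in(1-\umu,2(1-\umu))$ is handled separately: partition the top-$p$ parts of $\bA$ and $\bB$ into $2\tilde p$ equal boxes, apply the box-stacking sum-set inequality (Lemma~\ref{lemma:ais-sumset}) with a \emph{circular sliding window} of size $\tilde q$, sum over all $2\tilde p$ window positions, and invoke \emph{submodularity of entropy} to collapse the sum into $q\Hg((\bA)^{p+\lmu},(\bB)^{p+\umu}\mid\groupW)$; because this requires $p/q\in\mathbb{Q}$, a rational-approximation limit is taken. Combining the iteration with the fractional step yields $\Hg(\bbY_1\mid\groupW)\ge\tfrac{\min\{\alpha,\beta\}}{1-\umu}\,nR_2$, and pairing this with $\Hg(\bbY_1\mid(\bB)^{\umu})\le n\alpha\log\bP$ gives exactly the claimed weighted-sum bounds. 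None of these ingredients --- the specific genie, the recursive extraction of $nR_2$, the sliding-window-plus-submodularity trick, or the rational approximation --- appear in your proposal, and a separate lemma (showing $\Hg((\bA)^{1-\mu}\mid\groupW,(\bB)^1)\ge nR_2$, i.e.\ the jamming must be as large as $W_2$) is also needed to close the chain.

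A minor point on achievability: your remark that Transmitter~2's self-jamming ``masks the part of $W_2$'s image that the unknown gain $G_{12}$ smears into the lower levels'' conflates converse and achievability. Receiver~1 knows $G_{12}$, so there is no smearing to mask there; smearing is what prevents the \emph{transmitters} from aligning and is purely a converse obstacle. The paper's schemes use no self-jamming from Transmitter~2: for $(\alpha-\beta,1)$ in Regime~1, Transmitter~2 sends a full-power wiretap codebook while Transmitter~1 jams the entire $\beta$-level cross-image and places $W_1$ on top; for $(0,1+\alpha-\beta)$ in Regime~2, Transmitter~2 backs off power by $P^{-(\beta-\alpha)}$ and Transmitter~1 transmits pure noise. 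Your variant may still work, but the rationale given for it is off.
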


The proof of Theorem \ref{thm:gdof} appears in Section \ref{sec:proof} and \ref{sec:converse}. 
The main contribution of this work is the proof of Theorem \ref{thm:gdof} for Regimes $1$ and $2$. Indeed, Regime $3$ is  trivial and Regime $4$ already follows from \cite{Chan_Geng_Jafar_secureBC}. The converse  proofs for Regimes $1$ and $2$ rely on various sum-set inequalities of \cite{Arash_Jafar_sumset}, and are  central to the thesis of this work, that the benefits of structured jamming are not robust to finite precision CSIT in the GDoF sense.

%	Case 4 is already solved in \cite{Chan_Geng_Jafar_secureBC}\footnote{[This can be removed later] In \cite{Chan_Geng_Jafar_secureBC}, we show the TIN optimality of the IC setting in SLS$^+$ regime, which is defined as $\alpha_{ii} \geq \alpha_{ki}$ and $\alpha_{ii}\geq \alpha_{ik}+\alpha_{ji} - \alpha_{jk}$ for $j,k \neq i$. Applying these conditions to the ZIC setting, we get $\beta \leq 1$, $\alpha \geq 0$ and $1+\alpha > \beta$, the last of which is redundant. }. 
%	The proof for Case 3 is trivial because the eavesdropper (Receiver 1) has a stronger `wiretap' link than Receiver 2 does.
%	Case 1 and 2 are more interesting cases, as more involved arguments and non-trivial use of Aligned Image bounds are involved in their proofs, which are relegated to Section \ref{sec:proof}.
%

\subsection{How Robust are Structured Codes?}
With the help of Lemma \ref{lemma:gdof_perfect} and Theorem \ref{thm:gdof}, we are ready to explore the robustness of the GDoF gains from structured codes through the following observations. %For intuitive explanations in this section we will rely only on achievability arguments. The converse arguments are much more challenging.

\begin{enumerate}[leftmargin=*]
		\begin{figure}[t]
	\centering
		\begin{tikzpicture}[scale=2.75]
		\def \e {0.0}
		% Axis
		\draw [line width = 0.5, ->] (0,0) node [below left] {\footnotesize $O$} -- (2.1,0) node [right] {\footnotesize $\alpha$};
		\draw [line width = 0.5, ->] (0,0) -- (0,2.1) node [above] {\footnotesize $\beta$};
		
		% Region line
		\draw [line width = 0.5, dashed] (0,0) -- (2.1,2.1);
		\draw [line width = 0.5, dashed] (0,1) -- (1.1, 2.1);
		\draw [line width = 0.5, dashed] (0,1) -- (2.1, 1);
		
		% Regions
		\draw [line width = 0, fill = blue!20]  (2, {1+\e})--({1+2*\e}, {1+\e}) -- ({2, 2-\e}); % Case 1
		\draw [line width = 1] (2.1, {1+\e})--( {1+2*\e}, {1+\e}) -- (2.1, {2.1-\e}); % Case 1 border
		\draw [line width = 0, fill = blue!50]  ({2-\e}, 2) -- ({1-\e}, {1+\e})--( {2*\e}, {1+\e}) -- ({1+\e}, 2) ; % Case 2
		\draw [line width = 1]  ({2.1-\e}, 2.1) -- ({1-\e}, {1+\e})--({2*\e}, {1+\e}) -- ({1.1+\e}, 2.1); % Case 2 border
		\draw [line width = 0, fill = purple!20]  ( {1-\e},2) -- (\e, {1+2*\e} )--(\e, 2) ; % Case 3
		\draw [line width = 1]  (1.1-\e, 2.1) -- (\e, {1+2*\e})--(\e, 2.1); % Case 3 border
		\draw [line width = 0, fill = purple!50]  (2, \e) -- (\e, \e)-- ({\e}, {1-\e}) -- (2, {1-\e}) ; % Case 4
		\draw [line width = 1]  ( 2.1, \e) -- (\e, \e)-- ({\e}, {1-\e}) -- (2.1, {1-\e}) ; % Case 4 border
		
		% Markers
				\draw [line width = 0.25, dashed] (1,0) -- (1,1);

		\node at (1.69, 1.2) {\footnotesize Regime 1};
		\node at (1, 1.5) {\footnotesize Regime 2};
		\node at (0.3, 1.75) {\footnotesize Regime 3};
		\node at (1, 0.5)[fill=purple!50] {\footnotesize Regime 4};
		\node at (-0.1, 1) {\footnotesize $1$}; 
		\node at (-0.1, 2) {\footnotesize $2$};
		\node at (1, -0.1) {\footnotesize $1$};
		\node at (2, -0.1) {\footnotesize $2$};   
	\end{tikzpicture}

	\caption{\small The parameter regimes corresponding to the four cases in Theorem \ref{thm:gdof}.}
	\label{fig:regime}
\end{figure}
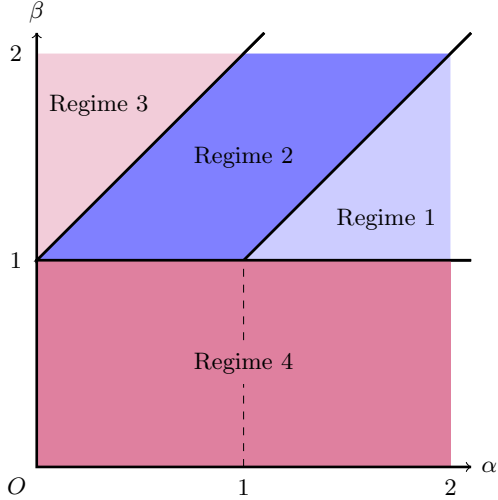
\item There are $4$ parameter regimes identified in Theorem \ref{thm:gdof}. These regimes are shown in Figure \ref{fig:regime}. Our first observation is that in regimes $3$ and $4$, we have $\mathcal{D}_{\mbox{\tiny IC}}^{\tiny p}=\mathcal{D}_{\mbox{\tiny IC}}^{\tiny f.p.}$, i.e., there is no loss of GDoF from limiting CSIT to finite precision.  However, this is not because structured codes are robust to finite precision CSIT. Upon  inspection of the achievable scheme, it is evident that these are the regimes where structured codes are not needed even with perfect CSIT. In  Regime $3$ we only need to switch off Transmitter $2$, thus allowing User $1$ to achieve $\alpha$ GDoF. It is not possible for User $2$ to achieve any positive GDoF value in Regime $3$ without violating the secrecy constraint because the signal from Transmitter $2$ appears at Receiver $1$ with so much strength ($\beta\geq \alpha+1$), that even if Transmitter $1$ uses all its power to only transmit noise, thus maximally elevating the noise floor at Receiver $1$, the  interfering signal that appears above the noise floor at Receiver $1$ still reveals everything that is visible to Receiver $2$. In Regime $4$ (see \cite{Chan_Geng_Jafar_secureBC}) all we need is for Transmitter $1$ to transmit enough noise (jamming) to elevate the noise floor at Receiver $1$ to the level of the interfering signal, and then send its desired message above the new noise floor. The jamming guarantees security, and the desired signal is decoded by Receiver $1$ simply by treating everything else as noise. Thus, there is no need for structured codes to allow alignment or aggregate decoding of signals.

	\begin{figure}[t]
		\centering
		\subcaptionbox{}{ 
			\centering
			\begin{tikzpicture}[yscale = 2, xscale=2.7]
					\def \e {0.01}
					
					\draw [line width = 2, ->] (-0.1, 0) node [below left] {$O$} -- (2, 0) node [right] {$d_1$};
					\draw [line width = 2, ->] (0, -0.1) -- (0, 2) node [right] {$d_2$};

					% GDoF regions
					\draw [draw = red, fill = red!30, line width =1] (0,0.75) node [left] {$1$} -- (0.75,0.75) -- (1.5,0) -- (0,0) -- cycle;
					\draw [draw = black, fill = gray!30, line width =1] (\e,{0.75-\e}) -- (0.25,{0.75-\e}) -- ({1.5-\e},\e) -- (\e,\e) -- cycle;
					
					% Markers
					\draw  [line width = 0.5, dashed] (1.5,0) node [below] {$\alpha$} -- (0,1.5) node [left] {$\alpha$};
					\draw [line width = 0.5, dashed] (0.25, 0.75) -- (0.25, 0) node [below] {$\alpha-\beta$};
					\draw [line width = 0.5, dashed] (0.75, 0.75) -- (0.75, 0) node [below] {$\alpha-1$};
					
				\end{tikzpicture}
			} 
		\hfill
		\subcaptionbox{} { 
			\centering
			\begin{tikzpicture}[scale=0.8]
			\foreach \m in {1,2}
			{
				\coordinate (M\m) at (1,1.5-2*\m);
				\coordinate (N\m) at (5,1.5-2*\m){};
			};
			
			\draw [very thick] (M1)--(N1) node [pos=0.7, above = 0.1cm] {$\alpha$};
			\draw[very thick] (M2)--(N2) node [pos=0.7, below = 0 cm, text  = black] { $1$};
			
			\draw[very thick, red] (M2)--(N1) node [pos=0.7, below = 0cm, text=black] {$\beta$};

			\node[thick, circle, draw=black, fill=white, inner sep = 1.5, left] at  (M1) {\footnotesize $\bX_1$};
			\node[thick, circle, draw=black, fill=white, inner sep = 1.5, right] at (N1){\footnotesize $\bY_1$};
			
			\node[thick, circle, draw=black, fill=white, inner sep = 1.5, left] at  (M2) {\footnotesize $\bX_2$};
			\node[thick, circle, draw=black, fill=white, inner sep = 1.5, right] at (N2){\footnotesize $\bY_2$};

			%Transmitter 1
			\draw  [fill=red!20!white](-0.75,1) rectangle (-0.25, 1.5) node[label={[xshift=-0.6cm, yshift=-0.5cm]}, pos=0.5] {};
			\draw  [pattern=dots, pattern color=red](-0.75,0) rectangle (-0.25, 1) node[label={[xshift=-0.9cm, yshift=-0.7cm]}] {};
			\path [thick] (-1,0.5)--(0,0.5);
			\draw  [fill=red!50!white](-0.75,-0.5) rectangle (-0.25, 0) node[label={[xshift=-0.6cm, yshift=-0.5cm]}, pos=0.5] {};
			\draw[<->, thick](-0.9,1)--(-0.9,1.5) node[left, midway]{\footnotesize $\alpha-\beta$};
%			\draw[<->, thick](-0.9,0)--(-0.9,1) node[left, midway]{\footnotesize $1$};
			\draw[<->, thick](-0.9,-0.5)--(-0.9,0) node[left, midway]{\footnotesize $\beta-1$};
			\draw[<->, thick](-0.1,-0.5)--(-0.1,1.5) node[right, midway]{\footnotesize $\alpha$};
			\draw [thick](-1,-0.5)--(0,-0.5) ;
			
			%Receiver 1
			\begin{scope}[shift={(0.5,0)}]
				\draw  [fill=red!20!white](6,1) rectangle (6.5, 1.5) node[label={[xshift=-0.87cm, yshift=-0.7cm]}]  {};;

				\draw  [pattern=dots, pattern color=purple](6,0) rectangle (6.5, 1) node[label={[xshift=-0.87cm, yshift=-0.7cm]}]  {};
				\draw  [fill=red!50!white](6,-0.5) rectangle (6.5, 0) node[label={[xshift=-0.63cm, yshift=-0.57cm]}, pos=0.5]{};
				\draw  [pattern=dots, pattern color=blue](6.5,0) rectangle (7, 1) node[label={[xshift=0.3cm, yshift=-0.7cm]}]  {};
				\draw  [fill=white](6.5,-0.5) rectangle (7, 0) node[label={[xshift=0.3cm, yshift=-0.7cm]}]  {};
				\draw[<->, thick](7.15,-0.5)--(7.15,1) node[right, midway]{\footnotesize $\beta$};
				\draw[<->, thick](5.85,-0.5)--(5.85,1.5) node[left, midway]{\footnotesize $\alpha$};
				\draw [thick](5.5,-0.5)--(7.5,-0.5) ;
			\end{scope}
		
			%Transmitter 2
			\draw  [pattern=dots, pattern color=blue](-0.75,-2) rectangle (-0.25, -1) node[label={[xshift=-0.9cm, yshift=-0.7cm]}]  {};
			\draw  [fill=white](-0.75,-2.5) rectangle (-0.25, -2) node[label={[xshift=-0.9cm, yshift=-0.7cm]}]  {};
			\draw[<->, thick](-0.1,-2.5)--(-0.1,-1) node[right, midway]{\footnotesize $\beta$};
			\draw[<->, thick](-0.9,-2)--(-0.9,-1) node[left, midway]{\footnotesize $1$};

			\draw [thick](-1,-2.5)--(0,-2.5);
			
			%Receiver 2
			\begin{scope}[shift={(0.3,0)}]
				\draw  [pattern=dots, pattern color=blue](6.5,-2.5) rectangle (7, -1.5) node[label={[xshift=0.3cm, yshift=-0.7cm]}]  {};
				\draw[<->, thick](7.15,-1.5)--(7.15,-2.5) node[right, midway]{\footnotesize $1$};
				\draw [thick](5.75,-2.5)--(7.75,-2.5);
			\end{scope}	
			\end{tikzpicture}
			}
		\caption{\small  (a) $\mathcal{D}_{\mbox{\tiny IC}}^{\tiny p}$ (in red) and $\mathcal{D}_{\mbox{\tiny IC}}^{\tiny f.p.}$ (in grey) are shown for Regime $1$ (where $1<\beta<\alpha$). (b) The achievability of $(d_1^{**},d_2^*)=(\alpha-1,1)$ under perfect CSIT is illustrated. In particular, aggregate decoding and cancellation of lattice-aligned signals (blue and red dotted portions) is required, which is only possible under perfect CSIT. Signal levels shown in plain white are empty.}
		\label{fig:gdof1}
		
	\end{figure}
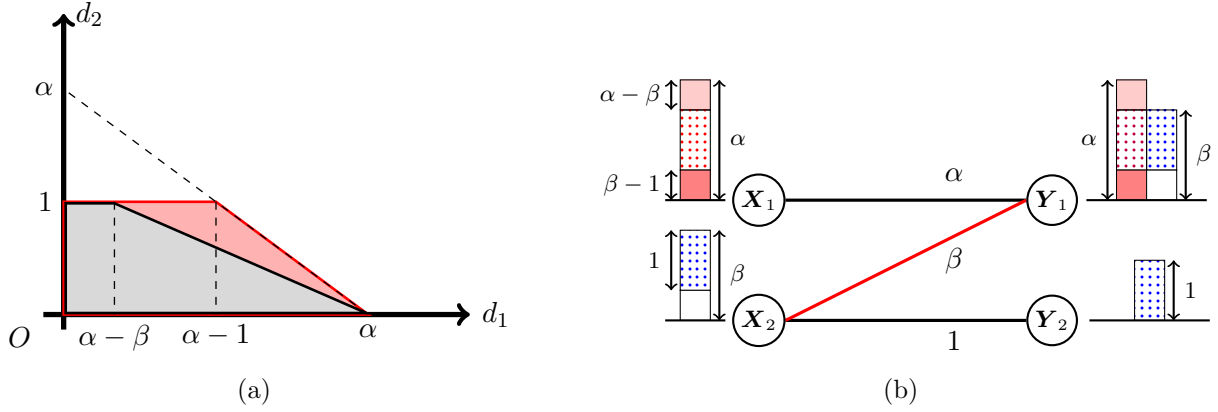

\item In regimes $1$ and $2$ a gap appears between $\mathcal{D}_{\mbox{\tiny IC}}^{\tiny p}$ and $\mathcal{D}_{\mbox{\tiny IC}}^{\tiny f.p.}$. Indeed, these regimes are central to this work, as they reveal the fragility of structured codes. First let us consider Regime $1$. The GDoF regions, $\mathcal{D}_{\mbox{\tiny IC}}^{\tiny p}$ and $\mathcal{D}_{\mbox{\tiny IC}}^{\tiny f.p.}$ for this regime are illustrated in Figure \ref{fig:gdof1}(a). Let $d_2^*$ denote the maximal value of $d_2$. According to  Figure \ref{fig:gdof1}(a), $d_2^*=1$. Conditioned on $d_2=d_2^*$, let $d_1^{**}$ denote the maximum value of $d_1$.  We note that under perfect CSIT we have $(d_1^{**}, d_2^*)=(\alpha-1,1)$ but under finite precision CSIT we only have $(d_1^{**}, d_2^*)=(\alpha-\beta,1)$. This loss of GDoF reveals the fragility of aggregate decoding of structured codes. For an intuitive explanation, consider Figure \ref{fig:gdof1}(b) which shows how $(d_1^{**}, d_2^*)=(\alpha-1,1)$ is achieved under perfect CSIT, by lattice alignment between the dotted portions of signals seen at Receiver $1$. This lattice alignment ensures the secrecy of $W_2$ from Receiver $1$, while simultaneously allowing Receiver $1$ to  decode the sum of lattice points as a valid codeword. Indeed, while the top $\alpha-\beta$ GDoF (shown in light red) of desired message can be decoded by Receiver $1$ without any need for alignment, it is the aggregate decoding of aligned signals that allows Receiver $1$ to decode the additional bottom $\beta-1$ GDoF (shown in dark red) of desired message, thus achieving a total of $d_1^{**}=(\alpha-\beta)+(\beta-1)=\alpha-1$ GDoF. Intuitively, under finite precision CSIT, aggregate decoding and cancellation are not possible, thus Receiver $1$ is only able to decode the top $\alpha-\beta$ GDoF of desired message, i.e., $d_1^{**}=\alpha-\beta$. The main technical challenge  in this work	is to prove this intuition, i.e., to show that aggregate decoding or any other structured jamming scheme that even partially retains  the GDoF benefits of aggregate decoding and cancellation, is not possible under finite precision CSIT.
	
	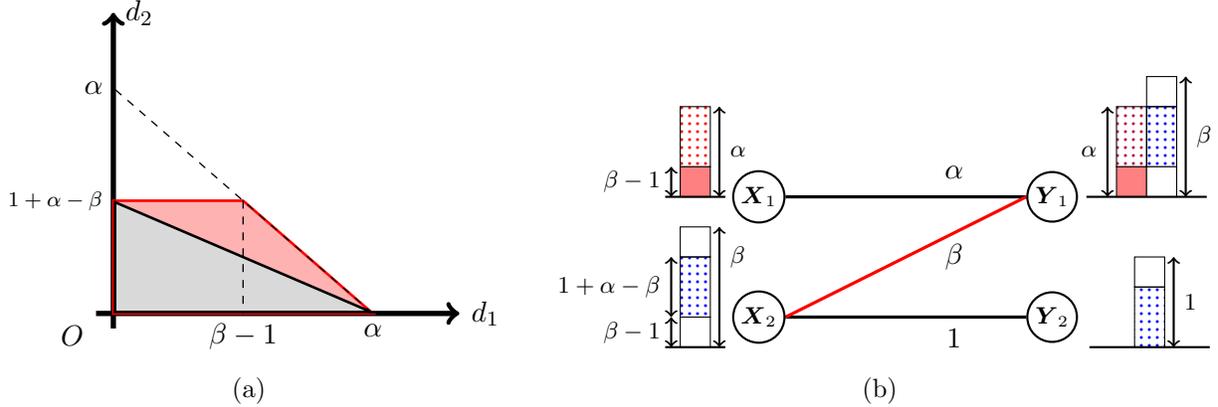
\begin{figure}[t]
	\centering
	\subcaptionbox{}{ 
		\centering
		\begin{tikzpicture}[yscale = 2, xscale=2.3]
			\def \e {0.01}
				
			\draw [line width = 2, ->] (-0.1, 0) node [below left] {$O$} -- (2, 0) node [right] {$d_1$};
			\draw [line width = 2, ->] (0, -0.1) -- (0, 2) node [right] {$d_2$};

			% GDoF regions
			\draw [draw = red, fill = red!30, line width =1] (0,0.75) node [left] {\scriptsize$1+\alpha-\beta$} -- (0.75,0.75) -- (1.5,0) -- (0,0) -- cycle;
			\draw [draw = black, fill = gray!30, line width =1] (\e,{0.75-\e}) -- ({1.5-\e},\e) -- (\e,\e) -- cycle;
				
			% Markers
			\draw  [line width = 0.5, dashed] (1.5,0) node [below] {$\alpha$} -- (0,1.5) node [left] {$\alpha$}; 
			\draw [line width = 0.5, dashed] (0.75, 0.75) -- (0.75, 0) node [below] {$\beta-1$};
				
		\end{tikzpicture}
		}
	\hfill
	\subcaptionbox{} { 
		\centering
		\begin{tikzpicture}[scale=0.8]
			\foreach \m in {1,2}
			{
				\coordinate (M\m) at (1,1.5-2*\m);
				\coordinate (N\m) at (5,1.5-2*\m){};
			};
	
			\draw [very thick] (M1)--(N1) node [pos=0.7, above = 0.1cm] {$\alpha$};
			\draw[very thick] (M2)--(N2) node [pos=0.7, below = 0 cm, text  = black] { $1$};
	
			\draw[very thick, red] (M2)--(N1) node [pos=0.7, below = 0cm, text=black] {$\beta$};

			\node[thick, circle, draw=black, fill=white, inner sep = 1.5, left] at  (M1) {\footnotesize $\bX_1$};
			\node[thick, circle, draw=black, fill=white, inner sep = 1.5, right] at (N1){\footnotesize $\bY_1$};
	
			\node[thick, circle, draw=black, fill=white, inner sep = 1.5, left] at  (M2) {\footnotesize $\bX_2$};
			\node[thick, circle, draw=black, fill=white, inner sep = 1.5, right] at (N2){\footnotesize $\bY_2$};

			%Transmitter 1
%			\draw  [fill=red!20!white](-0.75,1) rectangle (-0.25, 1.5) node[label={[xshift=-0.6cm, yshift=-0.5cm]}, pos=0.5] {};
			\draw  [pattern=dots, pattern color=red](-0.75,0) rectangle (-0.25, 1) node[label={[xshift=-0.9cm, yshift=-0.7cm]}] {};
			\path [thick] (-1,0.5)--(0,0.5);
			\draw  [fill=red!50!white](-0.75,-0.5) rectangle (-0.25, 0) node[label={[xshift=-0.6cm, yshift=-0.5cm]}, pos=0.5] {};
			\draw[<->, thick](-0.1,-0.5)--(-0.1,1) node[right, midway]{\footnotesize $\alpha$};
			\draw[<->, thick](-0.9,-0.5)--(-0.9,0) node[left, midway]{\footnotesize $\beta-1$};
%			\draw[<->, thick](-0.9,0)--(-0.9,1) node[left, midway]{\footnotesize $1+\alpha-\beta$};
			\draw [thick](-1,-0.5)--(0,-0.5) ;
	
			%Receiver 1
			\begin{scope}[shift={(0.5,0)}]
				% Codeword A
				\draw  [pattern=dots, pattern color=purple](6,0) rectangle (6.5, 1) node[label={[xshift=-0.87cm, yshift=-0.7cm]}]  {};
				\draw  [fill=red!50!white](6,-0.5) rectangle (6.5,0) node[label={[xshift=-0.87cm, yshift=-0.7cm]}]  {};;
				% Codeword B
				\draw  [fill=white](6.5,1) rectangle (7, 1.5) node[label={[xshift=0.3cm, yshift=-0.7cm]}]  {};
				\draw  [pattern=dots, pattern color=blue](6.5,0) rectangle (7, 1) node[label={[xshift=0.3cm, yshift=-0.7cm]}]  {};
				\draw  [fill=white](6.5,-0.5) rectangle (7, 0) node[label={[xshift=0.3cm, yshift=-0.7cm]}]  {};
				
				\draw[<->, thick](7.15,-0.5)--(7.15,1.5) node[right, midway]{\footnotesize $\beta$};
				\draw[<->, thick](5.85,-0.5)--(5.85,1) node[left, midway]{\footnotesize $\alpha$};
				\draw [thick](5.5,-0.5)--(7.5,-0.5) ;
			\end{scope}
	
			%Transmitter 2
			\draw  [fill=white](-0.75,-1.5) rectangle (-0.25, -1) node[label={[xshift=-0.9cm, yshift=-0.7cm]}]  {};
			\draw  [pattern=dots, pattern color=blue](-0.75,-2.5) rectangle (-0.25, -1.5) node[label={[xshift=-0.9cm, yshift=-0.7cm]}]  {};
			\draw  [fill=white](-0.75,-3) rectangle (-0.25, -2.5) node[label={[xshift=-0.9cm, yshift=-0.7cm]}]  {};	
			\draw[<->, thick](-0.9,-2.5)--(-0.9,-1.5) node[left, midway]{\footnotesize $1+\alpha-\beta$};
%			\draw[<->, thick](-0.9,-1.5)--(-0.9,-1) node[left, midway]{\footnotesize $\beta-\alpha$};
			\draw[<->, thick](-0.9,-3)--(-0.9,-2.5) node[left, midway]{\footnotesize $\beta-1$};
			\draw[<->, thick](-0.1,-3)--(-0.1,-1) node[right, pos = 0.75]{\footnotesize $\beta$};
			
			\draw [thick](-1,-3)--(0,-3);
	
			%Receiver 2
			\begin{scope}[shift={(0.3,0)}]
				\draw  [fill=white](6.5,-2) rectangle (7, -1.5) node[label={[xshift=0.3cm, yshift=-0.7cm]}]  {};
				\draw  [pattern=dots, pattern color=blue](6.5,-3) rectangle (7, -2) node[label={[xshift=0.3cm, yshift=-0.7cm]}]  {};
				\draw[<->, thick](7.15,-1.5)--(7.15,-3) node[right, midway]{\footnotesize $1$};
				\draw [thick](5.75,-3)--(7.75,-3);
			\end{scope}	
		\end{tikzpicture}
	}
		
	\caption{\small  (a) $\mathcal{D}_{\mbox{\tiny IC}}^{\tiny p}$ (in red) and $\mathcal{D}_{\mbox{\tiny IC}}^{\tiny f.p.}$ (in grey) are shown for Regime $2$ (where $1<\beta$ and $\beta-1<\alpha\leq \beta$). (b) The achievability of $(d_1^{**},d_2^*)=(\beta-1,1+\alpha-\beta)$ under perfect CSIT is illustrated. In particular, aggregate decoding of lattice-aligned signals (blue and red dotted portions) is required, which is only possible under perfect CSIT. Signal levels shown in plain white are empty.}
	\label{fig:gdof2}
	\end{figure}

\item Now let us consider Regime $2$, for which the GDoF regions $\mathcal{D}_{\mbox{\tiny IC}}^{\tiny p}$ and $\mathcal{D}_{\mbox{\tiny IC}}^{\tiny f.p.}$  are illustrated in Figure \ref{fig:gdof2}(a). In this case the loss of GDoF is even more severe as we have $(d_1^{**}, d_2^*)=(\beta-1,1+\alpha-\beta)$ under perfect CSIT, and only $(d_1^{**}, d_2^*)=(0,1+\alpha-\beta)$ under finite precision CSIT. The loss of GDoF is once again attributable to the fragility of aggregate decoding, as illustrated in Figure \ref{fig:gdof2}(b). Aggregate decoding and cancellation of lattice-aligned signals allows Receiver $1$ to decode the bottom $\beta-1$ GDoF of desired message under perfect CSIT, thus achieving $d_1^{**}=\beta-1$. Intuitively, under finite precision CSIT, Receiver $1$ is no longer able to decode the aggregate signal, indeed $d_1^{**}=0$. Once again, the challenge is to formalize and prove this intuition, for which we will rely on sum-set inequalities of \cite{Arash_Jafar_sumset}.

	\begin{figure}[t]
		\centering
		\tdplotsetmaincoords{60}{330} % View angle: {Elevation}{Azimuth}
		\tikzstyle{fpFacet} = [{blue, fill opacity=0.5, draw=blue!50!black, draw opacity = 1, line width = 0.5, rounded corners=0.5pt}]
		\tikzstyle{pFacet} = [{red, fill opacity=0.25, draw=red, draw opacity = 1, line width = 0.5, rounded corners=0.5pt}]
		
			\begin{tikzpicture}[scale = 1.5, tdplot_main_coords]
			\def \e {0.0}
			% axis and grid
			\draw[gray, thick, ->, >=stealth'] (0,0,0) -- (2.5,0,0) node [black, above ] {$\alpha$};
			\draw[gray, thick, ->, >=stealth'] (0,0,0) -- (0,2.5,0) node [black, above ] {$\beta$};
			\draw[gray, thick, ->, >=stealth'] (0,0,0) -- (0,0,2.5) node [black, above  ] {$d_1^{**}$};
			\draw[black, dashed, line width = 0.5] (0,1,0) -- (2,1,0);
			\draw[black, dashed, line width = 0.5] (0,2,0) -- (2,2,0);
			\draw[black, dashed, line width = 0.5] (1,0,0) -- (1,2,0);
			\draw[black, dashed, line width = 0.5] (2,0,0) -- (2,2,0);
			\draw[black, dashed, line width = 0.5] (2,1,0) -- (2,1,1);
			\draw[black, dashed, line width = 0.5] (1,2,0) -- (1,2,1);
			\draw[black, dashed, line width = 0.5] (2,2,0) -- (2,2,1);
			\draw[black, dashed, line width = 0.5] (1,2,0) -- (1,2,1);

			\filldraw [fpFacet, fill=blue!80!black, opacity=0.2, thick] (0,2,{-\e}) -- (1, 2, {-\e}) -- (1,2,{1-\e}) -- cycle; % Case 3
			\filldraw [fpFacet, fill=blue!95!black, opacity=0.6] (0,1,{-\e}) -- (0, 2, {-\e}) -- (1,2,{1-\e}) -- cycle; % Case 3
			\filldraw [fpFacet, fill=blue!80!black, opacity=0.5, thick] (0,1,{-\e}) -- (1, 2, {-\e}) -- (1,2,{1-\e}) -- cycle; % Case 3

			\filldraw [fpFacet, fill=blue!70!black, opacity=0.3] (0,1,{-\e}) --  (1,1,{-\e}) -- (2,2,{-\e}) -- (1, 2, {-\e}); % Case 2

			\filldraw [fpFacet, fill=blue!70!black, opacity=0.2,  thick] (2,1,{-\e}) -- (2, 2, {-\e}) -- (2,1,{1-\e}) -- cycle; % Case 1
			\filldraw [fpFacet, fill=blue!80!black, opacity=0.5] (1,1,{-\e}) -- (2, 2, {-\e}) -- (2,1,{1-\e}) -- cycle; % Case 1
			\filldraw [fpFacet, fill=blue!95!black, opacity=0.6, thick] (1,1,{-\e}) -- (2, 1 {-\e}) -- (2,1,{1-\e}) -- cycle; % Case 1

			% Marks
			\node at (0,0,0) [below left ] {$O$};
			\node at (1,0,0) [below right ] {$1$};
			\node at (2,0,0) [below right ] {$2$};
			\node at (0,1,0) [below left ] {$1$};
			\node at (0,2,0) [below left ] {$2$};
			\node at (2,1,1) [right]{\footnotesize $(2,1,1)$};
			\node at (1,2,1) [left]{\footnotesize $(1,2,1)$};
			
			\begin{scope}[shift={(4,-2.3,0)}]
				\draw[gray, thick, ->, >=stealth'] (0,0,0) -- (2.5,0,0) node [black, above ] {$\alpha$};
				\draw[gray, thick, ->, >=stealth'] (0,0,0) -- (0,2.5,0) node [black, above ] {$\beta$};
				\draw[gray, thick, ->, >=stealth'] (0,0,0) -- (0,0,2.5) node [black, above  ] {$d_1^{**}$};
				\draw[black, dashed, line width = 0.5] (0,1,0) -- (2,1,0);
				\draw[black, dashed, line width = 0.5] (0,2,0) -- (2,2,0);
				\draw[black, dashed, line width = 0.5] (1,0,0) -- (1,2,0);
				\draw[black, dashed, line width = 0.5] (2,0,0) -- (2,2,0);
				\draw[black, dashed, line width = 0.5] (2,1,0) -- (2,1,1);
				\draw[black, dashed, line width = 0.5] (1,2,0) -- (1,2,1);
				\draw[black, dashed, line width = 0.5] (2,2,0) -- (2,2,1);
				\draw[black, dashed, line width = 0.5] (1,2,0) -- (1,2,1);

				\filldraw [pFacet, fill=red!90!black, opacity=0.2, thick] (0,2,{-\e}) -- (2, 2 {-\e}) -- (2,2,{1-\e})-- (1,2,{1-\e}) -- cycle; % Case 1
	
				\filldraw [pFacet, fill=red!70!black, opacity=0.7] (0,1,0) -- (0, 2,0) -- (1,2,1) -- cycle; % Case 3
				\filldraw [pFacet, fill=red!99!black, opacity=0.6] (0,1,0) -- (1, 2,1) -- (2,2,1) -- (1,1,0)-- cycle; % Case 2

				\filldraw [pFacet, fill=red!90!black, opacity=0.2, thick] (2,1,{-\e}) -- (2, 2 {-\e}) -- (2,2,{1-\e})-- (2,1,{1-\e}) -- cycle; % Case 1
	
				\filldraw [pFacet, fill=red!85!black, opacity=0.6] (1,1,0) -- (2, 2,1) -- (2,1,1) -- cycle; % Case 1
				\filldraw [pFacet, fill=red!70!black, opacity=0.7, thick] (1,1,{-\e}) -- (2, 1 {-\e}) -- (2,1,{1-\e}) -- cycle; % Case 1
	
				% Marks
				\node at (0,0,0) [below left ] {$O$};
				\node at (1,0,0) [below right ] {$1$};
				\node at (2,0,0) [below right ] {$2$};
				\node at (0,1,0) [below left ] {$1$};
				\node at (0,2,0) [below left ] {$2$};
				\node at (2,1,1) [right]{\footnotesize $(2,1,1)$};
				\node at (2,2,1) [above]{\footnotesize $(2,2,1)$};
				\node at (1,2,1) [left]{\footnotesize $(1,2,1)$};
			\end{scope}			
			
\end{tikzpicture}

		\caption{\small  $d_1^{**}$  under finite precision CSIT (blue) and perfect  CSIT (red) in the parameter regimes $1,2,3$. Regime $4$ is omitted. Peak vertices are labeled as $(\alpha, \beta, d_1^{**})$ tuples.} 
		\label{fig:corner}
	\end{figure}
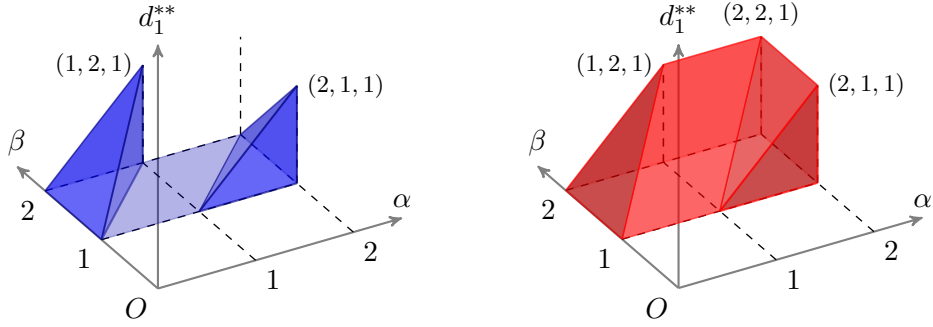
\item The loss of GDoF in terms of $d_1^{**}$ values is illustrated for the entirety of Regimes $1,2,3$ in Figure \ref{fig:corner}. As noted, there is no loss in Regime $3$, and Regime $4$ is omitted to avoid clutter.  Regime $2$ is particularly striking because $d_1^{**}=0$ under finite precision CSIT.  The discontinuity between Regime $2$ and Regime $3$ is interesting, because it shows the tremendous cost for  securing $W_2$ that is incurred in Regime $2$ where $d_2^*>0$. Note that this cost  disappears in Regime $3$ where $d_2^*=0$.

\item While the previous observations emphasized the loss of GDoF, let us now provide a counterpoint to show that the loss is bounded. As another measure of the loss of GDoF, consider an arbitrary weighted sum of GDoF values, say $d(w_1,w_2)=w_1d_1+w_2d_2$. Let us denote the maximal value of $d(w_1,w_2)$ for the ZIC under finite precision CSIT as $d^{f.p.}_{\mbox{\tiny IC}}(w_1,w_2) = \max_{(d_1,d_2)\in \mathcal{D}_{\mbox{\tiny IC}}^{\tiny f.p.}}w_1d_1+w_2d_2$. Similarly, for perfect CSIT we have $d^{p}_{\mbox{\tiny IC}}(w_1,w_2) = \max_{(d_1,d_2)\in \mathcal{D}_{\mbox{\tiny IC}}^{\tiny p}}w_1d_1+w_2d_2$. Based on Lemma \ref{lemma:gdof_perfect} and Theorem \ref{thm:gdof}, it is not difficult to verify that the extremal value, 
\begin{align}
\inf_{(\alpha,\beta)\in\mathbb{R}_2^+}\inf_{(w_1,w_2)\in\mathbb{R}_2^+}\frac{d^{f.p.}_{\mbox{\tiny IC}}(w_1,w_2)}{d^{p}_{\mbox{\tiny IC}}(w_1,w_2)}&=\frac{1}{2}.
\end{align}
In other words, looking out from the origin, the GDoF region $\mathcal{D}_{\mbox{\tiny IC}}^{\tiny f.p.}$ is \emph{at least} half as large in every direction as the GDoF region $\mathcal{D}_{\mbox{\tiny IC}}^{\tiny p}$. It is also easy to see that the bound is asymptotically tight because, e.g., in Figure \ref{fig:gdof1}(a), if we let $\beta\rightarrow \alpha$ from below and $\alpha\rightarrow\infty$, then $\mathcal{D}_{\mbox{\tiny IC}}^{\tiny p}$ approaches an almost-rectangular shape (with vertices $(0,0), (\alpha,0), (\alpha-1,1), (0,1)$) and $\mathcal{D}_{\mbox{\tiny IC}}^{\tiny f.p.}$ approaches the lower left half triangle created by a diagonal-wise partitioning of the rectangle (with vertices $(0,0), (\alpha,0), (0,1)$). Looking out along the other diagonal (the ray that passes through the origin and $(\alpha-1,1)$) we note that $\mathcal{D}_{\mbox{\tiny IC}}^{\tiny f.p.}$ is (asymptotically) only  half as large as $\mathcal{D}_{\mbox{\tiny IC}}^{\tiny p}$. Note that this corresponds to $(w_1,w_2)=(\alpha-1,1)$.
\end{enumerate}

\subsection{Secure GDoF of the ZBC with Perfect and Finite Precision CSIT}
The ZBC setting is less of our focus because even under perfect CSIT, the ZBC does not require lattice codes or aggregate decoding and cancellation of jammed signals for secure communication. Instead, it achieves secure communication through zero-forcing, which is conceptually much more straightforward. Nevertheless, it is also not robust under channel uncertainty. Moreover, the loss of GDoF in the ZBC under finite precision CSIT is also implied, as a byproduct of our analysis of the ZIC. This is because, remarkably, our converse proofs for Regimes $1,2$ in Theorem \ref{thm:gdof} hold even if we allow full cooperation among transmitters. Therefore, as our final result let us present the GDoF characterization of the ZBC under both perfect and finite precision CSIT.

\begin{theorem}\label{thm:gdofBC}
	The secure GDoF region of the ZBC under perfect CSIT, $\mathcal{D}_{\mbox{\tiny BC}}^{\tiny p}$ and under finite precision CSIT, $\mathcal{D}_{\mbox{\tiny BC}}^{\tiny f.p.}$, are characterized as 
	\begin{align}
		\mathcal{D}_{\mbox{\tiny BC}}^{\tiny p} &= \lrbr{ (d_1, d_2,) \in \mathbb{R}^2_+ \middle | d_1 \leq \max\{ \alpha, \beta-1 \}, ~d_2 \leq (1 - (\beta-\alpha)^+)^+ },\\
		\mathcal{D}_{\mbox{\tiny BC}}^{\tiny f.p.}  &= 
		\begin{cases}
			\lrbr{(d_1, d_2) \in \mathbb{R}^2_+ \middle | d_1 \leq \beta-1, d_2=0} & \text{if } 1 < \beta \text{ and } \alpha \leq \beta-1,\\
			\mathcal{D}_{\mbox{\tiny IC}}^{f.p.} & \text{otherwise.}
		\end{cases}
	\end{align}
\end{theorem}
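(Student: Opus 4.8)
The plan is to treat the two CSIT assumptions separately, using two inclusions that come for free: every finite‑precision code is a channel‑ignoring perfect‑CSIT code, so $\mathcal{D}_{\mbox{\tiny BC}}^{\tiny f.p.}\subseteq\mathcal{D}_{\mbox{\tiny BC}}^{\tiny p}$; and a cooperative encoder can always disregard the cooperation and simulate the ZIC, so $\mathcal{D}_{\mbox{\tiny BC}}^{\tiny f.p.}\supseteq\mathcal{D}_{\mbox{\tiny IC}}^{\tiny f.p.}$. For $\mathcal{D}_{\mbox{\tiny BC}}^{\tiny p}$, achievability reduces to attaining the northeast corner of the claimed box (the rest of the box being dominated by it). When $\beta\le\alpha+1$ a precoding scheme works: exploiting cooperation and knowledge of $\chg$, Transmitter $1$ adds to its signal a $\chg$‑dependent term proportional to $-(G_{12}\sqrt{P^{\beta}}/G_{11}\sqrt{P^{\alpha}})\,\bX_2$ that cancels the image of $\bX_2$ at Receiver $1$, at a power cost of a factor $P^{(\beta-\alpha)^+}$; sending $W_2$ on $\bX_2$ at power $\Theta(P^{-(\beta-\alpha)^+})$ and carrying $W_1$ on the remaining $\Theta(1)$ power of $\bX_1$ then yields $(d_1,d_2)=(\alpha,(1-(\beta-\alpha)^+)^+)$ — $W_1$ is hidden from Receiver $2$ by the $Z$‑topology, $W_2$ is cancelled at Receiver $1$, and a standard stochastic‑encoding layer drives the residual leakage below any $\epsilon$. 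When $\beta>\alpha+1$ the corner is $(\beta-1,0)$, attained by simply routing $W_1$ through $\bX_2$ at power $\Theta(P^{-1})$, so that it is heard by Receiver $1$ at strength $\beta-1$ and buried below the noise floor at Receiver $2$.

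For the converse to $\mathcal{D}_{\mbox{\tiny BC}}^{\tiny p}$ I would use only Fano's inequality and the secrecy constraint, steps that never touch the encoding structure and so hold verbatim for the cooperative ZBC. From $nR_1\le I(W_1;\bY_1\mid\chg)-I(W_1;\bY_2\mid\chg)+n\epsilon_n\le I(W_1;\bY_1\mid\bY_2,\chg)+n\epsilon_n\le H(\bY_1\mid\bY_2,\chg)+n\epsilon_n$, subtracting the $\chg$‑dependent linear combination of $\bY_2$ that cancels the $\bX_2$ contribution in $\bY_1$ (legitimate because the $\chg$ variables are bounded away from $0$) leaves a signal of strength $\max\{\alpha,\beta-1\}$, whence $d_1\le\max\{\alpha,\beta-1\}$; symmetrically, expressing $\sqrt{P}\,\bX_2$ through $\bY_1$ and $\bX_1$ gives $H(\bY_2\mid\bY_1,\chg)\le(1-(\beta-\alpha)^+)^+\frac{n}{2}\log P+o(n\log P)$, whence $d_2\le(1-(\beta-\alpha)^+)^+$.

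Turning to finite precision CSIT: in the regime $1<\beta$, $\alpha\le\beta-1$ we have $\beta-1\ge\alpha$ and $(1-(\beta-\alpha)^+)^+=0$, so the perfect‑CSIT characterization just proved collapses to $\mathcal{D}_{\mbox{\tiny BC}}^{\tiny p}=\{d_1\le\beta-1,\,d_2=0\}$; together with $\mathcal{D}_{\mbox{\tiny BC}}^{\tiny f.p.}\subseteq\mathcal{D}_{\mbox{\tiny BC}}^{\tiny p}$ this gives the converse, and since the routing scheme above ($W_1$ through $\bX_2$ at power $\Theta(P^{-1})$, decoded point‑to‑point at Receiver $1$, below the noise floor at Receiver $2$) uses no alignment it is robust to finite precision CSIT and supplies the matching inner bound. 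In every other parameter regime — precisely Regimes $1$, $2$ and $4$ of Theorem \ref{thm:gdof} — I claim $\mathcal{D}_{\mbox{\tiny BC}}^{\tiny f.p.}=\mathcal{D}_{\mbox{\tiny IC}}^{\tiny f.p.}$: the inclusion $\mathcal{D}_{\mbox{\tiny BC}}^{\tiny f.p.}\supseteq\mathcal{D}_{\mbox{\tiny IC}}^{\tiny f.p.}$ is ZIC simulation, and for the reverse the key point is that the sum‑set/AI inequalities of \cite{Arash_Jafar_sumset} invoked in the converse of Theorem \ref{thm:gdof} for Regimes $1$ and $2$ only condition on $\bY_2$ and on the channel realizations and never use $I(\theta_1,W_1;\theta_2,W_2)=0$, so they already bound the cooperative region, while the Regime $4$ outer bound (in particular the sum constraint $d_1+d_2\le1+\alpha-\beta$) holds for the cooperative BC by \cite{Chan_Geng_Jafar_secureBC}.

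The delicate point, which I expect to be the main obstacle, is exactly this last transfer: one must re‑audit each conditioning step in the Regime $1$ and $2$ converse of Theorem \ref{thm:gdof} to confirm it is written in a cooperation‑agnostic form, so that the ZIC bound is \emph{literally} an outer bound for the ZBC and not merely a template for a fresh derivation. Everything else — the precoding scheme, the classical perfect‑CSIT converse, and the robustness of the routing scheme — is routine; and the discontinuity of $\mathcal{D}_{\mbox{\tiny BC}}^{\tiny f.p.}$ across the surface $\alpha=\beta-1$ then appears automatically from the coincidence $\max\{\alpha,\beta-1\}=\beta-1$ together with $(1-(\beta-\alpha)^+)^+=0$ there.
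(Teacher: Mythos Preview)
Your proposal is largely correct and tracks the paper's own argument closely: the perfect-CSIT achievability by zero-forcing/routing, the finite-precision Regime~3 handled by collapsing $\mathcal{D}_{\mbox{\tiny BC}}^{\tiny p}$ to a segment, and the transfer of the Regime~1,2 converse from ZIC to ZBC by cooperation-agnosticism are all exactly what the paper does. Two points deserve comment.

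First, for the perfect-CSIT converse you work directly in the Gaussian model, subtracting a $\chg$-measurable multiple of $\bY_2$ from $\bY_1$ (and vice versa). The paper instead casts into the deterministic model of Section~\ref{sec:det} and peels off top sub-sections $(\bbY_i)^{\min\{(\beta-\alpha)^+,1\}}$, invoking Lemma~\ref{lemma:noloss} and the secrecy constraint. Both routes are standard and reach the same box; yours is somewhat more classical.

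Second, there is a small gap in your treatment of Regime~4 under finite precision. You assert that the sum bound $d_1+d_2\le 1+\alpha-\beta$ for the cooperative ZBC follows from \cite{Chan_Geng_Jafar_secureBC} across all of Regime~4. The paper is more careful: it splits Regime~4 into Regime~4.1 ($\beta\le\alpha$), which is indeed covered by \cite{Chan_Geng_Jafar_secureBC}, and Regime~4.2 ($\alpha<\beta\le 1$), for which it supplies a separate short derivation. That derivation (equations \eqref{eq:BCfp-1}--\eqref{eq:BCfp-5}) writes $nR_1+nR_2\le\Hg(\bbY_1|W_2)-\Hg(\bbY_1|W_1)+\Hg(\bbY_2|W_1)-\Hg(\bbY_2|W_2)+no(\log\bar P)$ via the secrecy constraint, then bounds each difference of entropies by Lemma~\ref{lemma:ais}. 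This is a five-line argument, but it is not simply a citation, and your proposal does not supply it. Either verify that \cite{Chan_Geng_Jafar_secureBC} already treats the cooperative case in Regime~4.2, or fill in this step.

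Your identification of the ``delicate point'' --- re-auditing the Regime~1,2 converse for cooperation-agnosticism --- is exactly right, and the paper confirms it holds (the only place independence could sneak in, namely step~\eqref{eq:fano-5}, is handled via the secrecy constraint on $W_1$ at $\bY_2$ rather than via $I(W_1;\bX_2)=0$).
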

The proof of Theorem \ref{thm:gdofBC} is presented in Appendix \ref{sec:proofBC}. 

\section{Proof of Theorem \ref{thm:gdof}: Achievability} \label{sec:proof}
As noted previously, Regime $3$ in Theorem \ref{thm:gdof} is  trivial and Regime $4$ already follows from \cite{Chan_Geng_Jafar_secureBC}. Thus we only need the proof for Regimes 1 and 2. In this section we provide the proof of achievability which is quite straightforward.
%Note that throughout the proof we do not assume the codewords are independently encoded. 
%This means the SGDoF regions with the ZBC setting are the same as the ones with the ZIC setting for these two cases, as suggested in Corollary \ref{cor:gdofBC}.

%	\subsection{Achievability}\label{sec:achievability}
	For Regimes 1 and 2 it suffices to find schemes for the respective corner points and complete the regions by time-sharing. The tuple $(d_1,d_2)=(\alpha, 0)$ is one of the corner points for both cases, and is trivial. For Regime 1 it remains to find an achievable scheme for the other corner point, $(\alpha-\beta, 1)$. This is easily seen by modifying the scheme of Figure \ref{fig:gdof1}(b), such that Transmitter $1$   sends his desired message only  in the top $\alpha-\beta$ levels, i.e., and only a jamming signal (Gaussian noise) below that. Thus the noise floor at Receiver $1$ is elevated to strength $\beta$, i.e., as high as the interfering signal, which guarantees security. Meanwhile, we let Transmitter 2 transmit at full power. This creates a point-to-point channel for Transmitter 1 where the desired link to Receiver $1$ has $\alpha-\beta$ GDoF, and creates a wiretap channel for Transmitter $2$ where the desired link to Receiver 2 has 1 GDoF and the eavesdropper link to Receiver $1$ has 0 GDoF. Employing a Gaussian codebook in the first point-to-point channel and a wiretap codebook in the second, we achieve $\alpha-\beta$ SGDoF for User 1 and $1$ SGDoF for User 2.
	
	For Regime $2$ the other corner point is $(0, 1-\alpha+\beta)$. This is also easily achieved by modifying  the scheme of Figure \ref{fig:gdof2}(b), such that Transmitter $1$   sends only a jamming signal (Gaussian noise) with its full power. This raises the noise floor at Receiver $1$ to power level $\alpha$. As in Figure \ref{fig:gdof2}(b), we reduce the transmit power at Transmitter $2$ so that the top $\beta-\alpha$ levels are empty, i.e., instead of the unit power constraint, Transmitter $2$ only transmits with power $P^{-(\beta-\alpha)}$. 
	This creates a wiretap channel for Transmitter 2, where the desired link to Receiver 2 has $1+\alpha-\beta$ GDoF, and the eavesdropper link to Receiver 1 has 0 GDoF.
	A wiretap codebook achieves $1+\alpha-\beta$ SGDoF for User 2 and 0 for User 1.

	\section{Proof of Theorem \ref{thm:gdof}: Converse}\label{sec:converse}
	The  single user bound, $d_2\leq 1$, in Regime $1$ is trivial.
	Before presenting the proof of the weighted sum bounds, as preliminary background we need to introduce some definitions, sum-set inequalities, and a deterministic model, all of which originate in prior works on Aligned Images bounds.
	
	\subsection{Preliminaries from Prior Work} \label{sec:prelim}
	The following definitions are inherited from  \cite{Arash_Jafar, Arash_Jafar_sumset}.
	\subsubsection{Definitions}\label{sec:def}
	\begin{definition}[Power levels]			
		For $\lambda, P > 0$, define $\bP^\lambda \triangleq \lrfloor{\sqrt{P}^\lambda}$, and a set $\mathcal{X}_{\lambda}$ as 
		\begin{align}
			\mathcal{X}_{\lambda} = \left\{ 0, 1, 2, \cdots, \bP^\lambda -1 \right \},
		\end{align}	
		We refer to $P$ as \emph{power}, and $\lambda$ as \emph{power level} of $X\in\mathcal{X}_\lambda$. 
		For simplicity, we denote $\bP^1 = \bP$.
	\end{definition}
	\begin{definition} \label{def:subsection}
		For non-negative real numbers $X$, $\lambda_1$ and $\lambda_2$, where $\lambda_2 \geq \lambda_1 \geq 0$, we define a \emph{sub-section} of $X$ corresponding to \emph{interval} $(\lambda_1, \lambda_2)$, $(X)_{\lambda_1}^{\lambda_2}$, as
		\begin{align}
		(X)_{\lambda_1}^{\lambda_2} &\triangleq \lrfloor{ \frac{X - \bP^{\lambda_2} \lrfloor{\frac{X}{\bP^{\lambda_2}}}}{\bP^{\lambda_1}} }. \label{def:subsection-1}
		\end{align}
		We say that the  $(X)_{\lambda_1}^{\lambda_2}$  is a section of $X$ that sits at level $\lambda_1$, denoted as  $\ell \left( (X)_{\lambda_1}^{\lambda_2} \right) = \lambda_1 $, and has \emph{height} $\lambda_2-\lambda_1$, denoted as $\mathcal{T}\left( (X)_{\lambda_1}^{\lambda_2} \right) = \lambda_2 - \lambda_1$.
		Sub-sections $(X)_{\lambda_1}^{\lambda_2}$ and $(X)_{\lambda_1'}^{\lambda_2'}$ of $X \in \mathcal{X}_\lambda$ are \emph{disjoint} if intervals $(\lambda_1, \lambda_2)$ and $(\lambda_1', \lambda_2')$ are disjoint.  
	\end{definition}

		Figure \ref{fig:subsection} illustrates this partitioning of $X$ into various sub-sections. Similarly, for a set of non-negative real numbers $\boldsymbol{X} = \{X(t):~t\in[n]\}$, we define a sub-section $(\bX)_{\lambda_1}^{\lambda_2}$ as 
		\begin{align}
		(\bX)_{\lambda_1}^{\lambda_2} &\triangleq \{ (X(t))_{\lambda_1}^{\lambda_2}: ~ t \in [n]\}. \label{def:subsection-2}
		\end{align}
		Note that the same partitioning is applied to every element in the set.
		Levels and heights are similarly defined; i.e., $\ell\left ( (\bX)_{\lambda_1}^{\lambda_2} \right) = \lambda_1$, and $\mathcal{T}\left( (\bX)_{\lambda_1}^{\lambda_2} \right) = \lambda_2 - \lambda_1$.
		Sub-section sets $(\bX)_{\lambda_1}^{\lambda_2}$ and $(\bX)_{\lambda_1'}^{\lambda_2'}$ are disjoint if intervals  $(\lambda_1, \lambda_2)$ and $(\lambda_1', \lambda_2')$ are disjoint. 

	For $X \in \mathcal{X}_\lambda$ and $\lambda \geq \lambda_2 \geq \lambda_1 \geq 0$, sub-section $(X)^{\lambda_2}_{\lambda_1}$ can be loosely interpreted in terms of the $\bP$-ary expansion of $X$. 
	The $\bP$-ary expansion of $X$ is represented as $X = x_\lambda x_{\lambda-1} \cdots x_2 x_1$, which is equivalent to a string of length $\lambda$ in which each symbol $x_i \in \{0, 1, \cdots, \bP-1\}$.
	In this sense, what $(X)_{\lambda_1}^{\lambda_2}$ retrieves from $X$ is a sub-string $x_{\lambda_2} x_{\lambda_2-1} \cdots x_{\lambda_1+1}$ in the middle of $X$. 
	A case that appears frequently in this work is $\lambda_2 = \lambda$ and $\lambda_1 = \lambda-\mu$. 
	The corresponding sub-section $(X)^{\lambda}_{\lambda-\mu}$, denoted as $(X)^\mu$ and referred to as top-$\mu$ sub-section of $X$, retrieves from $X$ the leftmost length-$\mu$ sub-string $x_\lambda x_{\lambda-1} \cdots x_{\lambda-\mu+1}$ comprised of the first $\mu$ most significant symbols in $X.$ Similar to (\ref{def:subsection-2}), for a set of non-negative real numbers $\bX$ with each element in $\mathcal{X}_\lambda$, we define $(\bX)^{\mu} = \{(X)^\mu:~ X \in \bX  \}$.

		While this interpretation is helpful, the coarse understanding is an  oversimplification, as indeed all $\lambda, \lambda_1$ and $\lambda_2$ can take arbitrary non-negative real values. 
	Such partitioning is essentially a generalization of the original symbol partitioning with binary representations that appeared in the ADT model in  \cite{Avestimehr_Diggavi_Tse}. The generalization is needed because of our focus on finite precision CSIT.

	\begin{figure}[t]
		\centering
		
		\begin{tikzpicture}[scale=1.5]
			\def \d {0.5}
			\def \w {1}
			\def \h {2}
			
			%Ground and main signal pillar
			\draw [line width = 1] (0,0) -- ({{6*\d+4*\w}},0);
			
			% Markers
			\draw [dashed, line width = 0.5] (0, \h) -- ({2.75*\d+2*\w}, \h);
			\draw [dashed, line width = 0.5] ({\d+\w}, {0.6*\h}) -- ({2.75*\d+2*\w}, {0.6*\h});
			\draw [dashed, line width = 0.5] ({\d+\w}, {0.7*\h}) -- ({3.75*\d+3*\w}, {0.7*\h});
			\draw [dashed, line width = 0.5] ({\d+\w}, {0.2*\h}) -- ({3*\d+2*\w}, {0.2*\h});
			\draw [dashed, line width = 0.5] ({\d+\w}, {0.4*\h}) -- ({5*\d+4*\w}, {0.4*\h});
			
			\draw [<->, line width = 0.5] ({\d/2}, 0) -- ({\d/2}, \h) node [pos = 0.5, left] {$\lambda$};
			\draw [<->, line width = 0.5] ({2.5*\d+2*\w}, {0.6*\h}) -- ({2.5*\d+2*\w}, {\h}) node [pos = 0.5, right] {$\mu$};
			\draw [<->, line width = 0.5] ({3*\d+2.5*\w}, {0*\h}) -- ({3*\d+2.5*\w}, {0.2*\h}) node [pos = 0.5, left] {$\lambda_1$};
			\draw [<->, line width = 0.5] ({3.5*\d+3*\w}, {0*\h}) -- ({3.5*\d+3*\w}, {0.7*\h}) node [pos = 0.75, right] {$\lambda_2$};
			\draw [<->, line width = 0.5] ({4.5*\d + 4*\w}, {0*\h}) -- ({4.5*\d + 4*\w}, {0.4*\h}) node [pos = 0.5, right] {$\lambda_3$};
			
			% Signal pillar
			\draw [fill=red!60] (\d, 0) rectangle ({\d+\w}, \h) node [pos = 0.5] {$X$};
			
			% Subsections
			\draw [fill = red!10] ({2*\d+\w}, {0.6*\h}) rectangle ({2*\d + 2*\w}, \h) node [pos = 0.5] {$A_1$};
			\draw [fill = red!30] ({3*\d+2*\w}, {0.2*\h}) rectangle ({3*\d + 3*\w}, {0.7*\h}) node [pos = 0.5] {$A_2$};
			\draw [fill = red!50] ({4*\d+3*\w}, {0*\h}) rectangle ({4*\d + 4*\w}, {0.4*\h}) node [pos = 0.5] {$A_3$};
			
		\end{tikzpicture}

		\caption{\it \small An illustration of Definition \ref{def:subsection}. Sub-section $A_1 = (X)^\lambda_{\lambda-\mu}$ has level $\ell(A_1) = \lambda-\mu$ and height $\mathcal{T}(A_1) = \mu$. Sub-section $A_2 = (X)^{\lambda_2}_{\lambda_1}$ has level $\ell(A_2) = \lambda_1$ and height $\mathcal{T}(A_2) = \lambda_2 - \lambda_1$. Sub-section $A_3 = (X)^{\lambda_3}_{0}$ has level $\ell(A_3) = 0$ and height $\mathcal{T}(A_3) = \lambda_3$. Note that $A_1$ and $A_3$ are disjoint when $\lambda - \mu \geq \lambda_3$.}
		\label{fig:subsection}
	\end{figure}
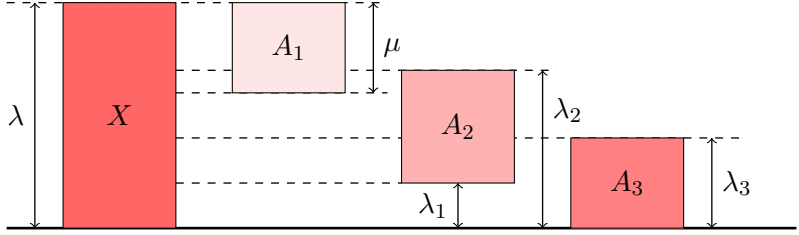
	%In words, $(X)_{\lambda_1}^{\lambda_2}$ retrieves the middle part of $X$ with power level between $\lambda_1$ and $\lambda_2$.
	%Since $(X)_{\lambda-\mu}^{\lambda}$ appears frequently in this work for $X \in \mathcal{X}_{\lambda}$ and $0 < \mu \leq \lambda$, we denote it as $(X)^{\mu}$. 
	%Intuitively, this represents the $\mu$ most significant symbols in the $\lrfloor{\bP}$-ary expansion of $X$.
	%Similarly, we define $(\boldsymbol{X})^{\mu}$ as $\{(X)^\mu:~X\in\boldsymbol{X}\}$ for a set of non-negative real numbers $\boldsymbol{X}$.
	
	\begin{definition}[Bounded density assumption] \label{def:bounded}
		We define  $\mathcal{G}$ as a set of real-valued random variables that satisfies the following conditions (collectively referred to as the bounded density assumption),
		\begin{enumerate}
			\item The magnitudes of all random variables in $\mathcal{G}$ are bounded away from infinity and zero; i.e., there exists a constant $\Delta>1$ such that $|g| \in \left(\frac{1}{\Delta}, \Delta \right)$ for all $g \in \mathcal{G}$.
			\item There exists a finite constant $f_\text{max} > 0$, such that for all finite disjoint subsets $\mathcal{G}_1$, $\mathcal{G}_2$ of $\mathcal{G}$, the joint probability density function of the random variables in $\mathcal{G}_1$, conditioned on the random variables in $\mathcal{G}_2$, exists and is bounded above by $f_\text{max}^{|\mathcal{G}_1|}$.
		\end{enumerate}
	\end{definition}

	\begin{definition}[Finite-precision linear combination]
		For $X_1 \in \mathcal{X}_{\eta_1}$ and $X_2 \in \mathcal{X}_{\eta_2}$, define $X_1 \boxplus_\chg X_2$ as 
		\begin{align}
		X_1 \boxplus_\chg X_2 & \triangleq \lrfloor{G_{1}X_1} + \lrfloor{G_{2}X_2},
		\end{align}
		where $G_{i}$ are distinct random variables in $\chg$ satisfying the bounded density assumption.
		For two sets of random variables of the same cardinality, $\bX_1 = \{X_1(t) \in \mathcal{X}_{\eta_1}:~t\in[n]\}$ and $\bX_2  = \{X_2(t)\in \mathcal{X}_{\eta_2}:~t\in[n]\}$ , we define $\bX_1 \boxplus_\chg \bX_2$ as 
		\begin{align}
		\bX_1 \boxplus_\chg \bX_2 &\triangleq \left \{ \lrfloor{G_{1}(t)X_1(t)} + \lrfloor{G_{2}(t)X_2(t)} :~ t\in[n] \right \},
		\end{align}
		where $G_{i}(t)$ are distinct random variables in $\chg$ satisfying the bounded density assumption.
		The subscript $\chg$ of operator $\boxplus$ may be omitted if no ambiguity arises.
	\end{definition}
	% definition of a sub-section of a non-negative real number can be generalized to a complex value yielded by a bounded linear combination $Y = X_1 \boxplus_\chg X_2$ by substituting $Y$ for $X$ in (\ref{def:subsection-1}); i.e., for $\lambda_2 \geq \lambda_1 \geq 0$, define $(Y)_{\lambda_1}^{\lambda_2}$ as
	%\begin{align}
		%(Y)_{\lambda_1}^{\lambda_2} \triangleq \lrfloor{\frac{Y - \bP^{\lambda_2} \lrfloor{ \frac{Y}{\bP^{\lambda_2}}}  }{\bP^{\lambda_1}}},
	%\end{align}
	%where for a complex value $z = x+jy$, $\lrfloor{z} = \lrfloor{x} + j \lrfloor{y}$.
	%If $X_1 \in \mathcal{X}_{\eta_1}$ and $X_2 \in \mathcal{X}_{\eta_2}$, then for $\mu$ satisfying $\max\{\eta_1, \eta_2\} \triangleq \eta \geq \mu \geq 0$, we define $(Y)^\mu = (Y)^{\eta}_{\eta-\mu}$.
	\subsubsection{Key Sumset Inequalities}\label{sec:sumset}
	Our proof leans heavily on the sum-set inequalities based on Aligned Image sets from \cite[Theorem 4]{Arash_Jafar_sumset}. 
	While \cite{Arash_Jafar_sumset} presents these sum-set inequalities in generalized forms, the following simplified forms of those inequalities, taken from \cite[Lemma 1]{Wang_Jafar_LimitedCoop}, will be useful for our purpose.

	\begin{lemma}\label{lemma:ais-sumset}
		Let $\mu, \nu >0$, $T(t) \in \mathcal{X}_\mu$, $U(t) \in \mathcal{X}_\nu$ for $t \in [n]$, and $\bT = \{T(t): t \in [n]\},\bU = \{U(t): t \in [n]\}$.
		Let $S_T$ and $S_U$ be sets of finitely many disjoint sub-sections respectively of $\bT$ and $\bU$, and let $\{\bA_1, \bA_2, \cdots, \bA_M\}$ be a subset of $S_T \cup S_U$. 
		Let $\bV = \bT \boxplus_\chg \bU$.
		Then
		\begin{align}
			\Hg\left( \bV \middle | \mathcal{W} \right) &\geq \Hg\left( \bA_1, \bA_2, \cdots, \bA_M \middle | \mathcal{W} \right) + \nologP, \label{lemma:ais-sumset-1}
		\end{align}
		where $\mathcal{W}$ is a set of random variables satisfying $ I(\mathcal{W}, \bT, \bU; \chg) = 0$, and the following constraints on the levels and heights of $\bA_i$ hold for $i = 2, 3, \cdots, M$:
		\begin{align}
			\ell( \bA_i ) &\geq \mathcal{T}( \bA_1) + \mathcal{T}( \bA_2) + \cdots + \mathcal{T}( \bA_{i-1}). \label{lemma:ais-sumset-2}
		\end{align}
	\end{lemma}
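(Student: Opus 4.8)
\emph{Approach.} Lemma \ref{lemma:ais-sumset} is a known result: it is the two-summand specialization of the general Aligned Images sum-set inequality of \cite[Theorem 4]{Arash_Jafar_sumset}, and it appears in exactly this form as \cite[Lemma 1]{Wang_Jafar_LimitedCoop}. The cleanest route is therefore to invoke \cite{Arash_Jafar_sumset} after checking that the present hypotheses --- two finite-height summands $\bT,\bU$, auxiliary variables $\groupW$ with $I(\groupW,\bT,\bU;\chg)=0$, and finitely many pairwise-disjoint sub-sections obeying the staircase condition \eqref{lemma:ais-sumset-2} --- match its assumptions once the remaining parameters of that theorem are set to trivial values. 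For completeness I would also record the underlying Aligned Images argument, which is short; the plan for it is as follows.

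\emph{Step 1: reduction to an aligned image set.} By the chain rule for mutual information and non-negativity of conditional entropy,
\begin{align}
\Hg(\bV \mid \groupW) \;\geq\; \Ig(\bV;\bA_1,\ldots,\bA_M \mid \groupW) \;=\; \Hg(\bA_1,\ldots,\bA_M \mid \groupW) - \Hg(\bA_1,\ldots,\bA_M \mid \bV,\groupW),
\end{align}
so it is enough to show $\Hg(\bA_1,\ldots,\bA_M \mid \bV,\groupW)\leq \nologP$. Since $\chg$ is independent of $(\groupW,\bT,\bU)$, conditioning on a realization of $\chg$ makes $\bT\boxplus_\chg\bU$ a deterministic map of $(\bT,\bU)$, and $\Hg(\bA_1,\ldots,\bA_M \mid \bV,\groupW)$ is then at most the expectation over $\chg$ of the logarithm of the size of the \emph{aligned image set} --- the set of tuples $(\bA_1,\ldots,\bA_M)$ that are consistent with the observed $\bV$ for the given realization of $\chg$. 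By Jensen's inequality it therefore suffices to show that the expected size of this set is $2^{\nologP}$.

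\emph{Step 2: bounding the aligned image set.} Here the staircase condition \eqref{lemma:ais-sumset-2} does the essential work: it forces $\bA_1,\bA_2,\ldots$ to occupy level ranges that can be peeled off from the bottom up, which supports an induction on $M$. The base case is the core Aligned Images estimate --- if two realizations of a single channel use produce the same $\lrfloor{G_1 T}+\lrfloor{G_2 U}$, then the corresponding integer combination of the differences of $T$ and $U$ must lie in an interval of length $O(1)$; under the bounded density assumption the probability of this coincidence, summed over the at most $\bP^{\,\mathcal{T}(\bA_1)}$ candidate values of $\bA_1$, is only polynomial in $\log\bP$, so the expected number of consistent values of $\bA_1$ is $\bP^{o(1)}$ per channel use. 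The inductive step observes that the appropriately scaled quotient of $\bV$ is again a finite-precision linear combination of truncations of $\bT$ and $\bU$, to which $\bA_2,\ldots,\bA_M$ --- whose levels, each reduced by $\mathcal{T}(\bA_1)$, still satisfy \eqref{lemma:ais-sumset-2} among themselves --- apply. Multiplying these per-symbol estimates across the $n$ channel uses, and controlling across channel uses the statistical dependence of the events in which the image set is large by the union-bound and averaging argument of \cite{Arash_Jafar,Arash_Jafar_sumset}, converts the $\bP^{o(1)}$-per-symbol bounds into the desired $2^{\nologP}$ bound.

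\emph{Main obstacle.} The difficulty lies entirely in the combinatorial core of Step 2, and two points need care. First, the $\bA_i$ may be interleaved between $\bT$ and $\bU$; it is the staircase condition that keeps the ambiguities contributed by distinct sub-sections from compounding, and it is exactly what lets $(\bA_1,\ldots,\bA_M)$ be recovered from $\bV$ level-by-level with only a sub-polynomial factor of slack at each level. Second, passing from a per-symbol $\bP^{o(1)}$ bound to a per-block $2^{\nologP}$ bound must contend with the dependence of these events across channel uses, which is handled by the same joint-realization union bound as in \cite{Arash_Jafar_sumset} and yields the $o(\log\bP)$-per-symbol slack that appears as $\nologP$ in \eqref{lemma:ais-sumset-1}. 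Since Theorem 4 of \cite{Arash_Jafar_sumset} already carries out this argument in full generality, in the paper I would present only the reduction in Step 1 and cite \cite{Arash_Jafar_sumset,Wang_Jafar_LimitedCoop} for the combinatorial estimate.
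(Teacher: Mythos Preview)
Your proposal is correct and matches the paper's treatment: the paper does not prove Lemma~\ref{lemma:ais-sumset} at all but simply imports it as a known result, citing \cite[Theorem~4]{Arash_Jafar_sumset} for the general sum-set inequality and \cite[Lemma~1]{Wang_Jafar_LimitedCoop} for this specific two-summand form. Your sketch of the underlying Aligned Images argument (the reduction to bounding the aligned image set, then the level-by-level peeling enabled by the staircase condition \eqref{lemma:ais-sumset-2}) goes beyond what the paper does and is a faithful outline of the proof in \cite{Arash_Jafar_sumset}, so including it would only add value.
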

	Constraint (\ref{lemma:ais-sumset-2}) in Lemma \ref{lemma:ais-sumset} has the following box-stacking interpretation.
	Let's consider the $t^\text{th}$ channel use only and drop the index for simplicity.
	We can imagine these random variable sub-sections as boxes with labels $A_1, A_2, \cdots, A_M$; box $A_i$ has height $\mathcal{T}(A_i)$ and originally sits on level $\ell(A_i)$ in either $T$ or $U$. 
	Then we stack the boxes in the index order of $A_1, A_2, \cdots, A_M$ from the ground. 
	Now in this stack  box $A_i$ sits above boxes $A_1, A_2, \cdots, A_{i-1}$, therefore it sits at level $\tilde{\ell}(A_i) = \mathcal{T}(A_1) + \mathcal{T}(A_2) + \cdots + \mathcal{T}(A_{i-1})$. 
	Constraint (\ref{lemma:ais-sumset-2}) says that the new level $\tilde{\ell}(A_i)$ cannot be higher than the level at which box $A_i$ originally sits in $T$ or $U$, which is $\ell(A_i)$. 
	In other words, constraint (\ref{lemma:ais-sumset-2}) is satisfied if, during retrieving these boxes in $T$ or $U$ and stacking them up from ground, there is no need to elevate  any  of them above their original level. 
	Note that while constraints (\ref{lemma:ais-sumset-2}) seem to fix the stacking order according to the indices of the sub-sections, on the right-hand-side of (\ref{lemma:ais-sumset-1}) the entropy of the sub-sections does not depend on the index ordering. 
	So one can arbitrarily rearrange the indices of the sub-sections and test the constraints in (\ref{lemma:ais-sumset-2}) with the the permuted ordering.
	In other words, if there exists a stacking order of these boxes with no need to lift up any of them during stacking, then the sum-set inequality (\ref{lemma:ais-sumset-1}) holds.
	Figure \ref{fig:stack} and \ref{fig:sumset} illustrate some ways to stack the boxes (sub-sections) which satisfy or violate constraints (\ref{lemma:ais-sumset-2}).
	
	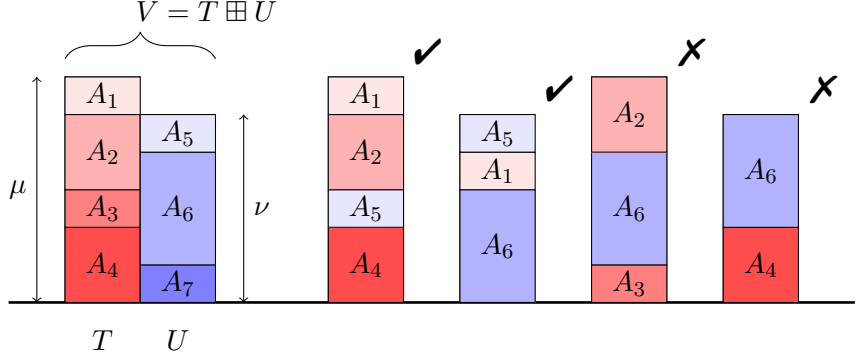
\begin{figure}[t]
		\centering
		\begin{tikzpicture}[scale = 1]
			\def \d {0.75}
			\def \w {1}
			\def \l {0.5}
			
			% Ground line
			\draw [line width = 1] (0,0) -- ({7*\d + 6*\w}, 0);
			
			% Node coordinates of each signal pilllars
			\coordinate (T) at ({\d}, 0);
			\coordinate (U) at ({\d+\w}, 0);
			\coordinate (1) at ({3*\d+2*\w}, 0);
			\coordinate (2) at ({4*\d+3*\w}, 0);
			\coordinate (3) at ({5*\d+4*\w}, 0);
			\coordinate (4) at ({6*\d+5*\w}, 0);
			
			% Linearly combined signal
			\draw (T) rectangle ($(T) + (\w, {6 *\l})$);
			\draw [fill = red!70] (T) rectangle ($(T) + (\w, {2 *\l})$) node [pos = 0.5] {$A_4$};
			\draw [fill = red!50]  ($(T) + (0, {2 *\l})$) rectangle ($(T) + (\w, {3 *\l})$) node [pos = 0.5] {$A_3$};
			\draw [fill = red!30]  ($(T) + (0, {3 *\l})$) rectangle ($(T) + (\w, {5 *\l})$) node [pos = 0.5] {$A_2$};
			\draw [fill = red!10]  ($(T) + (0, {5 *\l})$) rectangle ($(T) + (\w, {6 *\l})$) node [pos = 0.5] {$A_1$};
			\draw (U) rectangle ($(U) + (\w, {5 *\l})$);
			\draw [fill = blue!50] (U) rectangle ($(U) + (\w, {1 *\l})$) node [pos = 0.5] {$A_7$};
			\draw [fill = blue!30]  ($(U) + (0, {\l})$) rectangle ($(U) + (\w, {4*\l})$) node [pos = 0.5] {$A_6$};
			\draw [fill = blue!10]  ($(U) + (0, {4*\l})$) rectangle ($(U) + (\w, {5 *\l})$) node [pos = 0.5] {$A_5$};
			
			% Box stacking examples
			\draw (1) rectangle ($(1) + (\w, {6 *\l})$) node [above right] {\CheckmarkBold};
			\draw [fill = red!70]  ($(1) + (0, {0 *\l})$) rectangle ($(1) + (\w, {2 *\l})$) node [pos = 0.5] {$A_4$};
			\draw [fill = blue!10]  ($(1) + (0, {2 *\l})$) rectangle ($(1) + (\w, {3 *\l})$) node [pos = 0.5] {$A_5$};
			\draw [fill = red!30]  ($(1) + (0, {3 *\l})$) rectangle ($(1) + (\w, {5 *\l})$) node [pos = 0.5] {$A_2$};
			\draw [fill = red!10]  ($(1) + (0, {5 *\l})$) rectangle ($(1) + (\w, {6 *\l})$) node [pos = 0.5] {$A_1$};
			
			\draw (2) rectangle ($(2) + (\w, {5 *\l})$) node [above right] {\CheckmarkBold};
			\draw [fill = blue!30]  ($(2) + (0, {0 *\l})$) rectangle ($(2) + (\w, {3 *\l})$) node [pos = 0.5] {$A_6$};
			\draw [fill = red!10]  ($(2) + (0, {3 *\l})$) rectangle ($(2) + (\w, {4 *\l})$) node [pos = 0.5] {$A_1$};
			\draw [fill = blue!10]  ($(2) + (0, {4 *\l})$) rectangle ($(2) + (\w, {5 *\l})$) node [pos = 0.5] {$A_5$};
			
			\draw (3) rectangle ($(3) + (\w, {6 *\l})$) node [above right] {\XSolidBrush};
			\draw [fill = red!50]  ($(3) + (0, {0 *\l})$) rectangle ($(3) + (\w, {1 *\l})$) node [pos = 0.5] {$A_3$};
			\draw [fill = blue!30]  ($(3) + (0, {1 *\l})$) rectangle ($(3) + (\w, {4 *\l})$) node [pos = 0.5] {$A_6$};
			\draw [fill = red!30]  ($(3) + (0, {4 *\l})$) rectangle ($(3) + (\w, {6 *\l})$) node [pos = 0.5] {$A_2$};
			
			\draw (4) rectangle ($(4) + (\w, {5 *\l})$) node [above right] {\XSolidBrush};	
			\draw [fill = red!70]  ($(4) + (0, {0 *\l})$) rectangle ($(4) + (\w, {2 *\l})$) node [pos = 0.5] {$A_4$};
			\draw [fill = blue!30]  ($(4) + (0, {2 *\l})$) rectangle ($(4) + (\w, {5 *\l})$) node [pos = 0.5] {$A_6$};
			
			% Markers
			\node at ($(T) + ({\w/2}, -0.5)$) {$T$};
			\node at ($(U) + ({\w/2}, -0.5)$) {$U$};
			
			\draw [decorate,decoration={brace,amplitude=10pt ,raise=4pt},yshift=0pt]
			($(T) + (0, {6*\l + 0.1})$) -- ($(U) + (\w, {6*\l + 0.1})$) node [black,midway, xshift = 25pt, yshift = 22pt] {$V = T \boxplus U$};
			
			\draw [<->] ({\d/2}, 0) -- ({\d/2}, {6*\l}) node [pos = 0.5, left] {$\mu$};
			\draw [<->] ({2*\w + 1.5*\d}, 0) -- ({2*\w + 1.5*\d}, {5*\l}) node [pos = 0.5, right] {$\nu$};
			
		\end{tikzpicture}
		\caption{\it \small An illustration of the box-stacking interpretation of Lemma \ref{lemma:ais-sumset}. 
			The bounds $\Hg(V|\mathcal{W}) \geq \Hg(A_1, A_2, A_4, A_5|\mathcal{W})$ and $\Hg(V|\mathcal{W}) \geq \Hg(A_1, A_5, A_6|\mathcal{W})$ are implied by Lemma \ref{lemma:ais-sumset} in the GDoF sense because the boxes appearing in these inequalities can be stacked without elevating any of them above their original levels in $T$ or $U$, as illustrated in the  two stacks marked with a \CheckmarkBold. 		On the other hand, Lemma \ref{lemma:ais-sumset} implies neither the bound $\Hg(V|\mathcal{W}) \geq \Hg(A_2, A_3, A_6|\mathcal{W})$ nor $\Hg(V|\mathcal{W}) \geq \Hg(A_4, A_6|\mathcal{W})$, because there is no way to stack the boxes appearing in these inequalities without elevating some of them above their original level in $T$ or $U$, as shown in the two stacks marked with \XSolidBrush. }
		\label{fig:stack}
	\end{figure}

	\subsubsection{Deterministic Model} \label{sec:det}
	To facilitate the use of Aligned Images bounds, we define a deterministic model as in \cite{Arash_Jafar}. In this deterministic model, the inputs are
	\begin{align}
	A(t) &= \lrfloor{\bP^{\alpha} X_1(t) } \mod \bP^{\alpha},\\
	B(t) &= \lrfloor{\bP^{\max\{1, \beta \}} X_2(t)} \mod \bP^{\max\{ 1, \beta \}},
	\end{align}
	and the outputs are
	%\footnote{For a complex number $z = x+ jy$ and an integer $N$, we define $z \mod N \triangleq \left(x \mod N\right) + j \left(y \mod N \right)$.}
	\begin{align}
	\overline{Y}_1(t) &= \lrfloor{G_{11}(t) A(t)} + \lrfloor{ G_{12}(t)  \bP^{-(1-\beta)^+} B(t)  } , \label{eq:modeldet1}\\
	\overline{Y}_2(t) &= \lrfloor{G_{22}(t) \bP^{ -(\beta-1)^+ } B(t) }.\label{eq:modeldet2}
	\end{align}
	Note that $A(t) \in \mathcal{X}_{\alpha}$ and $B(t) \in \mathcal{X}_{\max\{1, \beta\}}$.
	Let $\bA = \{ A(t):~ t \in [n] \}$, and $\bB = \{ B(t):~ t \in [n] \}$, and $\bbY_i = \{ \overline{Y}_i(t):~ t \in [n] \}$ for $i = 1,2$.
	It can be shown that the GDoF of the Gaussian model are bounded above by the GDoF of the deterministic model, accounting for both decoding and secrecy constraints, as described by the following lemma.
	
	\begin{lemma} \label{lemma:noloss}
		\begin{align}
		\Ig(W_i; \bY_i) &\leq \Ig(W_i; \bbY_i) + no(\log P) && \forall i = 1,2, \label{eq:noloss-1}\\
		\Ig(W_j; \bbY_i) &\leq \Ig(W_j; \bY_i) + no(\log P) && \forall i, j = 1,2, i\neq j.\label{eq:noloss-2} 
		\end{align}
	\end{lemma}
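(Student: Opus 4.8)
Lemma~\ref{lemma:noloss} is the customary statement that the deterministic model \eqref{eq:modeldet1}--\eqref{eq:modeldet2} reproduces the Gaussian model up to $no(\log P)$ per channel use, and I would establish it along the lines of the analogous reductions in \cite{Arash_Jafar}. The two inequalities point in opposite directions on purpose: \eqref{eq:noloss-1} is what makes a Fano (decoding) bound proved in the deterministic model dominate its Gaussian counterpart, whereas \eqref{eq:noloss-2} is what lets a secrecy constraint satisfied in the Gaussian model be inherited by the deterministic one. Both follow once we show that, per channel use, either output is recoverable from the other up to an additive nuisance of $o(\log P)$ entropy. Since the inputs obey only an average power constraint, I would first restrict to the high-probability event $E$ on which $|X_1(t)|,|X_2(t)|,|Z_i(t)|\le (\log P)^2$ for every $t$; by Markov's inequality and the Gaussian tail, $\Pr(E^c)$ decays faster than any fixed power of $1/\log P$, while every entropy or mutual information appearing below is at most $O(n\log P)$ even when conditioned on $E^c$ (all relevant second moments are $O(P^{\max\{1,\alpha,\beta\}})$), so conditioning on $\mathbb 1_E$ perturbs each quantity by only $no(\log P)$. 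All remaining computations are on $E$.

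\textbf{Per-use comparison.} On $E$, unwinding the $\lfloor\cdot\rfloor$ and modular reductions in \eqref{eq:modeldet1} and using $0\le \sqrt{P^\lambda}-\bP^\lambda<1$ together with $|G_{ij}|<\Delta$, one obtains
\begin{align*}
\overline{Y}_1(t) &= G_{11}(t)\bP^{\alpha}X_1(t)+G_{12}(t)\bP^{\beta}X_2(t)+\sigma_1(t)+\sigma_2(t)+O\big((\log P)^2\big),\\
Y_1(t) &= G_{11}(t)\bP^{\alpha}X_1(t)+G_{12}(t)\bP^{\beta}X_2(t)+O\big((\log P)^2\big),
\end{align*}
where $\sigma_1(t),\sigma_2(t)$ are integers produced by the two modular reductions which, conditioned on $\chg$, each range over only $\mathrm{poly}(\log P)$ values; the analogous identities hold for $(\overline{Y}_2(t),Y_2(t))$ with a single such $\sigma_2(t)$. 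Subtracting, $\overline{Y}_i(t)$ differs from $Y_i(t)$ by a $\chg$-measurable integer taking $\mathrm{poly}(\log P)$ values plus an $O((\log P)^2)$ additive term; reading the same relation in reverse, $Y_i(t)$ conditioned on $(\bbY_i,\chg)$ is confined to a union of $\mathrm{poly}(\log P)$ intervals each of length $O((\log P)^2)$. Since a discrete variable on $\mathrm{poly}(\log P)$ values, respectively a real variable supported on a set of measure $\mathrm{poly}(\log P)$, has entropy $O(\log\log P)$, summing over $t$ yields
\begin{align*}
\Hg\big(\bbY_i\bigm|\bY_i\big)\le no(\log P),\qquad h\big(\bY_i\bigm|\bbY_i,\chg\big)\le no(\log P).
\end{align*}

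\textbf{Assembly.} Since $\bbY_i$ is a deterministic function of $(\bX_1,\bX_2,\chg)$ and all noises are independent of the messages, the chain rule gives, for \eqref{eq:noloss-2},
\begin{align*}
\Ig(W_j;\bbY_i)\le \Ig(W_j;\bbY_i,\bY_i)=\Ig(W_j;\bY_i)+\Ig(W_j;\bbY_i\mid\bY_i)\le \Ig(W_j;\bY_i)+\Hg(\bbY_i\mid\bY_i),
\end{align*}
which is $\Ig(W_j;\bY_i)+no(\log P)$ by the previous step. For \eqref{eq:noloss-1}, expand $\Ig(W_i;\bY_i)\le \Ig(W_i;\bbY_i)+\Ig(W_i;\bY_i\mid\bbY_i)$ and invoke the Markov chain $W_i-(\bX_1,\bX_2,\chg)-(\bbY_i,\bY_i)$ together with $\bY_i=(\text{function of }\bX_1,\bX_2,\chg)+Z_i$ to get
\begin{align*}
\Ig(W_i;\bY_i\mid\bbY_i)\le I(\bX_1,\bX_2;\bY_i\mid\bbY_i,\chg)=h(\bY_i\mid\bbY_i,\chg)-h(Z_i)\le no(\log P).
\end{align*}
Re-introducing the $no(\log P)$ truncation error establishes both inequalities (the same argument applies verbatim in the ZBC, where $(\bX_1,\bX_2)$ is instead a function of $(W_1,W_2,\theta_1,\theta_2)$).

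\textbf{Where the work is.} Nothing here is conceptually deep; the two places that demand care are making the truncation quantitative enough that $\Pr(E^c)$ times the crude $O(n\log P)$ bounds is genuinely $o(n\log P)$, and the bookkeeping around the modular reductions in the per-use comparison --- it is precisely the wrap-around of those reductions (not the floors, nor the $\sqrt{P^\lambda}$-versus-$\bP^\lambda$ gaps) that blocks a clean single-interval statement and forces the $\mathrm{poly}(\log P)$ ambiguity, which is nonetheless invisible at the $o(\log P)$ resolution we work at.
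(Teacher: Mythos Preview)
Your overall architecture is the standard one and matches what the paper invokes (the paper does not give its own proof but cites \cite{Chan_Geng_Jafar_secureBC}, which in turn follows \cite{Arash_Jafar}): establish $H_\chg(\bbY_i\mid\bY_i)\le no(\log P)$ and $h(\bY_i\mid\bbY_i,\chg)-h(\bZ_i)\le no(\log P)$, then assemble via the chain rule and the Markov chain $W_i\to(\bX_1,\bX_2,\chg)\to(\bY_i,\bbY_i)$. The per-use comparison on the good event and the assembly step are both fine.

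The gap is exactly where you flag that care is needed, and your proposed truncation does not close it. Your event $E$ demands $|X_i(t)|\le(\log P)^2$ for \emph{every} $t\in[n]$, but Markov's inequality combined with the \emph{average} power constraint $\tfrac1n\sum_t\mathbb E[X_i(t)^2]\le 1$ only yields $\sum_t\Pr\bigl(|X_i(t)|>(\log P)^2\bigr)\le n/(\log P)^4$; the union bound then gives $\Pr(E^c)=O(n/(\log P)^4)$, which does \emph{not} decay independently of the blocklength. Multiplying this by your crude $O(n\log P)$ bound on the $E^c$-conditioned quantities leaves a residual $O(n^2/(\log P)^3)$, and in a converse argument you have no control over how $n$ scales with $P$, so this is not $no(\log P)$. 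The fix implicit in \cite{Arash_Jafar} is to work one channel use at a time: bound $H_\chg(\bbY_i\mid\bY_i)\le\sum_t H_\chg(\overline Y_i(t)\mid Y_i(t))$ and introduce a \emph{per-symbol} indicator $\mathbb 1_{E_t}$. Then $\sum_t\Pr(E_t^c)\le 2n/(\log P)^4$ by the average power constraint, and since each $H_\chg(\overline Y_i(t)\mid Y_i(t),E_t^c)\le\log|\mathrm{supp}\,\overline Y_i(t)|=O(\log P)$, the total $E_t^c$-contribution is $O(n/(\log P)^3)=no(\log P)$. The same per-symbol treatment handles the differential-entropy direction. Your global truncation, as written, does not.
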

	
	The proof of Lemma \ref{lemma:noloss} is identical to that of Lemma 5.1 in \cite{Chan_Geng_Jafar_secureBC}.

	\subsection{Useful Lemmas} \label{sec:lemmas}
	
	With the preliminaries in place, we now proceed to the task of proving the converse for Theorem \ref{thm:gdof}, starting with the following lemmas. The first lemma is a straightforward consequence of the secrecy constraint (\ref{eq:noloss-2}).
	
	\begin{lemma} \label{lemma:secrecy}
		Let $\umu = (\beta -\alpha)^+$ and $\lmu = (\alpha -\beta)^+$. Then we have,
		\begin{align}
		\Ig(W_2; \bbY_1, W_1) &= \nologP, \label{lemma:secrecy-1}\\
		\Ig( \bbY_1; W_2 | W_1, \bAmu, \bBmu ) &= \nologP, \label{lemma:secrecy-2} \\
		\Ig( W_2; W_1, \bAmu, \bBmu ) &= \nologP. \label{lemma:secrecy-3}
		\end{align}
	\end{lemma}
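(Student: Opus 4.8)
The plan is to reduce all three identities to a single master estimate,
\[
\Ig(W_2;\bbY_1,W_1,\bAmu,\bBmu) = \nologP ,
\]
after which \eqref{lemma:secrecy-1}, \eqref{lemma:secrecy-2} and \eqref{lemma:secrecy-3} each follow by one line of elementary mutual-information monotonicity (all quantities involved being nonnegative, so upper bounds suffice): $\Ig(W_2;\bbY_1,W_1) \le \Ig(W_2;\bbY_1,W_1,\bAmu,\bBmu)$ gives \eqref{lemma:secrecy-1}; $\Ig(W_2;W_1,\bAmu,\bBmu) \le \Ig(W_2;\bbY_1,W_1,\bAmu,\bBmu)$ gives \eqref{lemma:secrecy-3}; and $\Ig(\bbY_1;W_2|W_1,\bAmu,\bBmu) \le \Ig(\bbY_1,W_1,\bAmu,\bBmu;W_2) = \Ig(W_2;\bbY_1,W_1,\bAmu,\bBmu)$ gives \eqref{lemma:secrecy-2}.

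To obtain the master estimate I would assemble three ingredient bounds. (a) From the secrecy constraint \eqref{eq:secrecy} (after the standard $\epsilon\to 0$ passage) and Lemma \ref{lemma:noloss}, specifically \eqref{eq:noloss-2}, $\Ig(W_2;\bbY_1) \le \Ig(W_2;\bY_1) + \nologP = \nologP$. (b) From Fano's inequality applied to the reliable decoding of $W_1$ at Receiver $1$, together with \eqref{eq:noloss-1}, $\Hg(W_1|\bbY_1) \le \Hg(W_1|\bY_1) + \nologP = \nologP$. (c) A carry estimate, $\Hg(\bAmu|\bbY_1) = \nologP$ and $\Hg(\bBmu|\bbY_1) = \nologP$. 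Granting (a)--(c), the master estimate is immediate from the chain rule, subadditivity of entropy, and "conditioning reduces entropy":
\begin{align}
\Ig(W_2;\bbY_1,W_1,\bAmu,\bBmu)
&= \Ig(W_2;\bbY_1) + \Ig(W_2;W_1,\bAmu,\bBmu|\bbY_1) \nonumber\\
&\le \Ig(W_2;\bbY_1) + \Hg(W_1|\bbY_1) + \Hg(\bAmu|\bbY_1) + \Hg(\bBmu|\bbY_1) = \nologP .\nonumber
\end{align}

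The only step with real content is (c), which is where I expect the work to go, though even it is routine in the AI-bounds literature. Recall $\overline{Y}_1(t) = \lrfloor{G_{11}(t)A(t)} + \lrfloor{G_{12}(t)\bP^{-(1-\beta)^+}B(t)}$ with $A(t)\in\mathcal{X}_\alpha$ and $B(t)\in\mathcal{X}_{\max\{1,\beta\}}$, and that $\bAmu=(\bA)^{(\alpha-\beta)^+}$, $\bBmu=(\bB)^{(\beta-\alpha)^+}$, so at most one of the two top-sections is non-trivial. In the case $\alpha>\beta$, $\bBmu$ is the empty top-$0$ section (so $\Hg(\bBmu|\bbY_1)=0$) while $\bAmu=\{\lrfloor{A(t)/\bP^\beta}\}$; since every channel coefficient has magnitude in $(1/\Delta,\Delta)$ by the bounded density assumption (Definition \ref{def:bounded}), the interference term $\lrfloor{G_{12}(t)\bP^{-(1-\beta)^+}B(t)}$ is nonnegative and at most $\Delta\bP^{\beta}$, so conditioned on $\overline{Y}_1(t)$ and on $G_{11}(t),G_{12}(t)$ (all of which are in $\chg$) the real number $G_{11}(t)A(t)$, hence $A(t)$, is confined to an interval of length $O(\bP^{\beta})$; therefore $\lrfloor{A(t)/\bP^{\beta}}$ has only $O(1)$ possible values and $H\big((A(t))^{\alpha-\beta}|\bbY_1,\chg\big)=O(1)$. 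Summing over $t\in[n]$ yields $\Hg(\bAmu|\bbY_1)=nO(1)=\nologP$; the case $\beta>\alpha$ is symmetric, with $\bA,\bB$ interchanged and $B(t)$ confined to an interval of length $O(\bP^{\alpha})$. (In Regimes $1$ and $2$, the only regimes in which this lemma is invoked, one has $(1-\beta)^+=0$, which lightens the arithmetic but is not essential.) Apart from this carry bookkeeping, the proof is just the short chain of entropy inequalities above, so I anticipate no genuine obstacle.
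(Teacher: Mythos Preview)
Your proposal is correct and takes essentially the same approach as the paper. Both proofs assemble the same three ingredients---the secrecy constraint transferred to the deterministic model via Lemma~\ref{lemma:noloss}, Fano's inequality for $W_1$, and the fact that $\bAmu,\bBmu$ are recoverable from $\bbY_1$ up to $O(1)$ choices---and combine them through the chain rule; the paper proves \eqref{lemma:secrecy-1} first and then appends $\bAmu,\bBmu$ ``for free'' to deduce \eqref{lemma:secrecy-2}--\eqref{lemma:secrecy-3}, whereas you go directly to the master estimate and peel off all three, but the logical content is identical. One tiny slip: the interference term $\lrfloor{G_{12}(t)B(t)}$ need not be nonnegative since $G_{12}(t)$ may be negative, but your interval argument goes through unchanged once you use $|G_{12}(t)B(t)|<\Delta\bP^{\beta}$.
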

	\begin{proof}
		\begin{align}
		\Ig(W_2; \bbY_1, W_1) &= \Ig(W_2; \bbY_1) + \Ig(W_2; W_1 | \bbY_1) \label{eq:secrecy-1}\\
		&\leq  \Ig(W_2; \bbY_1) + \Hg(W_1 | \bbY_1) \label{eq:secrecy-2}\\
		&\leq \Ig(W_2; \bY_1) + \Hg(W_1 | \bY_1) + \nologP  \label{eq:secrecy-3}\\
		&= \nologP. \label{eq:secrecy-4}
		\end{align}
		We apply the chain rule to get (\ref{eq:secrecy-1}), and the definition of mutual information to obtain (\ref{eq:secrecy-2}). 
		Next, we obtain (\ref{eq:secrecy-3}) by applying (\ref{eq:noloss-1}) and (\ref{eq:noloss-2}). 
		Finally, we apply the secrecy constraint (\ref{eq:secrecy}) and Fano's inequality 
		%($H(W_1| \bY_1) < n\epsilon$, or namely the decoding constraint) 
		to obtain (\ref{eq:secrecy-4}).
		
		To show equality (\ref{lemma:secrecy-2}) and (\ref{lemma:secrecy-3}), we note that from $\bbY_1$ one can obtain $\bAmu$ and $\bBmu$, and then apply the chain rule; more specifically,
		\begin{align}
			\nologP &= \Ig(W_2; \bbY_1, W_1)\\ 
			&= \Ig(W_2; \bbY_1, W_1, \bAmu, \bBmu)\\ 
			&=  \Ig(W_2; W_1, \bAmu, \bBmu) + \Ig(W_2; \bbY_1| W_1, \bAmu, \bBmu).
		\end{align}
		Equality (\ref{lemma:secrecy-2}) and (\ref{lemma:secrecy-3}) thus hold as mutual information is non-negative.
	\end{proof}

	The following lemma bounds from above the entropy difference, in the GDoF sense, of finite-precision linear combinations of random variables in terms of their power levels. 
	It is adapted from Lemma 1 of \cite{Arash_Jafar_mimoSymIC} and hence its proof is omitted.
	\begin{lemma}\label{lemma:ais}
		Let $\mu = \max_{i=1,2}\{ \mu_i\}$ and $\nu = \max_{i=1,2}\{ \nu_i\}$, where $\mu_i, \nu_i >0, i = 1,2$. 
		Let $T(t)\in \mathcal{X}_{\nu}$ and $U(t) \in \mathcal{X}_{\mu}$ for $t \in [n]$; $\bT = \{T(t):~t\in[n]\}$ and $\bU = \{ U(t):~ t \in [n] \}$. 
		Let $\bV_i =  (\bT)^{\mu_i} \boxplus_{\chg_i} (\bU)^{\nu_i}$, where $i=1,2$, and $\chg = \chg_1 \cup \chg_2$ is a set of random variables satisfying the bounded density assumption.
		Then
		\begin{align}
			\Hg(\bV_1 | \mathcal{W}) - \Hg(\bV_2 | \mathcal{W}) \leq \max \{ \mu_1 - \mu_2,  \nu_1 - \nu_2 \}^+ \log P + \nologP,
		\end{align}
		where $\mathcal{W}$ is a set of random variables satisfying $I(\mathcal{W}, \bT, \bU; \chg) = 0$. 
	\end{lemma}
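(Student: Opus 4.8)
\emph{Proof plan.} The right-hand side consists of two pieces that I would produce by two different mechanisms: the term $\max\{\mu_1-\mu_2,\nu_1-\nu_2\}^+\log P$ will come from a chain-rule argument that strips off the ``excess'' most significant symbols of the two inputs of $\bV_1$, and the $\nologP$ slack will come from the Aligned Images bounds --- Lemma~\ref{lemma:ais-sumset} together with the underlying aligned-image counting argument of \cite{Arash_Jafar}. Since interchanging the roles of $\bT$ and $\bU$ (and simultaneously $\mu_i\leftrightarrow\nu_i$) leaves both the hypotheses and the claimed bound invariant, the plan is to first assume, without loss of generality, that $\mu_1-\mu_2\geq\nu_1-\nu_2$, so that the target becomes $(\mu_1-\mu_2)^+\log P+\nologP$. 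Note that this ordering also forces $(\nu_1-\nu_2)^+\leq(\mu_1-\mu_2)^+$, which will be needed so that the charge stays at $(\mu_1-\mu_2)^+\log P$ rather than at the sum of the two excesses.

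\emph{Peeling step.} Let $\boldsymbol{R}_T$ be the sub-section consisting of the top $(\mu_1-\mu_2)^+$ symbols of $(\bT)^{\mu_1}$, and $\boldsymbol{R}_U$ the sub-section consisting of the top $(\nu_1-\nu_2)^+$ symbols of $(\bU)^{\nu_1}$ (each is empty when the corresponding excess is $0$). The chain rule gives
\begin{align}
\Hg(\bV_1\mid\mathcal{W})&\leq\Hg(\boldsymbol{R}_T,\boldsymbol{R}_U\mid\mathcal{W})+\Hg(\bV_1\mid\boldsymbol{R}_T,\boldsymbol{R}_U,\mathcal{W}).
\end{align}
Every symbol of a sub-section lies in $\{0,\dots,\bP-1\}$ and $\log\bP\leq\tfrac{1}{2}\log P$, so the first term is at most $n\big[(\mu_1-\mu_2)^++(\nu_1-\nu_2)^+\big]\tfrac{1}{2}\log P\leq n(\mu_1-\mu_2)^+\log P$, using the ordering above. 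For the second term, once we further condition on $\chg$, the contribution of the peeled top symbols to $\bV_1=\lrfloor{G(\bT)^{\mu_1}}+\lrfloor{G'(\bU)^{\nu_1}}$ (with $G,G'\in\chg_1$) becomes a known constant, so that $\bV_1$ equals that constant plus $(\bT)^{\mu_1'}\boxplus_{\chg_1}(\bU)^{\nu_1'}$ --- where $\mu_1'=\min\{\mu_1,\mu_2\}$ and $\nu_1'=\min\{\nu_1,\nu_2\}$ --- up to a bounded, hence $o(\log P)$, carry per channel use; therefore $\Hg(\bV_1\mid\boldsymbol{R}_T,\boldsymbol{R}_U,\mathcal{W})\leq\Hg\big((\bT)^{\mu_1'}\boxplus_{\chg_1}(\bU)^{\nu_1'}\mid\mathcal{W}\big)+\nologP$.

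\emph{Reduced comparison, and the main obstacle.} It then remains to prove $\Hg\big((\bT)^{\mu_1'}\boxplus_{\chg_1}(\bU)^{\nu_1'}\mid\mathcal{W}\big)\leq\Hg(\bV_2\mid\mathcal{W})+\nologP$ with $\mu_1'\leq\mu_2$ and $\nu_1'\leq\nu_2$; intuitively, a finite-precision combination cannot carry more GDoF-entropy than the combination of the \emph{same} sources with heights only enlarged, even after the channel coefficients are replaced. When $\mu_1'+\nu_1'\leq\max\{\mu_2,\nu_2\}$ this is immediate from Lemma~\ref{lemma:ais-sumset}: regarded as disjoint top sub-sections of $(\bT)^{\mu_2}$ and $(\bU)^{\nu_2}$, the pieces $(\bT)^{\mu_1'}$ and $(\bU)^{\nu_1'}$ can be box-stacked without lifting either one, so $\Hg(\bV_2\mid\mathcal{W})\geq\Hg\big((\bT)^{\mu_1'},(\bU)^{\nu_1'}\mid\mathcal{W}\big)+\nologP\geq\Hg\big((\bT)^{\mu_1'}\boxplus_{\chg_1}(\bU)^{\nu_1'}\mid\mathcal{W}\big)+\nologP$, the last inequality because $(\bT)^{\mu_1'}\boxplus_{\chg_1}(\bU)^{\nu_1'}$ is a function of $(\bT)^{\mu_1'}$, $(\bU)^{\nu_1'}$ and $\chg$. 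The complementary sub-case $\mu_1'+\nu_1'>\max\{\mu_2,\nu_2\}$ --- which in particular contains the pure coefficient-change situation $\mu_1'=\mu_2$, $\nu_1'=\nu_2$ --- is \emph{not} reachable by a single application of the box-stacking inequality, and this is the step I expect to be the crux. Here the plan is to invoke \cite[Lemma~1]{Arash_Jafar_mimoSymIC} and the aligned-image counting of \cite{Arash_Jafar}: the bounded-density assumption caps the expected number of aligned images of $(\bT)^{\mu_2}\boxplus_{\chg_2}(\bU)^{\nu_2}$ that $(\bT)^{\mu_1'}\boxplus_{\chg_1}(\bU)^{\nu_1'}$ resolves at a given resolution, which is precisely the statement that the GDoF-entropy of a finite-precision combination is robust, up to $\nologP$, both to coefficient changes and to upward changes of the source heights. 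A secondary, purely routine technicality will be the carry bookkeeping in the peeling step; it does not affect the GDoF-scale conclusion.
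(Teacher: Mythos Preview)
The paper does not prove this lemma at all: it states that the result ``is adapted from Lemma~1 of \cite{Arash_Jafar_mimoSymIC} and hence its proof is omitted.'' Your proposal, at precisely the step you flag as the crux --- the complementary sub-case $\mu_1'+\nu_1'>\max\{\mu_2,\nu_2\}$, which in particular contains the pure coefficient-change situation $\mu_1'=\mu_2$, $\nu_1'=\nu_2$ --- also falls back on \cite[Lemma~1]{Arash_Jafar_mimoSymIC} and the aligned-image counting of \cite{Arash_Jafar}. So at the level of what is actually \emph{proved} versus \emph{cited}, your plan and the paper coincide: both take the aligned-image entropy-difference bound as a black box.

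The extra structure you add --- the peeling step and the ``easy'' box-stackable sub-case via Lemma~\ref{lemma:ais-sumset} --- is correct but redundant. The cited result already compares two finite-precision linear combinations with arbitrary heights $\mu_i,\nu_i$ and independent coefficient sets $\chg_i$ directly, so there is no need to first strip the excess top symbols or to carve out a regime where box-stacking suffices. Your factor-of-two bookkeeping, $(\mu_1-\mu_2)^++(\nu_1-\nu_2)^+\le 2(\mu_1-\mu_2)^+$, used to convert two peeled pieces at $\tfrac{1}{2}\log P$ each into a single $(\mu_1-\mu_2)^+\log P$ charge, works only because the lemma is stated with $\log P$ rather than $\log\bar P$; had the target been the tighter $\log\bar P$ version, this route would overshoot by a factor of two. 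If your intent is a self-contained proof, the proposal does not achieve it, since the genuine content --- bounding the expected number of aligned images under bounded-density coefficient perturbations --- is exactly what lives in the cited lemma and you have not reproduced it. If your intent is only to match the paper, you can drop the scaffolding and invoke \cite[Lemma~1]{Arash_Jafar_mimoSymIC} directly.
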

	
	An important issue that arises in applications of Aligned Images bounds is  that of translating between `linear combinations of sub-sections' on one hand, and `sub-sections of linear combinations' on the other. Sum-set inequalities are formulated in \cite{Arash_Jafar_sumset} in terms of linear combinations of various sub-sections of input signals, but converse arguments often involve sub-sections of \emph{output} signals, i.e., sub-sections of linear combinations of input signals. Understanding the extent to which these two notions can be related remains an open problem in general \cite{Chan_Jafar_3to1}. For our present purpose, however, because we only need the `top' sub-sections, such a relationship is obtained in the following lemma.
	
	\begin{lemma}\label{lemma:const}
		Let $\lambda, \mu, \nu$ be real numbers satisfying $\lambda \geq \mu > 0$ and $\nu \geq 0$.
		Let $T \in \mathcal{X}_{\nu + \lambda}$ and $U\in \mathcal{X}_{\nu + \mu}$.
		Then
		\begin{align}
			\Hg( ( T \boxplus U )^\lambda ) = \Hg( (T)^\lambda \boxplus (U)^\mu  ) + O(1),
		\end{align}
		where $\chg$ is a set of random variables satisfying the bounded density assumption. 
	\end{lemma}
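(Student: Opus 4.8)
The plan is to establish the two inequalities $\Hg((T\boxplus U)^\lambda)\le \Hg((T)^\lambda\boxplus(U)^\mu)+O(1)$ and $\Hg((T)^\lambda\boxplus(U)^\mu)\le \Hg((T\boxplus U)^\lambda)+O(1)$ separately, in each case by exhibiting a deterministic (up to a bounded-cardinality ambiguity) map from one random variable to the other, so that the conditional entropy of one given the other is $O(1)$; the claimed equality then follows from $\Hg(X)-\Hg(Y)\le \Hg(X\mid Y)$. The key structural fact to exploit is that $\boxplus$ is $\lrfloor{G_1 T}+\lrfloor{G_2 U}$ with $|G_i|\in(1/\Delta,\Delta)$, so scaling by a bounded-density coefficient shifts the power level of a quantity by only an $O(1)$ additive amount, and taking a floor introduces at most a bounded error; thus, up to bounded error, the `top $\lambda$ symbols' of $G_1 T$ are determined by the `top $\lambda$ symbols' of $T$, i.e. by $(T)^\lambda$ (here one uses $T\in\mathcal{X}_{\nu+\lambda}$ so that $(T)^\lambda=(T)^{\nu+\lambda}_\nu$ is well-defined, and similarly $(U)^\mu$ with $U\in\mathcal{X}_{\nu+\mu}$).

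First I would handle the direction $(T\boxplus U)^\lambda \mapsto$ (function of) $(T)^\lambda\boxplus(U)^\mu$. Write $W=\lrfloor{G_1 T}+\lrfloor{G_2 U}$. Decompose $T$ as $T=\bP^{\nu}(T)^\lambda+(T)^{\nu}_0$ (its top part plus a residual of level $<\nu$), and likewise for $U$; then $\lrfloor{G_1 T}=\lrfloor{G_1\bP^{\nu}(T)^\lambda+G_1(T)^\nu_0}$. Since $|G_1|<\Delta$ and $(T)^\nu_0<\bP^\nu$, the residual term $G_1(T)^\nu_0$ contributes a value of magnitude at most $\Delta\bP^\nu$, i.e. it only perturbs the symbols of $\lrfloor{G_1 T}$ at levels below $\nu+O(1)$. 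Hence the symbols of $W$ at levels $\ge \nu+c_0$ (for a suitable constant $c_0$ depending on $\Delta$) are determined, up to a carry ambiguity of bounded size, by $\lrfloor{G_1\bP^\nu(T)^\lambda}+\lrfloor{G_2\bP^\nu(U)^\mu}$, which in turn (pulling the $\bP^\nu$ out of the floor, again up to $O(1)$) agrees with $\bP^\nu\big((T)^\lambda\boxplus(U)^\mu\big)$ up to $O(1)$ in the low-order symbols. Taking the top $\lambda$ symbols $(\cdot)^\lambda=(\cdot)^{\nu+\lambda}_\nu$ of $W$ and comparing, one reads off that $(T\boxplus U)^\lambda$ is a function of $(T)^\lambda\boxplus(U)^\mu$ together with at most $O(1)$ extra bits of carry information, giving $\Hg\big((T\boxplus U)^\lambda\big)\le \Hg\big((T)^\lambda\boxplus(U)^\mu\big)+O(1)$. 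The reverse direction is symmetric: the residual (low-level) parts of $T$ and $U$ influence $W=T\boxplus U$ only up to level $\nu+O(1)$, so conversely $(T)^\lambda\boxplus(U)^\mu$ — which is exactly the `high part' of $W$ with the $\bP^\nu$ factor stripped — is recoverable from $(T\boxplus U)^\lambda$ up to bounded carry ambiguity, yielding the matching inequality.

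The main obstacle I anticipate is carry propagation across the boundary at level $\nu$: when one forms $\lrfloor{G_1\bP^\nu(T)^\lambda+G_1(T)^\nu_0}+\lrfloor{G_2\bP^\nu(U)^\mu+G_2(U)^\mu_0}$ and wants to separate the `above $\nu$' part from the `below $\nu$' part, the sum of the two low-order pieces can carry up across level $\nu$, and the two floor operations can each shave off up to $1$; since the low-order pieces have magnitude $O(\bP^\nu)$, this carry affects only $O(1)$ symbols just above level $\nu$, but one must argue carefully that it cannot cascade further up — which it cannot, precisely because a carry of bounded magnitude added at level $\nu$ changes the symbols at level $\nu+k$ only if all intervening symbols are maximal, and in any case the total number of affected high symbols is $O(1)$, so the ambiguity stays within a set of bounded size and contributes only $O(1)$ to the conditional entropy. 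The factor $\Delta$ and the constant number of floors enter only through the size of this $O(1)$ term, so the bookkeeping, while delicate, stays within the standard toolkit used for Aligned Images bounds; I would present it at the level of `the residual perturbs only the bottom $\nu+O(1)$ levels, hence the top $\lambda$ levels are mutually determined up to $O(1)$,' rather than tracking exact constants.
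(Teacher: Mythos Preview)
Your plan is correct and matches the paper's approach: both show that $(T\boxplus U)^\lambda$ and $(T)^\lambda\boxplus(U)^\mu$ differ by an integer-valued random variable of bounded support, which immediately gives the $O(1)$ entropy gap in both directions at once.

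One piece of your reasoning should be cleaned up, however. Your carry-propagation discussion (``a carry of bounded magnitude added at level $\nu$ changes the symbols at level $\nu+k$ only if all intervening symbols are maximal, and in any case the total number of affected high symbols is $O(1)$'') is both unnecessary and not quite right: a bounded carry \emph{can} flip arbitrarily many high symbols (think $99\cdots 9+1$), so the number of affected symbols is not $O(1)$ in general. What \emph{is} $O(1)$ is the \emph{numerical} difference itself, and that is all you need. The paper makes this explicit by a direct algebraic computation: writing $V=T\boxplus U$ and $Z=(T)^\lambda\boxplus(U)^\mu$, one expands $Z-(V)^\lambda$ as a sum of a handful of truncation/rounding errors (each of the form $x-\lfloor x\rfloor$ or $G_i\cdot(\text{residual}/\bP^\nu)$), every one of which lies in an interval of length $O(\Delta)$ independent of $P$. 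Hence $Z-(V)^\lambda$ is an integer in a fixed bounded range, and $\Hg(Z\mid (V)^\lambda)$, $\Hg((V)^\lambda\mid Z)$ are both $O(1)$. This single computation replaces your two separate directions and avoids any symbol-level carry analysis.
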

	The proof of Lemma \ref{lemma:const} is relegated to Appendix \ref{sec:const}.

%	\begin{remark}
%		This lemma compares the following two ways to take a significant sub-section of a finite-precision linear combination of random variables: 
%		(a) taking it directly;
%		(b) finding the constituting sub-sections of the random variables and performing the finite-precision linear combination on these sub-sections.
%		It takes constant (with respect to $P$) number of bits to reconstruct one from the other; as a result they are within bounded distortion, and have the same GDoF.
%	\end{remark} 

	The next lemma provides an important lower bound on the entropy of a finite-precision linear combination of random variables based on Lemma \ref{lemma:ais-sumset} and the submodularity of entropy.
	
	\begin{lemma} \label{lemma:sumset}
		Let $P, \mu, \nu \geq 0$, and let $p, q > 0$ satisfy $\frac{1}{2} \leq \frac{p}{q} \leq 1$ and $\frac{p}{q} \in \mathbb{Q}$.
		Let $T(t) \in \mathcal{X}_{q+\mu}$ and $U(t)\in \mathcal{X}_{q+\nu}$ for $t \in [n]$; $\bT = \{T(t):~t\in[n] \}$ and $\bU = \{ U(t): t \in [n]\}$.
		Let $\bV = \bT \boxplus_\chg \bU$, where $\chg$ is a set of random variables satisfying the bounded density assumption. 
		Then
		\begin{align}
		2p\Hg( \bV | \mathcal{W}, (\bT)^\mu, (\bU)^\nu ) \geq q \Hg( (\bT)^{p+\mu}, (\bU)^{p+\nu} | \mathcal{W}, (\bT)^\mu, (\bU)^\nu ) + \nologP, \label{eq:sumset-result}
		\end{align}
		where $\mathcal{W}$ is a set of random variables satisfying $I(\mathcal{W}, \bT, \bU; \chg) = 0$. 
	\end{lemma}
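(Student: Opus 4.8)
The plan is to recast the inequality, using Lemma~\ref{lemma:ais-sumset} together with the submodularity (Shearer-type) bound on entropy, as a purely combinatorial box-stacking and covering problem. First I would absorb the conditioning: set $\mathcal{W}'=\mathcal{W}\cup\lrbr{(\bT)^\mu,(\bU)^\nu}$, which still satisfies $I(\mathcal{W}',\bT,\bU;\chg)=0$ since $(\bT)^\mu,(\bU)^\nu$ are functions of $\bT,\bU$. Writing $\bA\triangleq(\bT)_{q-p}^q$ and $\bB\triangleq(\bU)_{q-p}^q$, it follows from the definition of sub-sections that $(\bT)^{p+\mu}$ is, up to $\nologP$, the concatenation of $(\bT)^\mu$ and $\bA$ (similarly for $\bU$), so $\Hg\lrpar{(\bT)^{p+\mu},(\bU)^{p+\nu}\mid \mathcal{W},(\bT)^\mu,(\bU)^\nu}=\Hg(\bA,\bB\mid\mathcal{W}')+\nologP$. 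Hence it suffices to show $2p\,\Hg(\bV\mid\mathcal{W}')\geq q\,\Hg(\bA,\bB\mid\mathcal{W}')+\nologP$.

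For the combinatorial core, write $p/q=a/b$ in lowest terms; the hypothesis $\tfrac12\le p/q\le 1$ is exactly $a\le b\le 2a$. Set $h=p/a=q/b>0$ and cut the unknown parts into height-$h$ blocks $\bT_\ell\triangleq(\bT)_{(\ell-1)h}^{\ell h}$, $\bU_\ell\triangleq(\bU)_{(\ell-1)h}^{\ell h}$ for $\ell\in[b]$; then $\bA$ carries the same information as $\lrbr{\bT_{b-a+1},\dots,\bT_b}$ and $\bB$ as $\lrbr{\bU_{b-a+1},\dots,\bU_b}$. Relabel the ``top'' blocks $\tau_i\triangleq\bT_{b-a+i}$ and $\upsilon_i\triangleq\bU_{b-a+i}$ for $i\in[a]$, and for $j\in[a]$ let $I_j\subseteq[a]$ be the length-$(b-a)$ cyclic interval $\lrbr{j,j+1,\dots,j+b-a-1}$ taken mod $a$ (well defined precisely because $0\le b-a\le a$), which has the property that every $i\in[a]$ lies in exactly $b-a$ of the sets $I_1,\dots,I_a$. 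Define the $2a$ families
\[ \mathcal{C}^{(1)}_j\triangleq\lrbr{\tau_1,\dots,\tau_a}\cup\lrbr{\upsilon_i:i\in I_j},\qquad \mathcal{C}^{(2)}_j\triangleq\lrbr{\upsilon_1,\dots,\upsilon_a}\cup\lrbr{\tau_i:i\in I_j}. \]

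The proof then rests on two claims. \emph{(i)} Each $\mathcal{C}^{(k)}_j$ is a set of disjoint sub-sections satisfying \eqref{lemma:ais-sumset-2} after reordering, so Lemma~\ref{lemma:ais-sumset} (with $\mathcal{W}\leftarrow\mathcal{W}'$) yields $\Hg(\bV\mid\mathcal{W}')\ge\Hg(\mathcal{C}^{(k)}_j\mid\mathcal{W}')+\nologP$. For $\mathcal{C}^{(1)}_j$ this is because all of its blocks lie at levels in $\lrbr{(b-a)h,\dots,(b-1)h}$, with one $\tau$ at each such level and at most one $\upsilon$ (only $b-a$ of them total), so sorting the blocks by level shows that the number lying at or below level $kh$ never exceeds $k+1$ — which is exactly the Hall-type condition needed to stack them without lifting; this is where $b-a\le a$ and $b-a\ge 0$ enter. \emph{(ii)} Each $\tau_i$ lies in all $a$ families $\mathcal{C}^{(1)}_\bullet$ and in $\mathcal{C}^{(2)}_j$ exactly when $i\in I_j$, i.e.\ in $b-a$ more, hence in exactly $b$ of the $2a$ families, and symmetrically for each $\upsilon_i$. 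Since every family is a subset of $\bA\cup\bB$ and its $2a$ constituent blocks are each covered $b$ times, the submodularity (Shearer) bound gives $\sum_{j\in[a],\,k\in\{1,2\}}\Hg\lrpar{\mathcal{C}^{(k)}_j\mid\mathcal{W}'}\ge b\,\Hg(\bA,\bB\mid\mathcal{W}')$; combining with the $2a$ inequalities from \emph{(i)} gives $2a\,\Hg(\bV\mid\mathcal{W}')+\nologP\ge b\,\Hg(\bA,\bB\mid\mathcal{W}')$, and multiplying through by $h$ and using $p=ah$, $q=bh$ finishes the proof.

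I expect the delicate points to be claim~\emph{(i)} — making the box-stacking/Hall argument, and its exact reliance on $\tfrac12\le p/q\le1$, airtight — and the interface with Lemma~\ref{lemma:ais-sumset}: the lemma statement is phrased in terms of the output sub-sections $(\bT)^{p+\mu},(\bU)^{p+\nu}$, whereas Lemma~\ref{lemma:ais-sumset} operates on disjoint \emph{input} sub-sections, so the reduction to the blocks $\tau_i,\upsilon_i$ (rather than to composite sub-sections of $\bV$) is what keeps the translation between these two viewpoints manageable.
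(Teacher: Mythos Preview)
Your proposal is correct and follows the same strategy as the paper's proof: partition the top-$p$ portions of $\bT$ and $\bU$ into $2\tilde p$ equal-height blocks, invoke Lemma~\ref{lemma:ais-sumset} on $2\tilde p$ stackable size-$\tilde q$ subfamilies, and combine the resulting inequalities via submodularity (Shearer). The only difference is the specific choice of the $2\tilde p$ subfamilies---the paper takes the length-$\tilde q$ cyclic windows in the concatenated block list $(A_1,\ldots,A_{2\tilde p})$, whereas your families consist of all $a$ blocks from one input together with a cyclic $(b{-}a)$-run from the other---but both collections satisfy the box-stacking constraint~\eqref{lemma:ais-sumset-2} and cover every block exactly $\tilde q$ times, so the two arguments are interchangeable.
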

	\begin{proof}
		Since $\frac{p}{q} \in \mathbb{Q}$, there exists $\ell \in \mathbb{R}$ and $\tilde{p}$, $\tilde{q} \in \mathbb{N}$, such that $ p = \tilde{p} \ell$ and $q = \tilde{q} \ell$. For all $t \in [n]$, define sub-sections of $T(t)$ and $U(t)$ as
		\begin{align} \label{eq:sumset-A}
		A_i(t) = \begin{cases}
			(T(t))^{q-(i-1)\ell}_{q - i \ell} & \text{if } 1 \leq i \leq \tilde{p} \\
			(U(t))^{q-(i-\tilde{p}-1)\ell}_{q - (i-\tilde{p}) \ell} & \text{if }\tilde{p}+1 \leq i \leq 2\tilde{p}\\
		\end{cases},
		\end{align}
		and $\bA_i = \{ A_i(t):~t \in[n]\}$ for $ i \in [2 \tilde{p}]$.
		Then by Lemma \ref{lemma:ais-sumset}, for $ i \in [2 \tilde{p}]$ the following holds:
		\begin{align}
			\Hg(\bV | \mathcal{W}, (\bT)^\mu, (\bU)^\nu) 
			\geq \Hg( \bA_i, \bA_{i+1}, \cdots, \bA_{i+q-1}  | \mathcal{W}, (\bT)^\mu, (\bU)^\nu) + \nologP, \label{eq:sumset-0}
		\end{align}
		where we implicitly use modulo-$2\tilde{p}$ arithmetic in the indices; e.g., $i_0 = i_{2\tilde{p}}$.
		%Figure \ref{fig:sumset} illustrates how Lemma \ref{lemma:ais-sumset} is applied to obtain ({\ref{eq:sumset-0}}). 
		Lemma \ref{lemma:ais-sumset} is applied in the following way.
		After removing top-$\mu$ sub-section of $\bT$ and top-$\nu$ sub-section of $\bU$, we take the top-$p$ sub-section of the remaining $\bT$ and $\bU$, and evenly slice them into $\tilde{p}$ boxes, each of which has height $\ell$.
		The boxes in $\bT$ are then indexed from top to bottom with 1 to $\tilde{p}$, and those in $\bU$ are indexed likewise with $\tilde{p}+1$ to $2\tilde{p}$. 
		Conditioned on the top-$\mu$ sub-section of $\bT$ and the top-$\nu$ sub-section of $\bU$, Lemma \ref{lemma:ais-sumset} implies that the entropy of $\bT \boxplus_\chg \bU$ is no less than the joint entropy of the boxes whose indices are within a circular sliding window of size $\tilde{q}$.
		This can be verified with the box-stacking interpretation of Lemma \ref{lemma:ais-sumset}.
		See Figure \ref{fig:sumset} for an illustration of the procedure above.
		
		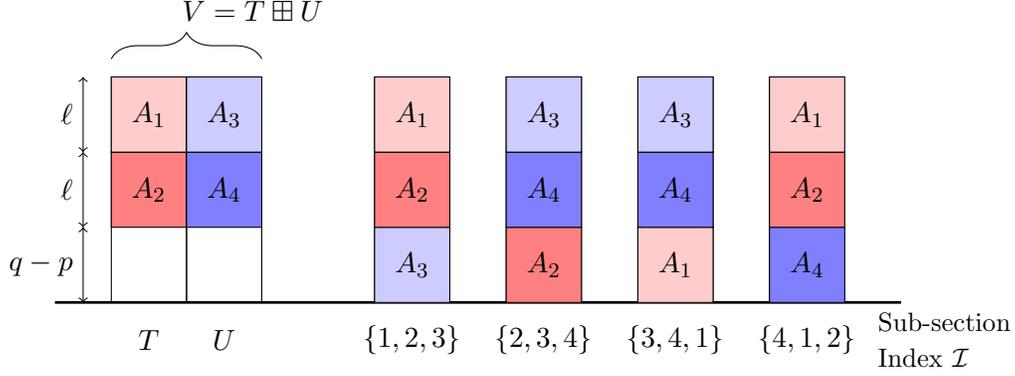
\begin{figure}[t]
			\centering
			\begin{tikzpicture}[scale = 1]
				\def \d {0.75}
				\def \w {1}
				\def \l {1}
				
				% Ground line
				\draw [line width = 1] (0,0) -- ({7*\d + 6*\w}, 0);
				
				% Node coordinates of each signal pilllars
				\coordinate (T) at ({\d}, 0);
				\coordinate (U) at ({\d+\w}, 0);
				\coordinate (1) at ({3*\d+2*\w}, 0);
				\coordinate (2) at ({4*\d+3*\w}, 0);
				\coordinate (3) at ({5*\d+4*\w}, 0);
				\coordinate (4) at ({6*\d+5*\w}, 0);
				
				% Linearly combined signal
				\draw (T) rectangle ($(T) + (\w, {3 *\l})$);
				\draw [fill = red!50]  ($(T) + (0, {1 *\l})$) rectangle ($(T) + (\w, {2*\l})$) node [pos = 0.5] {$A_2$};
				\draw [fill = red!20]  ($(T) + (0, {2 *\l})$) rectangle ($(T) + (\w, {3 *\l})$) node [pos = 0.5] {$A_1$};
				\draw (U) rectangle ($(U) + (\w, {3 *\l})$);
				\draw [fill = blue!50]  ($(U) + (0, {1*\l})$) rectangle ($(U) + (\w, {2*\l})$) node [pos = 0.5] {$A_4$};
				\draw [fill = blue!20]  ($(U) + (0, {2*\l})$) rectangle ($(U) + (\w, {3 *\l})$) node [pos = 0.5] {$A_3$};
				
				% Sumset for submodularity
				\draw (1) rectangle ($(1) + (\w, {3 *\l})$);
				\draw [fill = blue!20]  ($(1) + (0, {0 *\l})$) rectangle ($(1) + (\w, {1 *\l})$) node [pos = 0.5] {$A_3$};
				\draw [fill = red!50]  ($(1) + (0, {1 *\l})$) rectangle ($(1) + (\w, {2 *\l})$) node [pos = 0.5] {$A_2$};
				\draw [fill = red!20]  ($(1) + (0, {2 *\l})$) rectangle ($(1) + (\w, {3 *\l})$) node [pos = 0.5] {$A_1$};
				
				\draw (2) rectangle ($(2) + (\w, {3 *\l})$);
				\draw [fill = red!50]  ($(2) + (0, {0 *\l})$) rectangle ($(2) + (\w, {1 *\l})$) node [pos = 0.5] {$A_2$};
				\draw [fill = blue!50]  ($(2) + (0, {1 *\l})$) rectangle ($(2) + (\w, {2 *\l})$) node [pos = 0.5] {$A_4$};
				\draw [fill = blue!20]  ($(2) + (0, {2 *\l})$) rectangle ($(2) + (\w, {3 *\l})$) node [pos = 0.5] {$A_3$};
				
				\draw (3) rectangle ($(3) + (\w, {3 *\l})$);
				\draw [fill = red!20]  ($(3) + (0, {0 *\l})$) rectangle ($(3) + (\w, {1 *\l})$) node [pos = 0.5] {$A_1$};
				\draw [fill = blue!50]  ($(3) + (0, {1 *\l})$) rectangle ($(3) + (\w, {2 *\l})$) node [pos = 0.5] {$A_4$};
				\draw [fill = blue!20]  ($(3) + (0, {2 *\l})$) rectangle ($(3) + (\w, {3 *\l})$) node [pos = 0.5] {$A_3$};

				\draw (4) rectangle ($(4) + (\w, {3 *\l})$);
				\draw [fill = blue!50]  ($(4) + (0, {0 *\l})$) rectangle ($(4) + (\w, {1 *\l})$) node [pos = 0.5] {$A_4$};
				\draw [fill = red!50]  ($(4) + (0, {1 *\l})$) rectangle ($(4) + (\w, {2 *\l})$) node [pos = 0.5] {$A_2$};
				\draw [fill = red!20]  ($(4) + (0, {2 *\l})$) rectangle ($(4) + (\w, {3 *\l})$) node [pos = 0.5] {$A_1$};

				% Markers
				\node at ($(T) + ({\w/2}, -0.5)$) {$T$};
				\node at ($(U) + ({\w/2}, -0.5)$) {$U$};
				\node at ($(1) + ({\w/2}, -0.5)$) {$\{1,2,3\}$};
				\node at ($(2) + ({\w/2}, -0.5)$) {$\{2,3,4\}$};
				\node at ($(3) + ({\w/2}, -0.5)$) {$\{3,4,1\}$};
				\node at ($(4) + ({\w/2}, -0.5)$) {$\{4,1,2\}$};
				\node [text width = 60, align = left] at ($(4) + ({\w+2*\d}, -0.5)$) {\small Sub-section Index $\mathcal{I}$};
				
				\draw [<->] ({\d/2}, 0) -- ({\d/2}, {\l}) node [pos = 0.5, left] {$q-p$};
				\draw [<->] ({\d/2}, {\l}) -- ({\d/2}, {2*\l}) node [pos = 0.5, left] {$\ell$};
				\draw [<->] ({\d/2}, {2*\l}) -- ({\d/2}, {3*\l}) node [pos = 0.5, left] {$\ell$};
				
				\draw [decorate,decoration={brace,amplitude=10pt ,raise=4pt},yshift=0pt]
				($(T) + (0, {3*\l + 0.1})$) -- ($(U) + (\w, {3*\l + 0.1})$) node [black,midway, xshift = 25pt, yshift = 22pt] {$V = T \boxplus U$};
				
			\end{tikzpicture}	

			\caption{\it \small 
				An illustration of how the sum-set inequality in Lemma \ref{lemma:ais-sumset} is applied to the proof of Lemma \ref{lemma:sumset}.
				In this case, $\mu = \nu = 0$, $p = 2$, and $q=3$, which implies that $\ell = 1$, $\tilde{p} = 2$ and $\tilde{q}=3$.
				The left most consecutive bars shows $\bV = \bT \boxplus_\chg \bU$ and some sub-sections of $\bT$ and $\bU$ taken by (\ref{eq:sumset-A}).
				The right four bars list all possible sub-section index sets obtained by a circular sliding window of size $q =3$.
				Seeing that all sub-sections in each index set satisfy the box-stacking interpretation (All boxes can be stacked without elevating any above their original levels), Lemma \ref{lemma:ais-sumset} implies that $\Hg(\bV | \mathcal{W}) \geq \Hg(\bA_\mathcal{I} | \mathcal{W})$, where $\mathcal{I}$ is one of the sub-section index sets, and $\bA_\mathcal{I} = \{A_i:~i \in \mathcal{I}\}$. 
				Summing up these inequalities and applying the submodularity of entropy, one can obtain (\ref{eq:sumset-result}).}
			\label{fig:sumset}
		\end{figure}
		
		Adding up (\ref{eq:sumset-0}) for all $i \in [2\tilde{p}]$, we have		
		\begin{align}
		& 2p \Hg( \bV | \mathcal{W}, (\bT)^\mu, (\bU)^\nu ) \notag \\
		& = \ell 2 \tilde{p} \Hg( \bV | \mathcal{W}, (\bT)^\mu, (\bU)^\nu ) \label{eq:sumset-1}\\
		& \geq \ell \sum_{i=1}^{2\tilde{p}}  \Hg( \bA_i, \bA_{i+1}, \cdots, \bA_{i+\tilde{q}-1}  | \mathcal{W}, (\bT)^\mu, (\bU)^\nu ) + \nologP \label{eq:sumset-2}\\
		& \geq \ell \tilde{q} \Hg( \bA_1, \bA_2, \cdots, \bA_{2\tilde{p}}  | \mathcal{W}, (\bT)^\mu, (\bU)^\nu) + \nologP \label{eq:sumset-3}\\
		& \geq q \Hg( (\bT)^{p+\mu}, (\bU)^{p+\nu} | \mathcal{W}, (\bT)^\mu, (\bU)^\nu ) + \nologP. \label{eq:sumset-4}
		\end{align}
		Step (\ref{eq:sumset-1}) holds since $p = \tilde{p}\ell$.
		Step (\ref{eq:sumset-3}) follows from the sub-modularity
		\footnote{
		Let $\{X_1, X_2 \cdots, X_n\}$ be a set of random variables, then for $1 \leq k \leq n$, the submodularity of entropy implies:
		\begin{align}
		\sum_{i=1}^{n} H(X_i, X_{i+1}, \cdots, X_{i+k-1}) \geq k H(X_1, X_2, \cdots, X_n),
		\end{align}
		where modulo-$n$ arithmetic is implicitly used in the inidices, e.g., $i_0 = i_n$. 
		} of entropy, and (\ref{eq:sumset-4}) holds because $q = \tilde{q}\ell$, and one can recover $(\bT)^{\mu+p}$ and $(\bU)^{\nu + p}$ from $ \{\bA_i:i \in [2\tilde{p}]\},  (\bT)^\mu, (\bU)^\nu$, and  $\chg$ within bounded distortion. 
	\end{proof}

	\subsection{The Weighted-Sum Bounds in Regime 1 and 2} \label{sec:wsumBound12}
		We break down the proof into the following three lemmas. 
		Throughout this section, we define $\mu = \beta - \alpha, \umu = (\mu)^+$, $\lmu = (-\mu)^+$, and $\groupW = \{ W_1, \bAmu, \bBmu \}$.
		Note that in both Regime 1 and 2, we have $\umu \leq 1$.

\begin{lemma} \label{lemma:bottom}
	For $\lambda \geq 1 - \mu$ and $\umu \leq 1$, we have
	\begin{align}
		\Hg((\bbY_1)^\lambda | \groupW) \geq nR_2 + \Hg( (\bbY_1)^{\lambda - (1-\umu)} | \groupW ) + \nologP.
	\end{align}
\end{lemma}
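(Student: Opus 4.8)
The plan is to turn the statement, which is about top sub-sections of the deterministic output $\bbY_1$, into a statement about top sub-sections of the \emph{inputs} $\bA,\bB$ (via Lemma~\ref{lemma:const}), then run the Aligned-Images sum-set inequality of Lemma~\ref{lemma:ais-sumset} with a carefully chosen list of boxes so that the extra height $1-\umu$ makes room for one full top level $(\bB)^1$ of $\bB$, and finally use Fano's inequality together with the secrecy relations of Lemma~\ref{lemma:secrecy} to show that this $(\bB)^1$ is worth at least $nR_2-\nologP$ even after conditioning on $\groupW$.

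\textbf{Step 1 (inputs instead of outputs).} Since $\beta>1$ in Regimes~1 and~2, the deterministic model of Section~\ref{sec:det} is $\bbY_1=\bA\boxplus_\chg\bB$ with $\bA\in\mathcal{X}_\alpha^n$ and $\bB\in\mathcal{X}_\beta^n$, and $\bbY_2$ equals the top level $(\bB)^1$ of $\bB$ up to a bounded, $\chg$-measurable distortion, so $\bbY_2$ and $(\bB)^1$ are interchangeable up to $\nologP$ inside any conditional entropy. Inside the frame of size $\max\{\alpha,\beta\}$, the portion of $\bbY_1$ retained by a top-$\kappa$ section comes from the top $\kappa-\umu$ levels of $\bA$ and the top $\kappa-\lmu$ levels of $\bB$, so Lemma~\ref{lemma:const} gives
\begin{align}
\Hg\big((\bbY_1)^{\kappa}\mid\groupW\big)=\Hg\big((\bA)^{\kappa-\umu}\boxplus_\chg(\bB)^{\kappa-\lmu}\mid\groupW\big)+\nologP
\end{align}
for every admissible $\kappa$; here $\groupW=\{W_1,\bAmu,\bBmu\}$ is a function of the messages and the private randomness and hence independent of $\chg$, so it is a legitimate conditioning set for Lemmas~\ref{lemma:const} and~\ref{lemma:ais-sumset}. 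Taking $\kappa=\lambda$ and $\kappa=\lambda-(1-\umu)$ reduces the claim to a comparison between two such finite-precision sums; one checks (using $\umu\le1$) that the inputs appearing in the $(\lambda-(1-\umu))$-term, namely $(\bA)^{\lambda-1}$ and $(\bB)^{\lambda-1+\mu}$, are top sub-sections of $\bT:=(\bA)^{\lambda-\umu}$ and $\bU:=(\bB)^{\lambda-\lmu}$ respectively.

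\textbf{Step 2 (the sum-set step).} With $\Hg((\bbY_1)^{\lambda}\mid\groupW)=\Hg(\bT\boxplus_\chg\bU\mid\groupW)+\nologP$, I would apply Lemma~\ref{lemma:ais-sumset} to a list of pairwise-disjoint sub-sections of $\bT$ and $\bU$ that (a) admits a valid stacking order in the sense of~(\ref{lemma:ais-sumset-2}) and (b) jointly carries, given $\groupW$, at least the conditional entropy of $(\bbY_1)^{\lambda-(1-\umu)}$ plus that of a disjoint copy of $(\bB)^1$. Intuitively, the extra $1-\umu$ levels of height that $\bT$ and $\bU$ have over the inputs of the shorter slice — combined with the fact that the $\bA$-material and $\bB$-material are being counted separately rather than through their $\boxplus$ — is exactly what leaves room for one untouched top level of $\bB$; the hypothesis $\umu\le1$ is what makes the height-one $(\bB)^1$-slice fit, and the hypothesis $\lambda\ge1-\mu$ is what makes the assembled tower short enough relative to the original levels that no box has to be elevated. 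Lemma~\ref{lemma:ais-sumset} together with the submodularity of entropy then yields
\begin{align}
\Hg\big((\bbY_1)^{\lambda}\mid\groupW\big)\ \ge\ \Hg\big((\bbY_1)^{\lambda-(1-\umu)}\mid\groupW\big)+\Hg\big((\bB)^1\mid\groupW\big)+\nologP .
\end{align}

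\textbf{Step 3 (the extracted level is worth $nR_2$).} It remains to lower-bound $\Hg((\bB)^1\mid\groupW)$. Because $\bbY_2$ and $(\bB)^1$ are interchangeable up to $\nologP$, Fano's inequality and Lemma~\ref{lemma:noloss} give $\Hg(W_2\mid(\bB)^1)=\nologP$, whence
\begin{align}
nR_2=\Hg(W_2\mid W_1)=\Ig\big(W_2;(\bB)^1\mid W_1\big)+\nologP\le\Hg\big((\bB)^1\mid W_1\big)+\nologP .
\end{align}
Finally, since $\bAmu$ and $\bBmu$ are bounded-distortion, $\chg$-measurable functions of the top sub-sections of $\bbY_1$, which are exposed to Receiver~$1$, the secrecy relations of Lemma~\ref{lemma:secrecy} (in particular~(\ref{lemma:secrecy-2})) show that further conditioning on $\bAmu,\bBmu$ costs only $\nologP$, so $\Hg((\bB)^1\mid\groupW)\ge nR_2-\nologP$. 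Substituting into the display of Step~2 proves the lemma.

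\textbf{Where the difficulty lies.} The only delicate point is Step~2: one has to exhibit an explicit box list that is disjoint, stackable in the sense of~(\ref{lemma:ais-sumset-2}), and simultaneously heavy enough to account for the shorter output slice while leaving a genuinely unused copy of $(\bB)^1$. Getting the level/height arithmetic of this stacking exactly right — so that nothing must be lifted above its original level — is precisely what consumes both hypotheses $\lambda\ge1-\mu$ and $\umu\le1$; everything else is routine bookkeeping with Lemmas~\ref{lemma:const},~\ref{lemma:ais-sumset},~\ref{lemma:noloss} and~\ref{lemma:secrecy}.
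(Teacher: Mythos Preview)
Your Step~2 contains a genuine gap. Lemma~\ref{lemma:ais-sumset} bounds $\Hg(\bT\boxplus\bU\mid\groupW)$ below by the \emph{joint} entropy $\Hg(\bA_1,\dots,\bA_M\mid\groupW)$ of disjoint input sub-sections; you need the \emph{sum} of two marginal entropies, $\Hg((\bbY_1)^{\lambda-(1-\umu)}\mid\groupW)+\Hg((\bB)^1\mid\groupW)$. These go in opposite directions: a joint entropy is never more than the corresponding sum of marginals, so no box list and no appeal to submodularity can convert the former into the latter. Worse, the two terms on your right-hand side genuinely overlap in $\bB$---by your own Step~1, $(\bbY_1)^{\lambda-(1-\umu)}$ depends on $(\bB)^{\lambda-1+\mu}$, which for every $\lambda>1-\mu$ shares a nontrivial top slice with $(\bB)^1$---so the displayed inequality at the end of Step~2 is strictly stronger than the lemma and there is no reason to expect it to hold. (Step~3 also slips: conditioning $\Hg((\bB)^1\mid W_1)$ further on $\bBmu=(\bB)^{\umu}$ costs about $\umu\, n\log\bP$ in Regime~2, not $\nologP$; the bound $\Hg((\bB)^1\mid\groupW)\ge nR_2$ is true, but must be argued via $\Ig(W_2;(\bB)^1\mid\groupW)$ and~(\ref{lemma:secrecy-3}), not~(\ref{lemma:secrecy-2}).)

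The paper's proof is much more elementary and avoids Lemma~\ref{lemma:ais-sumset} entirely. After your Step~1 it uses only Lemma~\ref{lemma:ais} and a chain-rule insertion of the \emph{message} $W_2$ (rather than trying to isolate $(\bB)^1$ as a box). First shrink $(\bA)^{\lambda-\umu}$ to $(\bA)^{\lambda-1}$ via Lemma~\ref{lemma:ais} at no cost (this uses $\umu\le1$). Then write $\Hg(\cdot\mid\groupW)=\Hg(W_2\mid\groupW)+\Hg(\cdot\mid\groupW,W_2)-\Hg(W_2\mid\groupW,\cdot)$: the first term is $nR_2$ by~(\ref{lemma:secrecy-3}), and the last is $\nologP$ because $(\bB)^1$---hence $W_2$ by Fano---is recoverable from $\bAmu$ together with $(\bA)^{\lambda-1}\boxplus(\bB)^{\lambda-\lmu}$. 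Next shrink $(\bB)^{\lambda-\lmu}$ to $(\bB)^{\lambda-1+\mu}$ via Lemma~\ref{lemma:ais} (using $1-\mu\ge\lmu$, valid in Regimes~1 and~2), recognise $(\bbY_1)^{\lambda-(1-\umu)}$ via Lemma~\ref{lemma:const}, and finally drop the $W_2$-conditioning using~(\ref{lemma:secrecy-2}). The peeled-off $nR_2$ thus comes directly from inserting $W_2$, not from isolating $(\bB)^1$ as a separate entropy term.
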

\begin{proof}
	\begin{align}
		\Hg( (\bbY_1)^\lambda | \groupW) 
		&= \Hg( (\bA)^{\lambda - \umu} \boxplus (\bB)^{\lambda-\lmu} | \groupW) + \nologP \label{eq:bot-1}\\
		&\geq \Hg( (\bA)^{\lambda - 1} \boxplus (\bB)^{\lambda-\lmu} | \groupW ) + \nologP \label{eq:bot-2}\\
		&= \Hg(W_2| \groupW) + \Hg( (\bA)^{\lambda - 1} \boxplus (\bB)^{\lambda-\lmu} ) |\groupW, W_2 ) \notag \\
		&\qquad - \Hg(W_2 | \groupW, (\bA)^{\lambda - 1} \boxplus (\bB)^{\lambda-\lmu}   ) + \nologP \label{eq:bot-3}\\
		&= H(W_2) +  \Hg( (\bA)^{\lambda - 1} \boxplus (\bB)^{\lambda-\lmu} | \groupW, W_2) + \nologP \label{eq:bot-4}\\
		&\geq nR_2 + \Hg( (\bA)^{\lambda - 1} \boxplus (\bB)^{\lambda-1 + \mu}| \groupW, W_2 ) + \nologP \label{eq:bot-5}\\
		&= nR_2 + \Hg( (\bbY_1)^{\lambda - (1-\umu)} | \groupW, W_2) + \nologP \label{eq:bot-6}\\
		&= nR_2 + \Hg( (\bbY_1)^{\lambda - (1-\umu)} | \groupW ) + \nologP. \label{eq:bot-7}
	\end{align}
	First, equality (\ref{eq:bot-1}) holds because by Lemma \ref{lemma:const} one can recover $(\bA)^{\lambda - \umu} \boxplus (\bB)^{\lambda-\lmu}$  from $ (\bbY_1)^\lambda$ within bounded distortion.
	%\footnote{Here by ``recovering $L^\chg( (\bA)^{\lambda - \umu}, (\bB)^{\lambda-\lmu} )$ from $(\bbY_1)^\lambda$ within constant bits," it means one can evaluate $L^\chg( (\bA)^{\lambda - \umu}, (\bB)^{\lambda-\lmu} )$ with $(\bbY_1)^\lambda$ and an additional random variable whose support  does not scale with $P$. Taking (\ref{eq:bot-1}) as an example, in Appendix \ref{sec:const} we elaborate on how they are related by a random variable with a constant-sized support. Similar arguments can be made to justify the "recovery within constant bits/bounded disorition" appearing hereafter.}
	Then we apply Lemma \ref{lemma:ais} to obtain (\ref{eq:bot-2}), and apply the chain rule to obtain (\ref{eq:bot-3}).
	Equality (\ref{eq:bot-4}) holds for the following reasons: (a) equality (\ref{lemma:secrecy-3}) implies the first entropy term; (b) the last entropy term is of $\nologP$ is because, from $\bAmu$  and $(\bA)^{\lambda - 1} \boxplus (\bB)^{\lambda-\lmu}$, by Lemma \ref{lemma:const} one can recover $(\bB)^1$ within bounded distortion, which one can decode for $W_2$.
	Then we apply $nR_2 = H(W_2)$ and Lemma \ref{lemma:ais} to obtain (\ref{eq:bot-5}). Note that Lemma \ref{lemma:ais} is applicable because in Regimes 1 and 2, $1-\mu = 1+\alpha-\beta \geq (\alpha-\beta)^+ = \lmu$.
	Equality (\ref{eq:bot-6}) holds because by Lemma \ref{lemma:const}, $ (\bbY_1)^{\lambda - (1-\umu)}$ can be recovered from $(\bA)^{\lambda - 1} \boxplus (\bB)^{\lambda-1+\mu}$ within bounded distortion.
	Finally, we arrive at (\ref{eq:bot-7}) due to (\ref{lemma:secrecy-2}).
	\end{proof}

	In the next lemma, we show that the part of codeword $\bA$ corresponding to the same power levels as the part of $\bB$ carrying $W_2$ has entropy no less than $H(W_2) = nR_2$.
	Intuitively, this must be so because $W_2$ needs to be hidden from Receiver 1, and for this the `jamming signal' must be at least as big as $W_2$.

	\begin{lemma} \label{lemma:hide}
		\begin{align}
			\Hg( (\bA)^{1-\mu} | \groupW, (\bB)^1 ) \geq nR_2 + \nologP.
		\end{align}
	\end{lemma}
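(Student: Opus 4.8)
The plan is to formalize the intuition that the ``jamming'' part $(\bA)^{1-\mu}$ of Transmitter~1's codeword must itself carry at least $H(W_2)=nR_2$ bits, beyond what $\groupW$ and $(\bB)^{1}$ reveal, in order to conceal $W_2$ from Receiver~1; the mechanism is to pit a sum-set lower bound against the secrecy constraint. Throughout, recall that in both Regime~1 and Regime~2 we have $\beta>1$ and $\mu<1$. Set $\lambda_0=1+\lmu$; then $0<\lambda_0\le\max\{\alpha,\beta\}$, so the output sub-section $(\bbY_1)^{\lambda_0}$ is well defined. The first step is to use Lemma~\ref{lemma:const} to identify this output sub-section with a linear combination of input sub-sections: in Regime~2 (where $\lmu=0$) one verifies $(\bbY_1)^{1}\approx(\bA)^{1-\mu}\boxplus(\bB)^{1}$, and in Regime~1 (where $\lmu=\alpha-\beta$) one verifies $(\bbY_1)^{1+\alpha-\beta}\approx(\bA)^{1-\mu}\boxplus(\bB)^{1}$, the level-and-height bookkeeping working out precisely because $\beta-1<\alpha$ and $\beta>1$. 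In either case $(\bbY_1)^{\lambda_0}$ and $(\bA)^{1-\mu}\boxplus(\bB)^{1}$ are recoverable from one another within bounded distortion given $\chg$, so their conditional entropies coincide up to $\nologP$.

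Next I would carry out the entropy bookkeeping. By the chain rule and the bounded-distortion recovery just established,
\begin{align}
\Hg\!\left((\bA)^{1-\mu}\,\middle|\,\groupW,(\bB)^{1}\right)
&=\Hg\!\left((\bA)^{1-\mu},(\bB)^{1}\,\middle|\,\groupW\right)-\Hg\!\left((\bB)^{1}\,\middle|\,\groupW\right)\notag\\
&\ge\Hg\!\left((\bbY_1)^{\lambda_0}\,\middle|\,\groupW\right)+\Hg\!\left((\bB)^{1}\,\middle|\,\groupW,(\bbY_1)^{\lambda_0}\right)-\Hg\!\left((\bB)^{1}\,\middle|\,\groupW\right)+\nologP.\notag
\end{align}
Now I apply Lemma~\ref{lemma:ais-sumset} with the single box $(\bB)^{1}$, which (being a sub-section of $\bB$) trivially satisfies the box-stacking constraint, to obtain $\Hg((\bbY_1)^{\lambda_0}\mid\groupW)\ge\Hg((\bA)^{1-\mu}\boxplus(\bB)^{1}\mid\groupW)+\nologP\ge\Hg((\bB)^{1}\mid\groupW)+\nologP$. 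This cancels the negative term and leaves
\[
\Hg\!\left((\bA)^{1-\mu}\,\middle|\,\groupW,(\bB)^{1}\right)\ \ge\ \Hg\!\left((\bB)^{1}\,\middle|\,\groupW,(\bbY_1)^{\lambda_0}\right)+\nologP .
\]

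It remains to show $\Hg((\bB)^{1}\mid\groupW,(\bbY_1)^{\lambda_0})\ge nR_2+\nologP$. Since $\beta>1$, the output $\bbY_2$ coincides with $(\bB)^{1}$ up to bounded distortion, and $W_2$ is decodable from $\bbY_2$ (Fano's inequality together with Lemma~\ref{lemma:noloss}), so $\Hg(W_2\mid(\bB)^{1})=\nologP$. Hence, using that $(\bbY_1)^{\lambda_0}$ is a deterministic function of $\bbY_1$,
\[
\Hg\!\left((\bB)^{1}\,\middle|\,\groupW,(\bbY_1)^{\lambda_0}\right)\ \ge\ \Hg\!\left(W_2\,\middle|\,\groupW,(\bbY_1)^{\lambda_0}\right)-\Hg\!\left(W_2\,\middle|\,(\bB)^{1}\right)\ \ge\ \Hg\!\left(W_2\,\middle|\,\groupW,\bbY_1\right)-\nologP ,
\]
and the secrecy identities (\ref{lemma:secrecy-2}) and (\ref{lemma:secrecy-3}) of Lemma~\ref{lemma:secrecy} give $\Hg(W_2\mid\groupW,\bbY_1)=\Hg(W_2)+\nologP=nR_2+\nologP$, closing the chain. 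The step I expect to be the crux is the very first one: correctly matching the output sub-section $(\bbY_1)^{\lambda_0}$ with the input linear combination $(\bA)^{1-\mu}\boxplus(\bB)^{1}$ through Lemma~\ref{lemma:const}, uniformly over both regimes, since translating between sub-sections of linear combinations and linear combinations of sub-sections is exactly the delicate point in this line of arguments; once that identification is in place, everything that follows is routine chain-rule manipulation plus one invocation each of the sum-set inequality and the secrecy constraint.
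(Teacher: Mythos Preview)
Your proof is correct, and while it uses the same key ingredients as the paper (the identification $(\bbY_1)^{1+\lmu}\approx(\bA)^{1-\mu}\boxplus(\bB)^1$ via Lemma~\ref{lemma:const}, the secrecy identities of Lemma~\ref{lemma:secrecy}, and decodability of $W_2$ from $(\bB)^1$), the organization differs in an interesting way. The paper introduces $W_2$ into the conditioning early: it first upper-bounds $\Hg((\bB)^1\mid\groupW)$ by $\Hg((\bbY_1)^{1+\lmu}\mid\groupW)$ (via Lemma~\ref{lemma:ais}), uses secrecy (\ref{lemma:secrecy-2}) to add $W_2$ to the conditioning at no cost, then expands $(\bbY_1)^{1+\lmu}$ back into $(\bA)^{1-\mu},(\bB)^1$ and rearranges to obtain $\Hg((\bA)^{1-\mu}\mid\groupW,(\bB)^1)\ge I_{\chg}((\bB)^1;W_2\mid\groupW)$, which is then shown to be at least $nR_2$. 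You, by contrast, keep $W_2$ out of the picture until the last step: your intermediate inequality is $\Hg((\bA)^{1-\mu}\mid\groupW,(\bB)^1)\ge\Hg((\bB)^1\mid\groupW,(\bbY_1)^{\lambda_0})$, obtained by a chain-rule expansion plus one invocation of Lemma~\ref{lemma:ais-sumset} to cancel $\Hg((\bB)^1\mid\groupW)$, and you only then bring in $W_2$ via decodability and secrecy. Your route trades the paper's use of Lemma~\ref{lemma:ais} for an explicit single-box application of Lemma~\ref{lemma:ais-sumset}; it is arguably slightly cleaner in that the secrecy constraint is invoked once, at the very end, rather than being interleaved with the entropy manipulations. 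One cosmetic point: your first displayed inequality $\Hg((\bbY_1)^{\lambda_0}\mid\groupW)\ge\Hg((\bA)^{1-\mu}\boxplus(\bB)^1\mid\groupW)+no(\log\bar P)$ is in fact an equality up to $O(1)$ by Lemma~\ref{lemma:const}, not a consequence of Lemma~\ref{lemma:ais-sumset}; the sum-set lemma is only needed for the second inequality in that chain.
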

\begin{proof}
	\begin{align}
		\Hg( (\bB)^1 | \groupW ) 
			&\leq \Hg( (\bbY_1)^{1+\lmu} | \groupW ) + \nologP \label{eq:hide-1}\\
			&= \Hg(  (\bbY_1)^{1+\lmu} | \groupW, W_2 ) + \nologP \label{eq:hide-2}\\
			&\leq \Hg( (\bA)^{1-\mu}, (\bB)^1 |  \groupW, W_2) + \nologP \label{eq:hide-3}\\
			&= \Hg( (\bB)^1 |  \groupW, W_2 ) + \Hg( (\bA)^{1-\mu} |  \groupW, W_2, (\bB)^1 )+ \nologP \label{eq:hide-4} \\
			&\leq \Hg( (\bB)^1 |  \groupW, W_2 ) + \Hg( (\bA)^{1-\mu} | \groupW, (\bB)^1 )+ \nologP.	 \label{eq:hide-5}
	\end{align}
	First, we apply Lemma \ref{lemma:ais} to obtain inequality (\ref{eq:hide-1}). Note that $(\bbY_1)^{1+\lmu}$ is well-defined because $\beta > 1$ in Regime 1 and 2, and implies that $\max\{\alpha, \beta\} > 1 + \lmu$.
	Equality (\ref{eq:hide-2}) holds due to (\ref{lemma:secrecy-2}).
	Inequality (\ref{eq:hide-3}) is true because $\mu = \beta -\alpha < 1$ in Regime 1 and 2, and  $(\bbY_1)^{1+\lmu}$ can be recovered by Lemma \ref{lemma:const} within bounded distortion from $(\bA)^{1-\mu} \boxplus (\bB)^1 $, which is a function of $(\bA)^{1-\mu}$ and $(\bB)^1$.
	Then we apply the chain rule to obtain (\ref{eq:hide-4}), and apply the fact that conditioning reduces entropy to obtain (\ref{eq:hide-5}).	
	
	By swapping terms in (\ref{eq:hide-5}), we have
	\begin{align}
		 \Hg( (\bA)^{1-\mu} |  \groupW, (\bB)^1 ) 
		 	&\geq \Hg( (\bB)^1 |\groupW ) - \Hg( (\bB)^1 |  \groupW, W_2 ) + \nologP \label{eq:hide-6}\\
		 	&= \Ig( (\bB)^1; W_2 | \groupW) + \nologP \label{eq:hide-7}\\
		 	&= \Ig( (\bB)^1, \groupW; W_2 ) - I(\groupW; W_2) + \nologP \label{eq:hide-8}\\
		 	&= \Ig( (\bB)^1, \groupW; W_2 ) + \nologP \label{eq:hide-9}\\
		 	&\geq \Ig( (\bB)^1; W_2) + \nologP \label{eq:hide-10}\\
		 	&\geq \Ig(\bbY_2; W_2) + \nologP \label{eq:hide-11}\\
		 	&= nR_2 + \nologP. \label{eq:hide-12}
	\end{align}
	We apply the definition of mutual information to obtain (\ref{eq:hide-7}), the chain rule to obtain (\ref{eq:hide-8}), and (\ref{lemma:secrecy-3}) to obtain (\ref{eq:hide-9}).
	Then we remove $\groupW$ to obtain (\ref{eq:hide-10}).
	Finally, we apply data processing inequality to obtain (\ref{eq:hide-11}), and  Fano's inequality to obtain (\ref{eq:hide-12}).
\end{proof}

The third lemma is a lower bound for the entropy $\Hg( \bbY_1 | \groupW )$.

\begin{lemma} \label{lemma:rational}
	For $\umu \leq 1$, we have
	\begin{align}
		\Hg(\bbY_1 | \groupW) &\geq \frac{\min\{ \beta, \alpha \}}{1-\umu} nR_2 + \nologP.
	\end{align}
\end{lemma}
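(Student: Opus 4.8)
The plan is to obtain the bound by stacking $k^{\star}$ additive ``peels'' of size $nR_2$ coming from Lemma~\ref{lemma:bottom} on top of a single \emph{fractional} step coming from the rational sum-set inequality Lemma~\ref{lemma:sumset}, the latter supplying exactly the non-integer part of the coefficient $\tfrac{\min\{\alpha,\beta\}}{1-\umu}$. Write $f(\lambda)\triangleq\Hg\big((\bbY_1)^{\lambda}\mid\groupW\big)$, so the target is $f(\max\{\alpha,\beta\})\geq\tfrac{\min\{\alpha,\beta\}}{1-\umu}nR_2+\nologP$; recall $\bbY_1$ equals $\bA\boxplus\bB$ up to $O(1)$ with $\bA\in\mathcal X_{\alpha}$, $\bB\in\mathcal X_{\beta}$ (valid since $\beta>1$ in Regimes~1,2), and by Lemma~\ref{lemma:const} each top sub-section $(\bbY_1)^{\lambda}$ equals $(\bA)^{\lambda-\lmu}\boxplus(\bB)^{\lambda-\umu}$ up to $O(1)$ on its valid range. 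Since Lemma~\ref{lemma:sumset} needs a rational ratio, I first restrict to $\alpha,\beta\in\mathbb Q$, which is without loss of generality because $\mathcal D_{\mbox{\tiny IC}}^{\tiny f.p.}$ is closed and the claimed bound is continuous in $(\alpha,\beta)$.

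\emph{Base estimate.} The engine of the fractional step is
\begin{align}
\Hg\big((\bA)^{1-\mu},\,(\bB)^{1}\ \big|\ \groupW\big)\ \geq\ 2\,nR_2+\nologP .\notag
\end{align}
By the chain rule this is $\Hg\big((\bA)^{1-\mu}\mid\groupW\big)+\Hg\big((\bB)^{1}\mid\groupW,(\bA)^{1-\mu}\big)$. The first summand is $\geq nR_2+\nologP$ since conditioning only reduces entropy and Lemma~\ref{lemma:hide} gives $\Hg\big((\bA)^{1-\mu}\mid\groupW,(\bB)^{1}\big)\geq nR_2+\nologP$. For the second summand, since $\bX_1,\bX_2$ are built from independent randomness and $\chg$ is independent of everything by~\eqref{eq:fp}, $(\bB)^1$ is independent of $\big(\chg,W_1,\bAmu,(\bA)^{1-\mu}\big)$; hence, conditioning on the remaining part $\bBmu$ of $\groupW$ only, the second summand equals $\Hg\big((\bB)^1\mid\bBmu\big)\geq \Ig\big((\bB)^1;W_2\mid\bBmu\big)\geq H(W_2\mid\bBmu,\chg)-\nologP$, and this is $\geq nR_2-\nologP$ because $(\bB)^1$ determines $\bbY_2$ up to $O(1)$ (so $W_2$ is decodable from it by Fano) while $\bBmu$ coincides with $(\bbY_1)^{\umu}$ up to $O(1)$, giving $\Ig(W_2;\bBmu)\leq\Ig(W_2;\bbY_1)=\nologP$ by Lemma~\ref{lemma:secrecy}.

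\emph{Reduction and fractional step.} Iterating Lemma~\ref{lemma:bottom} from the top of $\bbY_1$ yields $f(\max\{\alpha,\beta\})\geq k\,nR_2+f\big(\max\{\alpha,\beta\}-k(1-\umu)\big)+\nologP$ for every admissible integer $k\geq 0$; the reachable stopping heights form an arithmetic progression of step $1-\umu$, and I pick $k^{\star}$ so that the residual sub-section $(\bbY_1)^{\lambda^{\star}}$, $\lambda^{\star}\triangleq\max\{\alpha,\beta\}-k^{\star}(1-\umu)$, has its \emph{shorter} component of height $q^{\star}\in[\,1-\umu,\ 2(1-\umu)\,]$ — possible since that interval has length exactly the step $1-\umu$. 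Now apply Lemma~\ref{lemma:sumset} to $\bV=\bT\boxplus\bU$, the taller-$\boxplus$-shorter decomposition of $(\bbY_1)^{\lambda^{\star}}$, with $\mathcal W=\{W_1\}$, with the conditioned top sub-sections chosen to equal the nontrivial one of $\bAmu,\bBmu$ (so the total conditioning is $\groupW$ and the left side becomes $2p\,f(\lambda^{\star})+O(1)$ via Lemma~\ref{lemma:const}), with $q=q^{\star}$, and with $p=1-\umu$. The range of $q^{\star}$ makes $\tfrac12\leq p/q\leq 1$, and rationality of $\alpha,\beta$ makes $p/q\in\mathbb Q$; tracking sub-sections, the $p$-enlarged top sub-sections on the right are exactly $(\bA)^{1-\mu}$ and $(\bB)^{1}$, so the inequality reads $2p\,f(\lambda^{\star})\geq q^{\star}\,\Hg\big((\bA)^{1-\mu},(\bB)^1\mid\groupW\big)+\nologP$. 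The base estimate then gives $f(\lambda^{\star})\geq\tfrac{q^{\star}}{1-\umu}nR_2+\nologP$, and since $k^{\star}(1-\umu)+q^{\star}=\max\{\alpha,\beta\}-|\alpha-\beta|=\min\{\alpha,\beta\}$, combining with the reduction step yields $f(\max\{\alpha,\beta\})\geq\tfrac{\min\{\alpha,\beta\}}{1-\umu}nR_2+\nologP$.

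\emph{Main obstacle.} The delicate part is the fractional step: one must see that Lemma~\ref{lemma:sumset} can be used to ``transfer'' the entropy of the residual sub-section onto the pair $\{(\bA)^{1-\mu},(\bB)^{1}\}$ — the former being the jamming object controlled by Lemma~\ref{lemma:hide}, the latter being $\bbY_2$, which decodes $W_2$ — and then calibrate $p$ and $q$ so that the unavoidable factor $q/(2p)$ from Lemma~\ref{lemma:sumset} is exactly cancelled by the factor $2$ gained from the joint entropy of the two chunks, using the Lemma~\ref{lemma:bottom} reduction only to move $\lambda^{\star}$ into the window where the constraint $\tfrac12\leq p/q\leq1$ holds. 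Everything else (the $\nologP$ bookkeeping, validity ranges for Lemma~\ref{lemma:const}, and the boundary cases of the iteration, where the relevant $\bA$-sub-sections degenerate to empty) is routine.
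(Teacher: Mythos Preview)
Your plan is essentially the paper's: peel off $k^\star$ copies of $nR_2$ via Lemma~\ref{lemma:bottom}, then handle the fractional remainder with Lemma~\ref{lemma:sumset}, feeding it the base estimate $\Hg\big((\bA)^{1-\mu},(\bB)^{1}\mid\groupW\big)\geq 2nR_2+\nologP$ built from Lemma~\ref{lemma:hide} and decodability of $W_2$ from $(\bB)^{1}$. Two points deserve attention, however.

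First, the rationality shortcut is not justified at the level of the lemma. Lemma~\ref{lemma:rational} is an entropy inequality for a \emph{fixed} channel with fixed real $(\alpha,\beta)$; you cannot ``assume $\alpha,\beta\in\mathbb Q$'' there. Your continuity remark lives at the GDoF-region level and would need a nontrivial (and unproven) continuity of $\mathcal D_{\mbox{\tiny IC}}^{\tiny f.p.}$ in the channel parameters. The paper keeps $(\alpha,\beta)$ arbitrary and instead approximates only the ratio in Lemma~\ref{lemma:sumset}: it chooses rational sequences $m_i\downarrow 1-\umu$ and $r_i\uparrow\gamma$, applies Lemma~\ref{lemma:sumset} with $p=m_i$, $q=r_i$ (so $p/q\in\mathbb Q$ and eventually $\tfrac12\le p/q\le 1$), uses Lemma~\ref{lemma:ais} to pass between the $r_i$- and $m_i$-height sub-sections and $(\bA)^{1-\mu},(\bB)^{1}$, and then lets $i\to\infty$ to obtain the coefficient $\gamma/(1-\umu)$. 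Replacing your shortcut by this limiting argument is the clean fix.

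Second, your base estimate conditions on the $\bA$-piece first and then invokes independence of $\bA$ and $\bB$ (i.e., transmitter non-cooperation) to simplify $\Hg\big((\bB)^{1}\mid\groupW,(\bA)^{1-\mu}\big)$. That is valid for the ZIC as stated, but the paper instead conditions on $(\bB)^{1}$ first and uses only~\eqref{lemma:secrecy-3} together with decodability of $W_2$ from $(\bB)^{1}$; this avoids any independence assumption and is precisely what lets the same converse extend verbatim to the ZBC in Theorem~\ref{thm:gdofBC}.
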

\begin{proof}
	Let $\min\{ \beta, \alpha \}  = k ( 1 - \umu ) + \gamma$, where $k$ is a non-negative integer, and $\gamma$ satisfies either $\gamma = 0$ or $1-\umu < \gamma < 2(1-\umu)$
	\footnote{The existence of such $k$ and $\gamma$ can be shown as follows. In Regime 1, since $\beta > 1$, we can find $k , \gamma$, where either $\gamma = 0$ or $1 < \gamma <2$, such that $\beta = k+\gamma$. On the other hand, in Regime 2, since $\alpha > 1+\alpha-\beta$, we can find $k ,\gamma$ such that $\alpha = k (1+\alpha-\beta) + \gamma$ with either $\gamma =0$ or $ 1+\alpha-\beta < \gamma < 2(1+\alpha-\beta).$}.
	As an intermediate result, we claim that 
	\begin{align}
		\Hg( (\bbY_1)^{\gamma + |\mu|} | \groupW) \geq \frac{\gamma}{1-\umu} nR_2 + \nologP. \label{eq:rat-gamma1}
	\end{align}
	The inequality is trivial when $\gamma = 0$. 
	If $\gamma \neq 0$, we can find a non-decreasing sequence $\{r_i\}$ with $r_i \in \mathbb{Q}$ and $\lim_{i \rightarrow \infty}r_i = \gamma$, and a non-increasing sequence $\{m_i\}$ with $m_i \in \mathbb{Q}$ and $\lim_{i \rightarrow \infty}m_i = 1-\umu$.
	\footnote{
	Such a non-increasing sequence $\{m_i\}$ and a non-increasing sequence $\{r_i\}$ can be constructed by the decimal representation of $1-\umu$ and $\gamma$, respectively. For example, let $0.\mu_1\mu_2\cdots \mu_i$ be the $i-$decimal of $1-\umu$, where $\mu_j \in \{0,1,\cdots, 9\}$ for $j \in [i]$. We may let $m_i = 0.\mu_1\mu_2\cdots \mu_i + 10^{-i} = \left( \lrfloor{(1-\umu)\times 10^i}  + 1\right) \times 10^{-i}$, which is a rational number no less than $1-\umu$. On the other hand, let $0.\gamma_1\gamma_2\cdots \gamma_i$ be the $i-$decimal of $\gamma$, where $\gamma_j \in \{0,1,\cdots, 9\}$ for $j \in [i]$. We may let $r_i = 0.\gamma_1\gamma_2\cdots \gamma_i = \lrfloor{\gamma \times 10^i} \times 10^{-i}$, which is a rational number no greater than $\gamma$.
	}
	Let $N = \min\left\{ i \middle | \frac{m_i}{r_i} < 1 \right\}$. Such $N$ exists, because as $i \rightarrow \infty$, we have $r_i \rightarrow \gamma$, $m_i \rightarrow 1 - \umu$, and $ \frac{1}{2} < \frac{1-\umu}{\gamma} < 1$.

	For $i \geq N$, we have
	\begin{align}
		&\Hg( (\bbY_1)^{\gamma + |\mu|} | \groupW) \\
			&\geq \Hg((\bbY_1)^{r_i + |\mu|} | \groupW) + \nologP \label{eq:rat-1} \\
			&= \frac{1}{2m_i} \left( 2m_i \Hg((\bbY_1)^{r_i + |\mu|} | \groupW )  \right)+ \nologP \label{eq:rat-2}\\
			&\geq \frac{r_i}{2m_i} \Hg(  (\bA)^{m_i + \lmu}, (\bB)^{m_i + \umu} | \groupW  ) + \nologP \label{eq:rat-3}\\
			&\geq \frac{r_i}{2m_i} \Hg(  (\bA)^{1-\mu}, (\bB)^{1} | \groupW  ) + \nologP \label{eq:rat-4}\\
			&= \frac{r_i}{2m_i} \left(  \Hg( (\bB)^{1} | \groupW ) + \Hg(  (\bA)^{1-\mu} | \groupW, (\bB)^{1}  ) \right) + \nologP  \label{eq:rat-5}\\
			&\geq \frac{r_i}{2m_i}	\left(  \Hg(W_2| \groupW) + \Hg( (\bB)^{1} | \groupW, W_2  ) - \Hg(W_2 | \groupW, (\bB)^{1} ) + nR_2 \right)	+ \nologP \label{eq:rat-6}\\
			&\geq \frac{r_i}{m_i} nR_2 + \nologP. \label{eq:rat-7}
	\end{align}
	Inequality (\ref{eq:rat-1}) holds because of Lemma \ref{lemma:ais} and the fact that $r_i \leq \gamma$.
	Then we multiply and divide the entropy term by $2m_i$ to get (\ref{eq:rat-2}), and apply\footnote{To apply Lemma \ref{lemma:sumset}, we define $\bT = (\bA)^{r_i + \lmu} \in \mathcal{X}_{r_i + \lmu}, \bU = (\bB)^{r_i + \umu} \in \mathcal{X}_{r_i +\umu}, p = m_i,$ and $q = r_i$. This leads to $\bV = (\bA)^{r_i + \lmu}\boxplus (\bB)^{r_i + \umu}$, which by Lemma \ref{lemma:const} can be recovered from $(\bbY_1)^{r_i + |\mu|}$ within bounded distortion.} 
	Lemma \ref{lemma:sumset} to obtain (\ref{eq:rat-3}).
	Inequality (\ref{eq:rat-4}) holds because of Lemma \ref{lemma:ais} and the fact that $m_i + \lmu \geq 1 - \umu + \lmu = 1-\mu$, and $m_i + \umu \geq 1$.
	Next we apply the chain rule to get (\ref{eq:rat-5}), and apply the chain rule and Lemma \ref{lemma:hide} to get (\ref{eq:rat-6}).
	Equality (\ref{eq:rat-7}) follows from (\ref{eq:rat-6}) due to the following reasons: (a) we apply (\ref{lemma:secrecy-3}) and $nR_2 = H(W_2)$ to the first entropy term; (b) the second entropy term is non-negative; and (c) $W_2$ can be decoded from $(\bB)^1$, which makes the third entropy term  $\nologP$.
	Since inequality (\ref{eq:rat-7}) is valid for all $i \geq N$, we have
	\begin{align}
		\Hg( (\bbY_1)^{\gamma + |\mu|} | \groupW ) & \geq \lim_{i \rightarrow \infty} \frac{r_i}{m_i} nR_2 + \nologP = \frac{\gamma}{1-\umu} nR_2 + \nologP. \label{eq:rat-gamma2}
	\end{align}
	
	Next, based on the intermediate result (\ref{eq:rat-gamma1}), we show the following lower bound.
	\begin{align}
		\Hg(\bbY_1 | \groupW) \geq knR_2 + \Hg( (\bbY_1)^{|\mu| + \gamma} | \groupW )  + \nologP. \label{eq:botrat-0} 
	\end{align} 
	This bound is reduced to (\ref{eq:rat-gamma1}) when $k=0$ because of the following identity
	\begin{align}
		\max\{\beta, \alpha\} = |\mu| + \min \{\beta, \alpha\} = |\mu| + k(1-\umu) + \gamma. \label{eq:mugamma}
	\end{align}
	On the other hand, when $k \geq 1$, we apply Lemma \ref{lemma:bottom} as follows.
	\begin{align}
		\Hg(\bbY_1 | \groupW) 
		& \geq nR_2 + \Hg( (\bbY_1)^{\max\{\beta, \alpha\} - (1-\umu)} | \groupW ) + \nologP \label{eq:botrat-1} \\
		& \geq 2nR_2 + \Hg( (\bbY_1)^{\max\{\beta, \alpha\} - 2(1-\umu)} | \groupW )  + \nologP \label{eq:botrat-2} \\
		& \geq \cdots \notag\\
		& \geq knR_2 + \Hg( (\bbY_1)^{\max\{\beta, \alpha\} - k(1-\umu)} | \groupW )  + \nologP \label{eq:botrat-3} \\
		& = knR_2 + \Hg( (\bbY_1)^{|\mu| + \gamma} | \groupW )  + \nologP. \label{eq:botrat-4} 
	\end{align}
	Lemma \ref{lemma:bottom} can be applied to (\ref{eq:botrat-1}) -- (\ref{eq:botrat-3}) because in both Regime 1 and 2, we have $\umu \leq 1$  and $\max\{\alpha, \beta\} - (k-1) (1-\umu) \geq 1-\mu$
	\footnote{This can be seen by the following. First by (\ref{eq:mugamma}) we have $\max\{\alpha, \beta\} - (k-1) (1-\umu) = 1-\umu + \gamma + |\mu| = 1 + \gamma + \lmu$. In Regime 1, we have $1 + \gamma + \lmu \geq 1+\lmu = 1-\mu$, while in Regime 2, we have $1 + \gamma + \lmu \geq 1 \geq 1-\mu$. }.	
	Next we apply (\ref{eq:mugamma}) to obtain (\ref{eq:botrat-4}).
	
	Finally, we plug (\ref{eq:rat-gamma1}) into (\ref{eq:botrat-0}), and get 
	\begin{align}
		\Hg(\bbY_1 | \groupW)
		& \geq knR_2 + \frac{\gamma}{1 -\umu} nR_2  + \nologP \label{eq:botrat-5}\\
		& = \frac{\min\{ \beta, \alpha \}}{1 - \umu} nR_2  + \nologP, \label{eq:botrat-6}
	\end{align}
	where equality (\ref{eq:botrat-6}) holds by applying the identity $\min \{\beta, \alpha\}  = k(1-\umu) + \gamma$.
\end{proof}

To finish the proof of the weighted-sum bound, we start by applying Fano's inequality as follows.
\begin{align}
	nR_1 & \leq \Ig(\bbY_1; W_1) + \nologP \label{eq:fano-1}\\
		 & \leq \Ig(\bbY_1, (\bB)^{\umu}; W_1) + \nologP \label{eq:fano-2} \\
		 & = \Ig(\bbY_1; W_1 | (\bB)^{\umu}) + \Ig((\bB)^{\umu}; W_1) + \nologP \label{eq:fano-3}\\
		 & = \Ig( \bbY_1; W_1 | (\bB)^{\umu} ) + \Ig((\bbY_2)^{\umu}; W_1) + \nologP \label{eq:fano-4}\\
		 & = \Ig( \bbY_1; W_1 | (\bB)^{\umu} ) + \nologP \label{eq:fano-5}\\
		 & \leq \Hg(\bbY_1 | (\bB)^{\umu} ) - \Hg( \bbY_1 | \groupW ) + \nologP \label{eq:fano-6}\\
		 & \leq \alpha n \log \bP - \frac{\min\{ \beta, \alpha \}}{1- \umu} nR_2+ \nologP \label{eq:fano-7}
\end{align}
Inequality (\ref{eq:fano-2}) holds because adding $(\bB)^{\umu}$ does not hurt the mutual information.
Then we apply the chain rule to get (\ref{eq:fano-3}).
Since ${\umu} \leq 1$ in Regime 1 and 2, $(\bB)^{\umu}$ can be converted into $(\bbY_2)^{\umu}$ within bounded distortion by Lemma \ref{lemma:const}, and as a result we have (\ref{eq:fano-4}).
Equality (\ref{eq:fano-5}) holds due to (\ref{eq:noloss-2}) and the secrecy constraint (\ref{eq:secrecy}), and the fact that $\umu \leq 1$.
Then seeing that $\{(\bB)^{\umu}\} \subset \groupW$, inequality (\ref{eq:fano-6}) is obtained by applying the fact that conditioning reduces entropy.
To obtain inequality (\ref{eq:fano-7}), first we apply the uniform bound to the first entropy in (\ref{eq:fano-6}) as follows:
\begin{align}
	\Hg(\bbY_1 | (\bB)^{\umu}) \leq \Hg( (\bbY_1)^\alpha_0 ) \leq \alpha n \log P + \nologP.
\end{align}  
And then we apply Lemma \ref{lemma:rational} to the second entropy in (\ref{eq:fano-6}) . Note that Lemma \ref{lemma:rational} is applicable since $\umu \leq 1$ in Regime 1 and 2.

Finally by applying the definition of GDoF, we get
\begin{align}
		& d_1 + \frac{\min\{ \beta, \alpha \}}{1- \umu} d_2 = \lim_{P \rightarrow \infty} \frac{R_1 + \frac{\min\{ \beta, \alpha \}}{1- \umu} R_2}{\frac{1}{2}\log P}\leq \alpha \label{eq:fano-8}\\
	\implies & \begin{cases}
		d_1 + \beta d_2 \leq \alpha & \text{if } \alpha >\beta\\
		\frac{d_1}{\alpha} + \frac{d_2}{1 + \alpha - \beta} \leq 1 & \text{if } \beta-1 < \alpha \leq \beta
	\end{cases}. \label{eq:fano-9}
\end{align}
Inequalities (\ref{eq:fano-9}) are the desired weighted-sum bounds for the respective Regime 1 and 2. 
Thus, we complete the proof.$\hfill\square$

\section{Conclusion} \label{sec:conclusion}
Motivated by robustness concerns that are paramount in secure communications, in this work we  study the robust GDoF of secure communication over a $2$ user $Z$ interference channel. The combination of security, robustness and GDoF optimality makes this problem uniquely challenging relative to prior work, while the $Z$ channel setting limits the number of parameters sufficiently to allow a  GDoF characterization for all parameter regimes. In the process we also explore the scope of  sum-set inequalities based on Aligned Images bounds that were recently introduced in \cite{Arash_Jafar_sumset}, which involve joint entropies of various sub-sections of input signals. We found that these new sum-set inequalities, combined with sub-modularity properties of entropy, are sufficient to characterize the robust secure GDoF region of a $Z$-interference channel (as well as a further generalization to the corresponding broadcast channel setting). The result shows that the GDoF benefits of structured jamming, e.g., aggregate decoding and cancellation of jammed signals, are entirely lost under finite precision CSIT. The result reaffirms the hypothesis that random codes may be enough for approximate capacity characterizations under robust assumptions. Thus, while the fundamental limits of structured codes under ideal assumptions remain both practically fragile and theoretically intractable,  there remains hope that continued advances in Aligned Images converse bounds may eventually place within reach a robust network information theory of wireless networks, based on the understanding of the fundamental limits of random codes.

\appendix
\section{Proof of Lemma \ref{lemma:gdof_perfect}} \label{sec:proofPerfect}

In this section we present the proof of the SGDoF region $\mathcal{D}_{\tiny {IC}}^{\tiny p}$. 
The converse bounds are available from Lemma 8 of \cite{Chen_Jam} (for single-user bound) and Lemma 2 of \cite{He_Yener_BoundGIC} (for the sum bound). 
The converse bounds are tight in Regime 3 and 4 defined in Theorem \ref{thm:gdof}, as $\mathcal{D}_{\tiny {IC}}^{\tiny p} = \mathcal{D}_{\tiny {IC}}^{\tiny f.p.}$ in these regimes, and the schemes for finite precision CSIT also apply to the case with perfect CSIT.
The remaining part to be shown is the achievability of $\mathcal{D}_{\tiny {IC}}^{\tiny p}$ in Regime 1 and 2.

In the following presentation of the schemes, without loss of generality we work on the simplified ZIC with all channel gains normalized to be $1$; i.e.,
\begin{align}
	Y_1(t) &= \sqrt{P^\alpha} X_1(t) + \sqrt{P^\beta}X_2(t) + Z_1(t), \label{eq:modelPerfect1}\\
	Y_2(t) &= \sqrt{P} X_2(t) + Z_2(t),\label{eq:modelPerfect2}
\end{align}
where $t \in [n]$, $Z_1(t), Z_2(t) \sim \mathcal{N}(0,1)$ and $X_1(t), X_2(t)$ are subject to unit input power constraint. 
This can be done by normalizing the inputs and the outputs of the original model (\ref{eq:model1}) and (\ref{eq:model2}) with the channel coefficients, which are known at both sides. Also we set the noise variances to  unity since they are inconsequential to the GDoF analysis.

\subsection{The Achievability in Regime 1} \label{sec:gdof_case1}

The corner points of $\mathcal{D}_{\tiny {IC}}^{\tiny p}$ in Regime 1 are $(d_1, d_2) = (\alpha, 1)$ and $(\beta-1, 1)$. The former is trivial, and time sharing achieves all tuples on the line segment between these two point. So we show the tuple $(\beta-1, 1)$ is achievable with a scheme based on lattice alignment and aggregate decoding.

Let $Q \triangleq \lrfloor{ \sP{\alpha-\epsilon} }$, $Q_J \triangleq \lrfloor{ \sP{\alpha-1-\epsilon} }$, and $A = 8 \sP{2\epsilon}$, where $\epsilon > 0$. 
In the following we suppress the channel-use index $t$ for brevity.
Define $X_1 = V_{11} + J_1 + V_{12}$ and $X_2 = V_2$, where $V_{11}, J_1, V_{12}, V_2$ are drawn respectively from the following sets (referred to as lattices):
\begin{align}
	V_{11} \in \Gamma_{11} &\triangleq A\sP{-\beta} \times \lrbr{ 0, \pm Q, \pm 2Q, \cdots, \pm \lrfloor{\sP{\beta - \alpha - \epsilon}} Q }, \\
	J_{1} \in \Gamma_{J} &\triangleq A \sP{-\beta} \times \lrbr{ 0, \pm Q_J, \pm 2Q_J, \cdots, \pm \lrpar{\lrfloor{\tfrac{1}{8} \sP{1 - \epsilon}  } -1 } Q_J }, \\
	V_{12} \in \Gamma_{12} &\triangleq A\sP{-\beta} \times \lrbr{ 0, \pm 1, \pm 2, \cdots, \pm \lrpar{\lrfloor{\tfrac{1}{4} \sP{ \alpha - 1 - 2\epsilon } } -1} }, \\
	V_{2} \in \Gamma_{2} &\triangleq A \sP{-\alpha} \times \lrbr{0, \pm Q_J, \pm 2Q_J, \cdots, \pm \lrpar{\lrfloor{\tfrac{1}{8} \sP{1-\epsilon} } - 1} Q_J}.
\end{align}
where for a real number $\xi$ and a finite set of integers $\{x_1, x_2, \cdots, x_n\}$, we define their product $\xi \times \{x_1, x_2, \cdots, x_n\} \triangleq \{\xi x_1, \xi x_2, \cdots, \xi x_n\}$. 
Note that such a choice of $A, Q, Q_J$, along with the lattices $\Gamma_{11}, \Gamma_{J}, \Gamma_{12}$ and $\Gamma_{2}$,  satisfies the unit input power constraint.

Let $V_{11}, V_{12}, J_1$ and  $V_2$ be independent and uniformly distributed in their respective lattices. 
Message $W_1$ is split into two parts, which are respectively encoded into $V_{11}$ and $V_{12}$, and message $W_2$ is encoded into $V_2$, all with wiretap codebooks. 
The following rates are achievable under secrecy constraints \cite[Theorem 4]{Xie_Ulukus}:
\begin{align}
	R_1 &\geq I(Y_1; V_{11}, V_{12}), \label{eq:case1R1}\\
	R_2 &\geq I(Y_2; V_2) - I(Y_1; V_2 | V_{11}, V_{12}). \label{eq:case1R2}
\end{align}

We follow the argument in \cite{Xie_Ulukus, Bresler_Tse} to bound these rates from below.
First we bound $I(Y_1; V_{11}, V_{12})$ from below as follows.
\begin{align}
	& I(Y_1; V_{11}, V_{12}) \\
	&= H(V_{11}, V_{12}) - H(V_{11}, V_{12} | Y_1) \label{eq:case1R1-1}\\
	&\geq \lrpar{  \log |\Gamma_{11}|+ \log |\Gamma_{12}|  } \lrpar{1 - \Pr [ \hat{V}_{11} \neq V_{11} \text{ or } \hat{V}_{12} \neq V_{12}]} - 1 \label{eq:case1R1-2}\\
	&= \lrpar{  \log \lrpar{ 2 \lrfloor{ \sP{\beta - \alpha -\epsilon } } + 1} + \log \lrpar{ 2\lrfloor{\tfrac{1}{4}  \sP{\alpha-1-2\epsilon}}-1 }  } \lrpar{1 - \Pr [ \hat{V}_{11} \neq V_{11} \text{ or } \hat{V}_{12} \neq V_{12}]} - 1 \label{eq:case1R1-3}\\
	&\geq \lrpar{ \beta - 1 - 3\epsilon } \log \bP \lrpar{1 - \Pr [ \hat{V_{11}} \neq V_{11} \text{ or } \hat{V_{12}} \neq V_{12}]} - 3. \label{eq:case1R1-4}
\end{align}
In (\ref{eq:case1R1-2}), $\hat{V}_{11}$ and $\hat{V}_{12}$ follow the nearest-neighbor decoding rule and are respectively defined as 
\begin{align}
	\hat{V}_{11} &\triangleq \arg \min_{V_{11} \in \Gamma_{11} } \lrabs{Y_1 - \sP{\beta} V_{11}} , \label{eq:case1V11}\\
	\hat{V}_{12} &\triangleq \arg \min_{V_{12} \in \Gamma_{12}} \lrabs{\tilde{Y}_1 - AQ_J \lrbkt{\frac{\tilde{Y}_1}{AQ_J}} - \sP{\beta} V_{12} }, \label{eq:case1V12}
\end{align}
where $\tilde{Y}_1 \triangleq Y_1 - \sP{\beta} \hat{V}_{11} $, and $[x]$ rounds $x$ to its nearest integer for all $x \in \mathbb{R}$.
Inequality (\ref{eq:case1R1-2}) holds due to Fano's inequality and the fact that $V_{i1}$ is uniformly taken from $\Gamma_{1i}$, where $i = 1,2$.
Inequality (\ref{eq:case1R1-4}) holds for $P$ large enough because for $x \geq 2$, we have
\begin{align}
	\log(2 \lrfloor{x} - 1) \geq \log x. \label{eq:loglbd}
\end{align}

Next we follow steps similar to (\ref{eq:case1R1-1}) -- (\ref{eq:case1R1-4}) to bound $I(Y_2; V_2)$ from below as follows
\begin{align}
	I(Y_2; V_2) 
	&= H(V_2) - H(V_2|Y_2) \label{eq:case1R2-11}\\
	& = \lrpar{\log |\Gamma_2|} \lrpar{ 1- \Pr [ \hat{V}_2 \neq V_2]} -1 \label{eq:case1R2-12}\\
	& = \log \lrpar{  2\lrfloor{ \tfrac{1}{8} \sP{1-\epsilon} } - 1 } \lrpar{ 1- \Pr [ \hat{V}_2 \neq V_2]} -1  \label{eq:case1R2-13}\\
	&\geq (1-\epsilon) \log \bP  \lrpar{ 1 - \Pr [ \hat{V}_2 \neq V_2] } - 4, \label{eq:case1R2-14}
\end{align}
where in (\ref{eq:case1R2-12}) $ \hat{V}_2$ is defined as 
\begin{align}
	\hat{V}_2 &\triangleq \arg \min_{V_2 \in \Gamma_2}  \lrabs{Y_2 - \sP{\alpha} V_2}, \label{eq:case1V2}
\end{align}

As for the negative term in (\ref{eq:case1R2}), $I(Y_1; V_2 | V_{11}, V_{12})$, it is bounded above as follows.
\begin{align}
	I(Y_1; V_2 | V_{11}, V_{12})
	& \leq I(Y_1; V_2 | V_{11}, V_{12}, Z_1) \label{eq:case1R2-21}\\
	&= I( \sP{\beta} J_1 + \sP{\alpha} V_2; V_2) \label{eq:case1R2-22}\\
	&= H( \sP{\beta} J_1 + \sP{\alpha} V_2) - H(\sP{\beta} J_1) \label{eq:case1R2-23}\\
	&\leq \log \lrpar{ 4\lrfloor{\tfrac{1}{8} \sP{1-\epsilon}} - 3} - \log \lrpar{  2 \lrfloor{ \tfrac{1}{8}\sP{1-\epsilon} } - 1 } \label{eq:case1R2-24} \\
	&\leq 1. \label{eq:case1R2-25}
\end{align}
Ineqaulity (\ref{eq:case1R2-21}) holds since $Z_1$ is independent of $V_2$, and (\ref{eq:case1R2-22}) follows because $(V_{11}, V_{12}, Z_1)$ is independent of $(J_1, V_2)$.
Inequality (\ref{eq:case1R2-24}) is true due to the uniform bound and the fact that $  \sP{\beta} J_1 + \sP{\alpha} V_2 $ takes value from the set $AQ_J \times \{0, \pm 1, \pm 2, \cdots, \pm 2 \lrpar{ \lrfloor{\tfrac{1}{8} \sP{1-\epsilon} }-1} \}$.
Finally (\ref{eq:case1R2-25}) holds when $P$ is large enough due to (\ref{eq:loglbd}).

It remains to find upper bounds of $\Pr [\hat{V}_{11} \neq V_{11} \text{ or } \hat{V}_{12} \neq \hat{V}_{12}]$ in (\ref{eq:case1R1-4}) and $\Pr[ \hat{V}_{2} \neq V_2]$ in (\ref{eq:case1R2-14}).	
They vanish as $P$ goes to infinity, as stated in the following lemma, whose proof is relegated to Appendix \ref{sec:case1prob}.

\begin{lemma}\label{lemma:case1prob}
	Given $\hat{V}_{11}, \hat{V}_{12},$ and $\hat{V}_{2}$ are respectively defined in (\ref{eq:case1V11}), (\ref{eq:case1V12}) and (\ref{eq:case1V2}), we have 
	\begin{align}
		&\lim_{P \rightarrow \infty} \Pr [\hat{V}_{11} \neq V_{11} \text{ or } \hat{V}_{12} \neq V_{12}] = 0, \\
		&\lim_{P \rightarrow \infty} \Pr[ \hat{V}_{2} \neq V_2] = 0.
	\end{align}
\end{lemma}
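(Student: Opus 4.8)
The plan is to prove both limits by the standard minimum‑distance argument for lattice decoding: in each case, after peeling off whatever has already been recovered, the received constellation is an equally spaced set of points whose minimum distance grows like a positive power of $P$, so the nearest‑neighbor (or rounding) rule fails only when an additive Gaussian term exceeds a threshold that diverges with $P$, and a Gaussian tail bound finishes it. I would emphasize at the outset that the factor $A = 8\sP{2\epsilon}$ is precisely a guard factor: it inflates every lattice spacing by $\Theta(\sP{\epsilon})$ relative to the aggregate of everything that is not being decoded at that stage, which is exactly what makes all the tail probabilities vanish.

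For the point‑to‑point decoder $\hat V_2$: Receiver $2$ sees $Y_2 = \sP{1}V_2 + Z_2$ with $V_2$ uniform on $\Gamma_2$, so $\sP{1}\Gamma_2$ is equally spaced with minimum distance $d_2^{\min} = \sP{1}\,A\,\sP{-\alpha}\,Q_J$. Using $A = 8\sP{2\epsilon}$ and $Q_J = \lrfloor{\sP{\alpha-1-\epsilon}} \geq \tfrac12\sP{\alpha-1-\epsilon}$ for $P$ large, I get $d_2^{\min} \geq 4\sP{\epsilon}$. The rule (\ref{eq:case1V2}) errs only if $\lrabs{Z_2} \geq d_2^{\min}/2 \geq 2\sP{\epsilon}$, and $\Pr[\lrabs{Z_2}\geq 2\sP{\epsilon}] \leq 2e^{-2P^{\epsilon}} \to 0$, which gives the second limit.

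For the layered decoder at Receiver $1$, I would first record deterministic magnitude bounds for $J_1, V_{12}, V_2$ straight from their lattice definitions (each is a constant times $\sP{\eta}$ for an explicit exponent $\eta$), and then follow the three operations in (\ref{eq:case1V11})--(\ref{eq:case1V12}). Step (i): in $Y_1$, everything other than $\sP{\alpha}V_{11}$ is $\sP{\alpha}(J_1+V_{12}) + \sP{\beta}V_2 + Z_1$; using $\alpha>\beta$ (Regime $1$) together with the $\sP{2\epsilon}$ factor in $A$, the three signal terms sum to at most a $\Theta(\sP{-\epsilon})$ fraction of the minimum distance $d_{11}^{\min}$ of $\sP{\alpha}\Gamma_{11}$, hence at most $\tfrac14 d_{11}^{\min}$ for large $P$, so (\ref{eq:case1V11}) returns $\hat V_{11}=V_{11}$ unless $\lrabs{Z_1}\geq \tfrac14 d_{11}^{\min}$, a diverging threshold. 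Step (ii): on that event, $\tilde Y_1 = Y_1 - \sP{\beta}\hat V_{11}$ equals the aligned aggregate $\sP{\alpha}J_1 + \sP{\beta}V_2$ (which by design of $\Gamma_J,\Gamma_2$ lies on the coarse lattice $AQ_J\mathbb{Z}$) plus a residual that is much smaller than half the coarse spacing $AQ_J$, so the rounding $\tilde Y_1 - AQ_J\lrbkt{\tilde Y_1/(AQ_J)}$ cancels the aggregate exactly unless $\lrabs{Z_1}$ exceeds $\tfrac12 AQ_J$ minus that residual — again diverging. Step (iii): what remains is $\sP{\alpha}V_{12}+Z_1$, and $\sP{\alpha}\Gamma_{12}$ is equally spaced with minimum distance $A\sP{\alpha-\beta}$, which diverges (the $\sP{2\epsilon}$ factor covers even $\alpha$ close to $\beta$), so (\ref{eq:case1V12}) errs only if $\lrabs{Z_1}$ exceeds half of it. A union bound over these finitely many events, each controlled by a Gaussian tail bound, gives $\Pr[\hat V_{11}\neq V_{11}\text{ or }\hat V_{12}\neq V_{12}]\to 0$, i.e.\ the first limit.

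The hard part is the power‑level bookkeeping: I must verify, with explicit constants valid for all sufficiently large $P$, that at each stage the quantity playing the role of ``interference plus residual'' is below (say) a quarter of the relevant lattice minimum distance, and in particular that $\Gamma_J$ and $\Gamma_2$ are scaled so that $\sP{\alpha}J_1 + \sP{\beta}V_2$ genuinely collapses onto the single coarse lattice $AQ_J\mathbb{Z}$ that the rounding operation in (\ref{eq:case1V12}) is tuned to. Once the $A=8\sP{2\epsilon}$ guard gap is confirmed to be present at every stage, the Gaussian tail estimates are immediate.
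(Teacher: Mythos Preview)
Your approach is correct and essentially identical to the paper's: a single Gaussian-tail event $\mathcal{E}=\{|Z_1|\ge A/2\}$ (which dominates each of your per-stage thresholds) governs all three decoding/rounding steps, since on $\mathcal{E}^c$ the deterministic magnitude bounds on $V_{12},J_1,V_2$ keep the residual at every stage strictly below half the relevant lattice spacing, with the key structural point---which you correctly isolate---that the scaled $J_1$ and $V_2$ collapse onto the common coarse grid $AQ_J\mathbb{Z}$ so the rounding in (\ref{eq:case1V12}) removes their aggregate exactly. The only wrinkle to watch in the bookkeeping is that the decoding rules (\ref{eq:case1V11})--(\ref{eq:case1V12}) and the paper's own proof scale $X_1$ by $\sP{\beta}$ (not $\sP{\alpha}$ as written in (\ref{eq:modelPerfect1})); under that convention $\sP{\beta}J_1,\sP{\alpha}V_2\in AQ_J\mathbb{Z}$ as required, so use it when you verify the alignment claim.
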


Finally, by respectively plugging (\ref{eq:case1R1-4}) into (\ref{eq:case1R1}), and plugging (\ref{eq:case1R2-14}) and (\ref{eq:case1R2-25}) into (\ref{eq:case1R2}), we get
\begin{align}
	R_1 &\geq \lrpar{ \beta - 1 - 3\epsilon} \tfrac{1}{2}\log P + o(\log \bP) = \lrpar{ \beta - 1 } \tfrac{1}{2}\log P + o(\log \bP)\\
	R_2 &\geq \lrpar{ 1 - \epsilon} \tfrac{1}{2}\log P + o(\log \bP) =  \tfrac{1}{2}\log P + o(\log \bP).
\end{align}
We arrive at $d_1 = \lim_{P \rightarrow \infty} \frac{R_1}{\frac{1}{2}\log P} = \beta-1$, and $d_2 = \lim_{P \rightarrow \infty} \frac{R_1}{\frac{1}{2}\log P} = 1$. 
Thus the secure GDoF tuple $(d_1, d_2) = (\beta-1, 1)$ is achievable with this scheme based on lattice alignment and aggregate decoding.

\subsection{The Achievability in Regime 2} \label{sec:gdof_case2}
The corner points of $\mathcal{D}_{\tiny {IC}}^{\tiny p}$ in Regime 1 are $(d_1, d_2) = (\alpha,0)$ and $(\beta-1, 1+\alpha-\beta)$. Following the same reason for the corner points of Regime 1, it remains to show $(\beta-1, 1+\alpha-\beta)$ is achievable, which is done with lattice alignment and aggregate decoding as well.

Let $Q \triangleq \lrfloor{ \sP{\alpha-1-\epsilon}}$ and $A \triangleq \sP{2\epsilon}$, where $\epsilon > 0$.
In the following we suppress the channel-use index $t$ for brevity. 
Define $X_1 = V_1 +J_1$ and $X_2 = V_2$, where
\begin{align}
	V_1 \in \Gamma_1 &\triangleq A \sP{-\beta} \times \lrbr{ 0, \pm 1, \pm 2, \cdots, \pm \lrpar{  \lrfloor{ \tfrac{1}{2} \sP{\alpha-1-2\epsilon}}- 1}  },\\
	J_1 \in \Gamma_J &\triangleq A \sP{-\beta} \times \lrbr{ 0, \pm Q, \pm 2Q, \cdots, \pm \lrfloor{ \sP{1-\alpha+\beta-\epsilon}}Q  },\\
	V_2 \in \Gamma_2 &\triangleq A \sP{-\alpha} \times \lrbr{ 0, \pm Q, \pm 2Q, \cdots, \pm \lrfloor{ \sP{1-\alpha+\beta-\epsilon}}Q  }.
\end{align}
Note that such a choice of $A, Q$ and the lattices $\Gamma_1, \Gamma_J$ and $\Gamma_2$ satisfies the unit input power constraint.

Let $V_1, J_1$ and $V_2$ be independent and uniformly distributed in their respective lattices. 
Message $W_1$ and $W_2$ are respectively encoded into $V_1$ and $V_2$ with wiretap codebooks of rate $R_1$ and $R_2$.
The following rates are achievable under the secrecy constraints \cite[Theorem 4]{Xie_Ulukus}:
\begin{align}
	R_1 &\geq I(Y_1; V_1), \label{eq:case2R1}\\
	R_2 &\geq I(Y_2; V_2) - I(Y_1; V_2 | V_1). \label{eq:case2R2}
\end{align}
To further bound these rates from below, we follow steps similar to (\ref{eq:case1R1-1}) -- (\ref{eq:case1R1-4}) in Appendix \ref{sec:gdof_case1}, and get a lower bound of $I(Y_1; V_1)$ as follows.
\begin{align}
	I(Y_1; V_1) &\geq \lrpar{ \alpha-1-2\epsilon} \tfrac{1}{2} \log P  \lrpar{1-\Pr[\hat{V}_1 \neq V_1]} -2. \label{eq:case2R1-4}
\end{align}
where $\hat{V}_1$ is defined as 
\begin{align}
	\hat{V}_1 = \arg \min_{V_1 \in \Gamma_1} \lrabs{ Y_1 - AQ \left[ \frac{Y_1}{AQ} \right] - \sP{\beta} V_1}. \label{eq:case2V1}
\end{align}
To get a lower bound of $I(Y_2; V_2)$ we follow steps identical to (\ref{eq:case1R2-11}) -- (\ref{eq:case1R2-14}) 
\begin{align}
	I(Y_2; V_2) &\geq \lrpar{1 - \alpha +\beta - \epsilon} \tfrac{1}{2} \log P  \lrpar{ 1 - \Pr[\hat{V}_2 \neq V_2]} - 2, \label{eq:case2R2-14}
\end{align}
where $\hat{V}_2$  is defined as 
\begin{align}
	\hat{V}_2 = \arg \min_{V_2 \in \Gamma_2} \lrabs{ Y_2 - \sP{} V_2}. \label{eq:case2V2}
\end{align}
And we can bound $I(Y_1; V_2 | V_1)$ from above by following steps similar to (\ref{eq:case1R2-21}) -- (\ref{eq:case1R2-25}) 
\begin{align}
	I(Y_1; V_2 | V_1) & \leq 1. \label{eq:case2R2-25}
\end{align}

With a similar reasoning to the one in Lemma \ref{lemma:case1prob}, one can show that for both $i = 1, 2$, $\Pr[\hat{V}_i \neq V_i] \rightarrow 0$ as $P\rightarrow \infty$.
Finally, by plugging (\ref{eq:case2R1-4}) into (\ref{eq:case2R1}), and by plugging (\ref{eq:case2R2-14}) and (\ref{eq:case2R2-25}) into (\ref{eq:case2R2}), we get
\begin{align}
	R_1 &\geq \lrpar{ \alpha-1-2\epsilon} \tfrac{1}{2}\log P + o(\log \bP) =  \lrpar{ \alpha-1}\tfrac{1}{2}\log P + o(\log \bP), \\
	R_2  &\geq \lrpar{1 - \alpha +\beta - \epsilon} \tfrac{1}{2}\log P  + o(\log \bP)  = \lrpar{1 - \alpha +\beta} \tfrac{1}{2}\log P + o(\log \bP).
\end{align}
By applying the definition of GDoF we get $d_1 = \lim_{P \rightarrow \infty} \frac{R_1}{\frac{1}{2}\log P} = \alpha-1$ and $d_2 = \lim_{P \rightarrow \infty} \frac{R_2}{\frac{1}{2}\log P} = 1- \alpha+\beta$.
Hence the GDoF tuple $(d_1, d_2) = (\alpha-1, 1-\alpha+\beta)$ is achievable with this scheme.

\subsection{Proof of Lemma \ref{lemma:case1prob} }\label{sec:case1prob}
Let event $\mathcal{E} \triangleq \lrbr{ Z_1 \middle | |Z_1| \geq \frac{A}{2}}$, and its complement denoted as $\mathcal{E}^c = \lrbr{ Z_1 \middle | |Z_1| < \frac{A}{2}}$.
Define $I_1 \triangleq \sP{\beta} V_{12} + Z_1$ and $I_2 \triangleq \sP{\beta} J_1 + \sP{\alpha} V_2 + I_1$.
Note that $Y_1 = \sP{\beta} V_{11} + I_2$ is the sum of a lattice point $\sP{\beta} V_{11}$ and an offset $I_2$. 
The lattice point is taken from the lattice $\sP{\beta}\times\Gamma_{11}$ with the minimum spacing $AQ$, while the offset, $I_2$, takes value from $\lrpar{ -\frac{AQ}{2}, \frac{AQ}{2}}$ when $\mathcal{E}^c$ happens.
So when $\mathcal{E}^c$ occurs, $V_{11}$ can be correctly decoded by (\ref{eq:case1V11}), and seeing that $Z_1 \sim \mathcal{N}(0,1)$, we have
\begin{align}
	\Pr[ \hat{V}_{11} \neq V_{11} ] \leq \Pr\{\mathcal{E}\} \leq 2 \exp\lrpar{ - \frac{1}{8} A^2}. \label{eq:case1pV11}
\end{align}

Next we  move on and argue that $V_{12}$ can be correctly decoded with (\ref{eq:case1V12}) when $V_{11}$ is correctly decoded and $\mathcal{E}^c$ occurs. 
Suppose $V_{11}$ is correctly decoded and removed from $Y_1$, resulting in the remaining $\tilde{Y}_1 = I_2 = \sP{\beta} J_1 + \sP{\alpha} V_2 + I_1$.
Note that $I_2$ is the sum of offset $I_1$ and a lattice point $\sP{\beta} J_1 + \sP{\alpha} V_2$, which is taken from lattice $\sP{\beta} \times \Gamma_J + \sP{\alpha} \times \Gamma_2$. \footnote{For two sets $\Gamma_1$ and $\Gamma_2$, define $\Gamma_1 + \Gamma_2 \triangleq \{a + b | a \in \Gamma_1, b \in \Gamma_2 \}$ as the sum set of $\Gamma_1$ and $\Gamma_2$.}
Such a lattice has the minimum spacing $AQ_J$. 
On the other hand, offset $I_1$ takes value from $\lrpar{ -\frac{AQ_J}{2}, \frac{AQ_J}{2}}$ when $\mathcal{E}^c$ happens.
As a result, when $\mathcal{E}^c$ occurs, $\tilde{Y}_1 - AQ_J \lrbkt{\frac{\tilde{Y}_1}{AQ_J}}  = I_1 = \sP{\beta} V_{12} + Z_1 $.
Note that, once again, $I_1$ is the sum a lattice point $\sP{\beta} V_{12}$, which is taken from lattice $\bP^\beta \times \Gamma_{12}$ with the minimum spacing $A$, and an offset $Z_1$, which is in $\lrpar{-\frac{A}{2}, \frac{A}{2}}$ if $\mathcal{E}^c$ happens. 
Therefore, $V_{12}$ can be correctly decoded by (\ref{eq:case1V12}) when $\mathcal{E}^c$ occurs and $V_{11}$ is correctly decoded, and 
\begin{align}
	Pr[ \hat{V}_{12} \neq V_{12} |\hat{V}_{11} = V_{11}  ] \leq \Pr\{\mathcal{E}\} \leq 2 \exp\lrpar{ - \frac{1}{8} A^2}. \label{eq:case1pV12}
\end{align}

Finally we can bound $\Pr[\hat{V}_{11} \neq V_{11} \text{ or } \hat{V}_{12} \neq V_{12}]$ as follows.
\begin{align}
	&\Pr[\hat{V}_{11} \neq V_{11} \text{ or } \hat{V}_{12} \neq V_{12}]\\
	&\leq \Pr[\hat{V}_{11} \neq V_{11}] + \Pr[\hat{V}_{12} \neq V_{12} ] \label{eq:case1p-1}\\
	&\leq \Pr[\hat{V}_{11} \neq V_{11}] + \Pr[\hat{V}_{12} \neq V_{12} | \hat{V}_{11} = V_{11}] \Pr[ \hat{V}_{11} = V_{11} ] \notag \\ 
	& \qquad + \Pr[\hat{V}_{12} \neq V_{12} | \hat{V}_{11} \neq V_{11}] \Pr[ \hat{V}_{11} \neq V_{11} ] \label{eq:case1p-2}\\
	& \leq  \Pr[\hat{V}_{11} \neq V_{11}] + \Pr[\hat{V}_{12} \neq V_{12} | \hat{V}_{11} = V_{11}] + \Pr[ \hat{V}_{11} \neq V_{11} ] \label{eq:case1p-3}\\
	&\leq 6 \exp\lrpar{-\frac{1}{8} A^2}, \label{eq:case1p-4}
\end{align}
where we use the union bound in (\ref{eq:case1p-1}), and the law of total probability in (\ref{eq:case1p-2}). 
Inequality (\ref{eq:case1p-4}) holds because of (\ref{eq:case1pV11})  and (\ref{eq:case1pV12}).
Since $A^2 = O(P^{2\epsilon})$ and $\epsilon > 0$, we have $\Pr[\hat{V}_{11} \neq V_{11} \text{ or } \hat{V}_{12} \neq V_{12}] \rightarrow 0$ as $P \rightarrow \infty$.

Note that $Y_2 = \sP{} V_2 + Z_2$ is the sum of a lattice point $\sP{} V_2$, which is taken from lattice $\sP{} \times \Gamma_2$ with the minimum spacing $A\sP{1-\alpha}Q_J$, and an offset $Z_2$, which is in $\lrpar{-\frac{A}{2}, \frac{A}{2}}$ if $\mathcal{E}^c$ happens.
So $V_{2}$ can be correctly decoded by (\ref{eq:case1V2}) when $\mathcal{E}$ occurs, and 
\begin{align}
	\Pr[ \hat{V}_2 \neq V_2] \leq \Pr\{\mathcal{E}\} \leq 2 \exp \lrpar{ -\tfrac{1}{8} A^2P^{1-\alpha} Q_J^2}.
\end{align}
Note that $A^2P^{1-\alpha} Q_J^2 = O(P^{\alpha-1+\epsilon })$ and $\alpha \geq 1$ in Regime 1, we have $\alpha-1+\epsilon > 0$, and $\Pr[ \hat{V}_2 \neq V_2] \rightarrow 0 $ as $P \rightarrow \infty$ as well.	
Here we conclude the proof.

\section{Proof of Theorem \ref{thm:gdofBC}} \label{sec:proofBC}

In this section, we provide the proof of Theorem \ref{thm:gdofBC}, which characterizes the SGDoF region of the ZBC with perfect and finite precision CSIT, respectively.

\subsection{The SGDoF Region with Perfect CSIT} \label{sec:proofBCPerfect}
\subsubsection{Converse}
To show the converse part, we cast the Gaussian channel model into the deterministic model defined in Section \ref{sec:det}. Lemma \ref{lemma:noloss} implies that the deterministic model incurs no loss in GDoF. 
To obtain the single-user bound for $d_1$, we apply Fano's inequality as follows.
\begin{align}
	nR_1 &\leq \Ig(\bbY_1; W_1) + \nologP \\
	&= \Ig(\bbY_1, (\bbY_1)^{ \min\{ (\beta-\alpha)^+, 1\} }; W_1) + \nologP   \label{eq:bc-d1-1} \\
	&= \Ig((\bbY_1)^{ \min\{ (\beta-\alpha)^+, 1\} }; W_1) + \Ig(\bbY_1; W_1 | (\bbY_1)^{ \min\{ (\beta-\alpha)^+, 1\} }) + \nologP  \label{eq:bc-d1-2}\\
%	&= \Ig((\bB)^{\min \{ (\beta-\alpha)^+, 1\}}; W_1) + \Ig(\bbY_1; W_1 | (\bbY_1)^{\min\{(\beta-\alpha)^+, 1\}}) \label{eq:bc-d1-3}\\
	&\leq \Ig((\bbY_2)^{\min \{ (\beta-\alpha)^+, 1\}}; W_1) + \Hg(\bbY_1 | (\bbY_1)^{\min\{(\beta-\alpha)^+, 1\}}) + \nologP  \label{eq:bc-d1-4} \\
	&\leq n \lrpar{ \max\{\alpha, \beta \} - \min\{ (\beta-\alpha)^+, 1\}  } \log \bP + \nologP \label{eq:bc-d1-5}\\
	&= n \max\{ \alpha, \beta - 1 \} \log \bP + \nologP, \label{eq:bc-d1-6}
\end{align}
where $\bbY_1$ and $\bB$ are defined in Section \ref{sec:det}. 
Equality (\ref{eq:bc-d1-1}) holds because $(\bbY_1)^{\min\{ (\beta-\alpha)^+, 1\} }$ is a function of $\bbY_1$.
Then we apply the chain rule to get (\ref{eq:bc-d1-2}).
Next we note that, since both $(\bbY_1)^{\min\{ (\beta-\alpha)^+, 1\} }$ and $(\bbY_2)^{\min\{ (\beta-\alpha)^+, 1\} }$ contain top $\min\{ (\beta-\alpha)^+, 1\} $ segment of $\bB$ only, the latter can be obtained with the former within bounded distortion with $\chg$ given.  
Applying this observation, and by the definition of mutual information, we get inequality (\ref{eq:bc-d1-4}).
The first term in (\ref{eq:bc-d1-4}) is $\nologP$ due to Lemma \ref{lemma:noloss}, and we apply the uniform bound to obtain (\ref{eq:bc-d1-5}). 
Equality (\ref{eq:bc-d1-6}) then follows.
Finally, we arrive at $d_1 = \lim_{P \rightarrow \infty} \frac{nR_1}{n\frac{1}{2}\log P} \leq \max\{\alpha, \beta-1\}$.

Next we show the single-user bound for $d_2$ as follows. Starting by Fano's inequality, we get
\begin{align}
	nR_2 &\leq I(\bbY_2; W_2) + \nologP  \\
	&= \Ig(\bbY_2, (\bbY_2)^{\min\{1, (\beta-\alpha)^+\}};W_2) + \nologP  \label{eq:bc-d2-1}\\
	&= \Ig((\bbY_2)^{ \min\{1, (\beta-\alpha)^+\} } ; W_2) + \Ig(\bbY_2 ; W_2 | (\bbY_2)^{  \min\{1, (\beta-\alpha)^+\} })+ \nologP  \label{eq:bc-d2-2}\\
	&\leq \Ig((\bbY_1)^{ \min\{1, (\beta-\alpha)^+\} } ; W_2) + \Hg(\bbY_2 | (\bbY_2)^{ \min\{1, (\beta-\alpha)^+\} } ) + \nologP  \label{eq:bc-d2-3} \\
	&\leq n \lrpar{1 - (\beta-\alpha)^+}^+ \log \bP + \nologP, \label{eq:bc-d2-4}
\end{align}
where $\bbY_2$ is defined in (\ref{eq:modeldet2}) in Section \ref{sec:det}. 
Equality (\ref{eq:bc-d2-1}) holds because $(\bbY_2)^{ \min\{1, (\beta-\alpha)^+\} }$ is a function of $\bbY_2$.
Then we apply the chain rule to get (\ref{eq:bc-d2-2}).
Note that $(\bbY_1)^{ \min\{1, (\beta-\alpha)^+\} }$ contains the top-$\min\{1, (\beta-\alpha)^+\}$ segment of codeword $\bB$, so it can be obtained with $(\bbY_2)^{ \min\{1, (\beta-\alpha)^+\} }$ and $\chg$ within bounded distortion. 
So we apply this observation, together with the definition of mutual information, to get (\ref{eq:bc-d2-3}).
Finally we arrive at (\ref{eq:bc-d2-4}) by applying Lemma \ref{lemma:noloss} and the secrecy constraint (\ref{eq:secrecy}) to the first term in (\ref{eq:bc-d2-3}), and the uniform bound to the second term. Thus the bound $d_2 = \lim_{P \rightarrow \infty} \frac{nR_2}{n\frac{1}{2}\log P} \leq \lrpar{1-(\beta-\alpha)^+}^+$.

\subsubsection{Achievability}
To show the achievability, we present two schemes respectively for the following two regimes: (a) Regime P1: $\beta-1 \leq \alpha$, and (b) Regime P2: $\alpha < \beta-1$. 
For Regime P1, it suffices to achieve the corner point $(d_1, d_2) = (\alpha, 1-(\beta-\alpha)^+)$.
It can be achieved by zero-forcing the cross link. More specifically, we define the input codeword $X_1(t)$ and $X_2(t)$ for $t \in [n]$ as
\begin{align}
	\begin{bmatrix}
		X_1(t)\\X_2(t)
	\end{bmatrix} &= c_1(t) 
	\begin{bmatrix}
		1 \\ 0
	\end{bmatrix} U_1(t) + c_2(t) 
	\begin{bmatrix}
		-G_{12}(t) \sqrt{P^\beta} \\ G_{11}(t) \sqrt{P^\alpha}
	\end{bmatrix}U_2(t),
\end{align}
where $U_1(t)$ and $U_2(t)$ are independent codewords encoded respectively from $W_1$ and $W_2$; $c_1(t) = \frac{1}{2}$ and 
\begin{align}
	c_2(t) = \frac{1}{\sqrt{2 \lrpar{|G_{12}(t)|^2 P^\beta + |G_{11}(t)|^2 P^\alpha  }}}
\end{align}
are chosen to satisfy the unit input power constraint. Such choice of $c_2(t)$ and the precoding vector is possible because of the perfect CSIT assumption.
Note that the vector for $U_2(t)$ is chosen such that it zero-forces $U_2(t)$ at Receiver 1. Now the receivers respectively see cross-link-free channel as follows.
\begin{align}
	Y_1(t) &= \frac{1}{2} G_{11}(t) \sqrt{P^\alpha} U_1(t) + Z_1(t), \label{eq:modelzf1}\\
	Y_2(t) &= \frac{G_{22}(t) \sqrt{P^{1+\alpha}}}{ \sqrt{ 2 \lrpar{|G_{12}(t)|^2 P^\beta + |G_{11}(t)|^2 P^\alpha}  } } U_2(t) + Z_2(t).\label{eq:modelzf2}
\end{align}  
Channel (\ref{eq:modelzf1}) allows GDoF $\alpha$ for $W_1$, and channel (\ref{eq:modelzf2}) allows $1+\alpha - \max\{\alpha, \beta\} = 1-(\beta-\alpha)^+$ for $W_2$. Note that the secrecy constraint (\ref{eq:secrecy}) is satisfied, because undesired signals are zero forced and codewords $U_1(t)$ and $U_2(t)$ are independent.

On the other hand, for Regime P2, it suffices to achieve $(d_1, d_2) = (\beta -1,0)$. This can be done by setting $X_1(t) = 0$ and $X_2(t) = \sqrt{P^{-1}} U_1(t)$, where $U_1(t)$ is encoded from $W_1$ with a wiretap codebook. With such a setting, the channel allows a GDoF $\beta-1$ for $W_1$ with the secrecy constraint (\ref{eq:secrecy}) satisfied in the mean time. Here we conclude the proof.

\subsection{The SGDoF Region with Finite Precision CSIT} \label{proofBCfp}
To show $\mathcal{D}_{\tiny {BC}}^{\tiny {f.p.}}$, we continue the definition of the channel regimes in Theorem \ref{thm:gdof}, and further divide Regime 4 into the following two sub-regimes: (a) Regime 4.1, satisfying $\beta \leq 1$ and $\beta \leq \alpha$; and (b) Regime 4.2, satisfying $\beta \leq 1$ and $\alpha < \beta $. 
It remains to present the proof for Regime 4.2, as the proof for the other regimes is implied from the previous results.

More specifically, for Regime 1 and 2, their proofs follow from the proof in Section \ref{sec:wsumBound12} for the corresponding regimes, which still holds when full transmitter cooperation is allowed.
The SGDoF region of Regime 3 is identical to $\mathcal{D}_{\tiny {BC}}^{\tiny {p}}$ of the same regime, and the a achievable scheme does not rely on the perfect CSIT assumption. So the proof in Appendix \ref{sec:proofBCPerfect} holds for finite precision CSIT.
Finally, the proof for Regime 4.1 follows from the results in \cite{Chan_Geng_Jafar_secureBC}.
As a result, only the SGDoF region of Regime 4.2, which is $\{(d_1, d_2) \in \mathbb{R}_+^2:~ d_1\leq \alpha, d_1+d_2 \leq 1+\alpha-\beta\}$, remains to be shown.

First let us consider the converse proof. The single-user bound $d_1 \leq \alpha$ follows from the proof in Appendix \ref{sec:proofBCPerfect} in the corresponding channel regime. To show the sum bound, $d_1 + d_2 \leq 1+\alpha-\beta$, we cast the Gaussian ZBC model into the deterministic model defined in Section \ref{sec:det}. Lemma \ref{lemma:noloss} implies that this incurs no GDoF loss. Next we apply Fano's inequality, and get
\begin{align}
	nR_1 + nR_2 &\leq \Ig(\bbY_1; W_1) + \Ig(\bbY_2; W_2) + \nologP \label{eq:BCfp-1}\\
	&\leq \Hg(\bbY_1) - \Hg(\bbY_1 | W_1) + \Hg(\bbY_2) - \Hg(\bbY_2 | W_2) + \nologP \label{eq:BCfp-2}\\
	&= \Hg(\bbY_1|W_2) - \Hg(\bbY_1 | W_1) + \Hg(\bbY_2|W_1) - \Hg(\bbY_2 | W_2) + \nologP \label{eq:BCfp-3}\\
	&\leq \max\{1 - \beta, -\alpha  \}^+ n\log P + \max\{ \beta - 1, \alpha \}^+ n\log P \nologP\label{eq:BCfp-4}\\
	&= (1 + \alpha -\beta) n\log P +\nologP, \label{eq:BCfp-5}
\end{align}
where $\bbY_1$ and $\bbY_2$ are defined respectively in (\ref{eq:modeldet1}) and (\ref{eq:modeldet2}).
We apply (\ref{eq:noloss-2}) and the secrecy constraint (\ref{eq:secrecy}) to obtain (\ref{eq:BCfp-3}). 
Inequality (\ref{eq:BCfp-4}) holds due to Lemma \ref{lemma:ais}.
Since $\beta \leq 1$ in this regime, we have (\ref{eq:BCfp-5}), and in the GDoF limit we obtain the sum bound $d_1 + d_2 = \lim_{P \rightarrow \infty} \frac{R_1 + R_2}{\frac{1}{2}\log P} \leq 1+\alpha -\beta$.

Finally, let us consider the achievability. Since the the SGDoF region of the ZBC in Regime 4.2 is  identical to that of the ZIC in the same regime, the same achievable schemes apply. Thus, we obtain the SGDoF region of the ZBC with finite precision CSIT and conclude the proof.

\section{Proof of Lemma \ref{lemma:const}} \label{sec:const}
%Without loss of generality, we assume random variables $T$, $U$, $G_{1}$ and $G_{2}$ are real and satisfy $T \in \mathcal{X}_{\lambda+\mu}$, $U \in \mathcal{X}_{\mu+\nu}$, and $|G_i| \in (\frac{1}{\Delta}, \Delta)$, where $i = 1,2$ and  $\Delta > 1$. 
%Similar arguments can be made when they are complex-valued. 
We assume $G_{1}$ and $G_{2}$ are real random variables with  $|G_{i}| \in (\frac{1}{\Delta}, \Delta)$ for $i = 1,2$.
For quick reference, we define $V = T \boxplus U$ and $Z = (T)^\lambda \boxplus (U)^\mu$, and summarize the definition of the top $\lambda$ sub-section of the random variables as follows:
\begin{align}
(T)^\lambda &= (T)^{\lambda+\nu}_{\nu} = \lrfloor{ \frac{ T - \bP^{\lambda+\nu} \lrfloor{\frac{T}{\bP^{\lambda+\nu}}} }{ \bP^{\nu} }} = \lrfloor{\frac{T}{\bP^{\nu}}}, \label{eq:constdef-2}\\
(U)^{\mu} &= (U)^{\mu+\nu}_{\nu} = \lrfloor{\frac{ U - \bP^{\mu+\nu} \lrfloor{\frac{U}{\bP^{\mu+\nu}} } }{ \bP^{\nu} }}  = \lrfloor{\frac{U}{\bP^{\nu}}}, \label{eq:constdef-3}\\
(V)^{\lambda} &= ( T \boxplus U)^{\lambda+\nu}_{\nu} =\lrfloor{  \frac{ V - \bP^{\lambda + \nu} \lrfloor{ \frac{V}{\bP^{\lambda+\nu} }     } }{ \bP^{\nu} }  }.  \label{eq:constdef-5}
%Z &= L^\chg( (A)^{\lambda}, (B)^{\lambda+\mu}) = \lrfloor{G_{11} (A)^\lambda} + \lrfloor{ G_{12} (B)^{\lambda+\mu} }  \label{eq:constdef-6}
\end{align}
Note that the last equality of (\ref{eq:constdef-2}) and (\ref{eq:constdef-3}) holds because $\lrfloor{ \frac{T}{\bP^{\lambda+\nu}}} = \lrfloor{ \frac{U}{\bP^{\mu+\nu}}} = 0$.

Next we simplify (\ref{eq:constdef-5}) in the way as is done to (\ref{eq:constdef-2}) and (\ref{eq:constdef-3}).
Define $\eta_T = G_{1}T - \lrfloor{G_1 T}$, and $\eta_U = G_{2}U - \lrfloor{G_2 U}$. Note that $\eta_T, \eta_U \in [0,1)$.
Let us first estimate the size of the support of $\lrfloor{ \frac{V}{\bP^{\lambda+\nu}} }$, which is a term appearing in the denominator of (\ref{eq:constdef-5}).
\begin{align}
\frac{V}{\bP^{\lambda+\nu}} 
&= G_{1} \frac{T}{\bP^{\lambda+\nu}} + G_{2} \frac{U}{\bP^{\lambda+\nu}} + \frac{\eta_T +\eta_U}{\bP^{\lambda+\nu}} \label{eq:const-1}\\
&= \tilde{\eta}_1 + \tilde{\eta}_2 + \tilde{\eta}_3,
\end{align}
where $\tilde{\eta}_i$ is the $i^\text{th}$ term in (\ref{eq:const-1}).
It is obvious that $\tilde{\eta}_1, \tilde{\eta}_2 \in [-\Delta, \Delta]$, and $\tilde{\eta}_3 \in [0,2]$. 
So $\lrfloor{ \frac{V}{\bP^{\lambda+\nu} } }$ is a random variable with support $\{-2\Delta, -2\Delta+1 \cdots, 0,1, \cdots, 2\Delta+2\}$.
Note that for real numbers $x, y$, we have $\lfloor x + y \rfloor =  \lfloor x \rfloor +  \lfloor y \rfloor + E$, where $E \in \{-1, 0,1\}$.  
With this observation, we can expand $(V)^{\lambda} $ defined in (\ref{eq:constdef-5}) further as follows.
\begin{align}
(V)^{\lambda} 
&=\lrfloor{\frac{V}{\bP^{\nu}} } + \underbrace{ \lrfloor{-\bP^{\lambda + \nu} \lrfloor{\frac{V}{\bP^{\lambda+\nu}}}  }  + E}_{\tilde{E}} \\
&= \lrfloor{\frac{V}{\bP^{\nu}} } + \tilde{E},
\end{align}
where $\tilde{E}$ is a random variable with support of size no greater than $3 ( 4\Delta + 3)$.

Finally we relate $Z$ to $(V)^\lambda$.
Define truncation terms $\delta_T = \frac{T}{\bP^{\nu}}  - (T)^\lambda$, $\delta_U = \frac{U}{\bP^{\nu}} - (U)^{\lambda}$, $\epsilon_T = G_{1}(T)^\lambda -\lrfloor{G_{1}(T)^\lambda}$, $\epsilon_U = G_{2}(U)^{\mu} - \lrfloor{G_{2}(U)^{\mu}}$, and $\epsilon = \frac{V}{\bP^{\nu}} - \lrfloor{ \frac{V}{\bP^{\nu}}}$, whose values are in $[0,1)$.
With these truncation terms, we relate $Z$ with $(V)^{\lambda}$ as follows.
\begin{align}
Z &= \lrfloor{ G_{1} (T)^\lambda } + \lrfloor{G_{2} (U)^\mu}\\
&= G_{1} (T)^\lambda + G_{2} (U)^\mu - (\epsilon_T + \epsilon_U)\\
&= G_{1} \frac{T}{\bP^{\nu}} + G_{2}\frac{U}{\bP^{\nu}} - (G_{1}\delta_T + G_{2} \delta_U + \epsilon_T + \epsilon_U)\\
&= \frac{1}{\bP^\nu}( \lrfloor{G_{1} T}  + \lrfloor{G_{2} U})  - \left (\frac{\eta_T+\eta_U}{\bP^{\nu}} + G_{1}\delta_T + G_{2} \delta_U + \epsilon_T + \epsilon_U \right)\\
&= \lrfloor{\frac{V}{\bP^{\nu}}}  + \epsilon -\left (\frac{\eta_T+\eta_U}{\bP^{\nu}} + G_{1}\delta_T + G_{2} \delta_U + \epsilon_T + \epsilon_U \right) \\
&= (V)^{\lambda}  - \tilde{E} - \underbrace{   \left (\frac{\eta_T+\eta_U}{\bP^{\nu}} + G_{1}\delta_T + G_{2} \delta_U + \epsilon_T + \epsilon_U -\epsilon \right)   }_{E'} \\
&= (V)^{\lambda} - \tilde{E} - E',
\end{align}
where $E'$ is a random variable taking an integer value from $[-2\Delta-1, 2\Delta+4]$ and therefore has a support of size at most $4\Delta+6$. 
As a result, $E_\Sigma = \tilde{E} + E'$ is a  random variable with a support of size at most $3(4\Delta+3)(4\Delta+6)$, which is a constant with respect to $P$.

In summary, one can evaluate $Z = (T)^\lambda \boxplus (U)^\lambda$ from $(V)^{\lambda}$ once $E_\Sigma$ is known, which is a discrete random variable with a support of constant size invariant of $P$.
By comparing the entropy of $Z$ and $(V)^{\lambda}$, we have $H(Z) - H(E_\Sigma) \leq H(Z|E_\Sigma) \leq H((V)^{\lambda}) \leq H(Z)+H(E_\Sigma)$, and therefore establish $H((T \boxplus U)^{\lambda}) = H( (T)^\lambda \boxplus (U)^\lambda )+ O(1)$.

\bibliographystyle{IEEEtran}
%\bibliography{zch_ref}
% Generated by IEEEtran.bst, version: 1.14 (2015/08/26)

\end{document}